\newtheorem{lemma}{Lemma} 
\newtheorem{proposition}{Proposition} 
\newtheorem{theorem}{Theorem} 
\newtheorem{remark}{Remark}
\newtheorem{definition}{Definition}
\DeclareMathOperator{\minmod}{minmod}
\title{Moment Methods for the 3D Radiative Transfer Equation Based on $\varphi$-Divergences}
\author{M.R.A.~Abdelmalik\footnote{Department of Mechanical Engineering, Eindhoven University of Technology, Groene Loper 3, 5612 AE Eindhoven, Netherlands}, 
Z.~Cai\thanks{Department of Mathematics, National University of Singapore, 10 Lower Kent Ridge Road, Singapore 119076.}~~and
T.~Pichard\footnote{CMAP, CNRS, École polytechnique, Institut polytechnique de Paris, 91120, Palaiseau, France}}
\date{\today}
\begin{document}

\maketitle

\begin{abstract}
    The method of moments is widely used for the reduction of kinetic equations into fluid models. It consists in extracting the moments of the kinetic equation with respect to a velocity variable, but the resulting system is a priori underdetermined and requires a closure relation. In this paper, we adapt the $\varphi$-divergence based closure, recently developed for rarefied gases i.e. with a velocity variable describing $\mathbb{R}^d$, to the radiative transfer equation where velocity describes the unit sphere $\mathbb{S}^2$. This closure is analyzed and a numerical method to compute it is provided. Eventually, it provides the main desirable properties to the resulting system of moments: Similarily to the entropy minimizing closure ($M_N$), it dissipates an entropy and captures exactly the equilibrium distribution. However, contrarily to $M_N$, it remains computationnally tractable even at high order and it relies on an exact quadrature formula which preserves exactly symmetry properties, i.e. it does not trigger ray effects. The purely anisotropic regimes (beams) are not captured exactly but they can be approached as close as desired and the closures remains again tractable in this limit.  
\end{abstract}

\textit{Keywords:} Radiative transfer equation, Method of moments, $\varphi$-divergence

\section{Introduction}
\label{sec:intro}
This paper aims at constructing and analysing a new moment closure relation on the sphere. It is used for solving the so-called radiative transfer equation (RTE) that is a kinetic equation on the radiative intensity commonly used to describe the propagation of radiations through a medium. This equation corresponds to a transport in all direction but at a fixed velocity norm, i.e. the speed of light. It is supplemented with an absorption-emission term that is linear contrarily to Boltzmann collision term. Recent numerical developments for solving transport equation with fixed velocity includes applications for instance in astrophysics (see e.g.~\cite{astro1,astro2}), radiation therapy (see e.g.~\cite{pichard_thesis,kuepper_thesis,page_thesis}), hot plasma physics (see e.g.~\cite{duclous_thesis,dubroca_feugeas}) or neutron transport. 

The techniques that aim at solving directly such a kinetic equation are either computationnaly expensive, this is the case for the statistical methods (Direct Simulation Monte Carlo; DSMC) which converge slowly or with the deterministic direct methods (discrete ordinate methods; $S_N$) which require a large data storage, or they rely on approximation that are valid only in certain regimes, e.g. those based on ray modelling (pencil beam or ray tracing) or on diffusion theory. 

Due to the linearity, the method of moments is therefore particularly suitable for reducing the RTE. It corresponds to a Petrov-Galerkin approximation, i.e. the equation is integrated against polynomial basis functions and the unknown is approximated by a function of a linear combination of those basis functions. This technique is widely used on kinetic equations for applications in rarefied gases.    
Among the most popular of these moment technique, we may list 
\begin{itemize}
\item The discrete ordinate $S_N$ methods can be interpreted as a moment method using as an approximation function a sum of Diracs at fixed locations. In order to obtain a desired accuracy in all regimes, one generally requires a very large number of locations, i.e. a very large number of degree of freedom. Furthermore, this technique does not preserve potential symmetries in the solution (typically the isotropic solution is not captured) and create artifacts called ray effect.
\item The polynomial approximation $P_N$ consists in choosing the approximation function in the span of the basis function, this is a common Galerkin method. This technique leads to a linear system of moment equations and it is probably the simplest moment method to construct for this application. Also, it does preserve the symmetry in the solution (and the isotropic regime), but it approximate poorly the purely anisotropic regimes, i.e. the beams, which are commonly used in many applications. The only alternative to obtain a decent approximation in this regime consists in using a very large number of moments again, destroying the gain in computational costs.   
\item The entropy-based moment methods $M_N$ (\cite{Minerbo,dubroca_feugeas}) consists in choosing, among the admissible solutions, the one that minimizes a certain entropy. This construction is analogue to the one used in rarefied gases (\cite{levermore,junk,schneider,hauck_lev_tits}) and offers several desirable mathematical properties to the approximation, such as entropy dissipation and a symmetric hyperbolic structure. However, its use requires solving a very large number of optimization problems which do not have an analytical solution. Even if the exact $M_N$ approximation does preserve symmetry in the solution, computing it numerically generally requires an approximation (see typically \cite{Hauck,Hauck2,Hauck3} for efficient techniques) which eventually violates symmetry. Such a closure can also be computed analytically at order 1 (\cite{dubroca_feugeas}) or can be approximated up to order 2 (\cite{levermore_eddington,pichard_M2,Li_B2,groth_M2}), but high order multi-D models remain inaccessible. 
\item Other alternatives were developed recently and show promising results, but they mainly are 1D methods that can hardly be extended to multi-D problems (\cite{Monreal_thesis,Schneider_KN,PiN}).  
\end{itemize}
    
Among the properties that are looked for when constructing numerical methods, we can list low computational costs, the capture of purely anisotropic and isotropic regime, and especially exact symmetry preservation, and entropy dissipation. To the author's knowledge, no technique possesses all of these properties yet. 
In this direction, we design in this paper a technique based on the $\varphi$-divergence method of moments (\cite{Abdelmalik_thesis,abdelmalik}) that dissipates a given entropy, requires a fairly low computational cost to solve, preserves exactly the symmetry in the solution. It captures exactly the purely isotropic and it approximates correctly purely anisotropic regimes. 

In the next section, we recall some properties of the RTE and its solution. In the following, the method of moment is formulated in a new framework and the a moment closure is constructed. Section~\ref{sec:num_meth} presents the numerical method to compute this closure and to solve the moment system. It is followed by some numerical examples. The last section gathers concluding comments.

\section{Radiative transfer equation}
\label{sec:RTE}
\begin{subequations}
  \label{eq:RTE_pb}
  In this section, we recall some properties of the radiative transfer equation (RTE) and its solution from the litterature. Those properties are also studied in the next section after the moment extraction. The RTE yields
  \begin{equation} 
    \partial_t I + \Omega \cdot \nabla_x I = LI := \sigma\left(\frac{1}{4\pi}\int_{\mathbb{S}^2} I \,\mathrm{d}\Omega - I\right) 
    \label{eq:RTE}
  \end{equation}
  where $\Omega\in \mathbb{S}^2$, $x\in \mathbb{R}^3$ and $t\in ]0,T[$. The unknown $I$ is the radiative intensity which corresponds to an energy distribution function in the phase space $\mathbb{R}^3 \times \mathbb{S}^2$.
The space-time are non-dimensionalized in such a manner that the speed of propagation is unity, while it is generally set to $c$ the celerity of light for physical interpretations. 
      
  Since we consider only interactions of the radiations with the background, the source term is linear, and is chosen to be a linear Boltzmann operator with a cross-section $\sigma > 0$ assumed constant for simplicity. This equivals to a relaxation toward the isotropic distributions. 
  
  The system~\eqref{eq:RTE} is supplemented with an initial 
  \begin{gather}
    I(t=0) = I_0. \label{eq:RTE_IC}
  \end{gather}
 The boundary value problem at the kinetic level is well-documented (see e.g.~\cite{Dautray-Lions}), but its extension at the moment level remains an open problem, and this study is therefore postponed to future work. 
\end{subequations}

One first observes that the problem~\eqref{eq:RTE} that we want to solve is linear. Then we recall the following result which can be found e.g.~\cite{Dautray-Lions}.
\begin{proposition} \label{prop:WP_kin}
  Suppose that $\tau>0$ and that the initial and boundary conditions 
  \[ I_0 \in L^p(\mathbb{R}^3\times \mathbb{S}^2) 
  \quad\text{ for } \quad 1 \le p < \infty\]
  are non-negative. Then there exists a unique $I \in C([0,T]; \ L^p(\mathbb{R}^3 \times \mathbb{S}^2))$ satisfying~\eqref{eq:RTE_pb} in a weak sense. Furthermore, this solution is non-negative.
\end{proposition}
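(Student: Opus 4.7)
My plan is to treat the RTE as an abstract evolution equation on $X = L^p(\mathbb{R}^3 \times \mathbb{S}^2)$ and invoke semigroup theory, since the operator on the right of~\eqref{eq:RTE} splits naturally into an unbounded transport part and a bounded collisional perturbation. Specifically, I would write the generator as $A = A_0 + B$, where $A_0 I = -\Omega \cdot \nabla_x I$ is the free-streaming operator with domain $\{I \in X : \Omega\cdot\nabla_x I \in X\}$, and $BI = \sigma\bigl(\tfrac{1}{4\pi}\int_{\mathbb{S}^2} I \,\mathrm{d}\Omega - I\bigr)$ is the linear scattering term.

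The first step is to recall that $A_0$ generates the $C_0$ semigroup of shifts along characteristics, $(e^{tA_0}I)(x,\Omega) = I(x-t\Omega,\Omega)$, which is a contraction on every $L^p(\mathbb{R}^3\times\mathbb{S}^2)$ for $1\le p<\infty$ since translations in $x$ are measure-preserving. The second step is to check that $B$ is a bounded linear operator on $X$: the averaging piece $I\mapsto \tfrac{1}{4\pi}\int_{\mathbb{S}^2} I\,\mathrm{d}\Omega$ has operator norm at most $1$ on $L^p$ by Jensen's inequality applied to the normalized measure on $\mathbb{S}^2$, so $\|B\|_{X\to X}\le 2\sigma$. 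By the bounded-perturbation theorem (Phillips' theorem) $A_0 + B$ then generates a $C_0$ semigroup on $X$, which yields existence and uniqueness in $C([0,T];X)$ of the mild/weak solution for any $I_0\in X$.

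For the non-negativity claim, I would rewrite the equation using the Duhamel-type integral representation
\[
I(t,x,\Omega) = e^{-\sigma t} I_0(x-t\Omega,\Omega) + \frac{\sigma}{4\pi}\int_0^t e^{-\sigma(t-s)} \int_{\mathbb{S}^2} I(s,x-(t-s)\Omega,\Omega')\,\mathrm{d}\Omega'\,\mathrm{d}s,
\]
which is the fixed-point formulation of the equation $\partial_t I + \Omega\cdot\nabla_x I + \sigma I = \tfrac{\sigma}{4\pi}\int_{\mathbb{S}^2} I\,\mathrm{d}\Omega'$. Since the right-hand side is a positivity-preserving affine map whenever $I_0\ge 0$, Picard iteration starting from $I^{(0)}\equiv 0$ produces a monotone increasing sequence of non-negative functions; uniqueness of the fixed point (already established above) then forces the solution itself to be non-negative almost everywhere.

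The main obstacle, in my view, is justifying the weak-solution/domain issues for $p\ne 2$ and $p=\infty$-like endpoints: the transport semigroup is a contraction on $L^p$ for $1\le p<\infty$, but the characteristic formula must be interpreted carefully since $I_0$ is only an equivalence class and the trace on the set of characteristics must be defined through the abstract semigroup rather than pointwise. For this I would rely on the standard construction in~\cite{Dautray-Lions}, where the equivalence between the mild solution produced by the semigroup and a distributional/weak solution of~\eqref{eq:RTE_pb} is carried out in detail; the role of the proof above is essentially to explain why that general machinery applies verbatim to our $\sigma$-constant, whole-space setting and to record where non-negativity enters.
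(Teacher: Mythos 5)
Your proposal follows essentially the same route as the paper, which does not write out a proof but simply points to the semigroup-theoretic argument of \cite{Dautray-Lions}: free-streaming generator plus bounded scattering perturbation, with positivity obtained from the Duhamel representation. Your write-up correctly fills in the details for the whole-space, constant-$\sigma$ case (where the lifting for non-homogeneous boundary data mentioned in the paper is not even needed), so it is consistent with the paper's intended proof.
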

The proof suggested in~\cite{Dautray-Lions} relies on the semi-group theory with a lifting to include non-homogeneous boundary conditions. In the next section, we focus on a variational approach that is better suited for the method of moments. 

We specify properties satisfied by the solutions to~\eqref{eq:RTE} that are considered in the next section for the moment closure: 
\begin{itemize} 
\item \textbf{Energy conservation}: The radiative intensity corresponding to the 0-th order moment of $I$ is preserved in time 
\[ \partial_t \left(\int_{\mathbb S^2} I(\Omega) \,\mathrm{\Omega}\right) + \operatorname{div}_x  \left(\int_{\mathbb S^2} \Omega I(\Omega) \,\mathrm{d}\Omega\right) = 0. \]
This is the only collision invariant in the present case. 
\item \textbf{Entropy dissipation}: Considering any convex scalar function $\eta$, one formally computes
\begin{align}
\label{eq:entropy}
\partial_t \left( \int_{\mathbb{S}^2} \eta(I(\Omega)) \,\mathrm{d}\Omega \right) + \operatorname{div}_x \left( \int_{\mathbb{S}^2} \Omega \eta(I(\Omega)) \,\mathrm{d}\Omega\right) &= \frac{\sigma}{4\pi} \int_{\mathbb{S}^2\times\mathbb{S}^2} [\eta'(I(\Omega')) -  \eta'(I(\Omega))] I(\Omega) \,\mathrm{d}\Omega' \,\mathrm{d}\Omega,
\end{align}
where the right-hand-side is non-positive because, using the convexity of $\eta$, we have
\begin{align*}
\int_{\mathbb{S}^2\times\mathbb{S}^2} [\eta'(I(\Omega')) -  \eta'(I(\Omega))] I(\Omega) \,\mathrm{d}\Omega' \,\mathrm{d}\Omega &= -\int_{\mathbb{S}^2\times\mathbb{S}^2} [\eta'(I(\Omega')) -  \eta'(I(\Omega))] I(\Omega') \,\mathrm{d}\Omega' \,\mathrm{d}\Omega\\ 
&= -\frac{1}{2}\int_{\mathbb{S}^2\times\mathbb{S}^2} [\eta'(I(\Omega')) -  \eta'(I(\Omega))] (I(\Omega') - I(\Omega)) \,\mathrm{d}\Omega' \,\mathrm{d}\Omega \le 0.
\end{align*}

Furthermore, the right-hand-side is zero for all convex $\eta$ if and only if the distribution $I$ is isotropic, or equivalently if the right-hand-side of~\eqref{eq:RTE} is zero. In the following, we denote 
\[C = \left\{ I \,:\,I(\Omega) = \frac{1}{4\pi} \int_{\mathbb{S}^2} I(\Omega') \,\mathrm{d}\Omega' \right\},\]
the set of isotropic distributions. Especially, such solutions need to be captured or well-approximated by the numerical methods for~\eqref{eq:RTE_pb}. Other types of distributions are also of interest in many applications, typically the Dirac distributions $I(\Omega) = \delta(\Omega-\Omega_0)$ represent beams and are often considered for physical applications.  

\item \textbf{Rotational invariance}: Considering an orthogonal matrix $\mathcal O \in \mathbb{R}^3$ and a solution $I$ to~\eqref{eq:RTE}, then 
\[ (\partial_t I)(\mathcal O \Omega) = \partial_t ( I(\mathcal O \Omega)), \qquad (\Omega \cdot \nabla_x I)(\mathcal O \Omega) = (\mathcal O \Omega) \cdot \nabla_x (I(\mathcal O \Omega)), \qquad (LI)(\mathcal O \Omega) = L (I(\mathcal O \Omega)). \]
The translational invariance is debatable since one may fix Galilean (more commonly used) or Lorentzian (more physically relevant) invariance here (see e.g.~\cite{M1_relativist}). We do not focus on this issue and only consider rotational invariance in the following. 
\end{itemize}

To summarize, the list of the main properties of the RTE that we want to study at the moment level are: linearity of~\eqref{eq:RTE_pb}, well-posedness, positivity of the solution, energy conservation, entropy dissipation, isotropic and Dirac regime of the solution, rotational invariance. 

\section{Moment System Hierarchies}
In this section we derive a hierarchy of closed systems of moment equations from (\ref{eq:RTE}) subject to (\ref{eq:RTE_IC}). We base our derivation on a Galerkin approximation for the velocity variable of (\ref{eq:RTE}) and (\ref{eq:RTE_IC}) in re-normalized form. Such a re-normalization mapping is chosen to retain the entropy inequality (\ref{eq:entropy}) for some chosen entropy function $\eta(\cdot)$. In this work, we focus our attention on entropy functions that correspond to $\varphi-$divergences \cite{csiszar}, i.e.
\begin{equation}
    \eta(I) = \mathcal E(\Omega) \varphi\left(\frac{I(\Omega)}{\mathcal E(\Omega)}\right)
\end{equation}
where $\mathcal E(\Omega)$ is some known prior distribution independent of $t$ and $x$, and $\varphi(\cdot)$ is some convex function. Our choices of functions $\varphi(\cdot)$ are motivated by a sequence of corresponding moment closure models that bridge $P_N$ closures (see e.g.~\cite{pomraning_book,mihalas_book}) and $M_N$ closures (\cite{Minerbo,dubroca_feugeas}).

To derive the $\varphi-$divergence--based moment systems we first formulate (\ref{eq:RTE}) subject to (\ref{eq:RTE_IC}) weakly in Section~\ref{sec:weak} and proceed to derive the closed moment systems in Section~\ref{sec:phidiv}.

\subsection{Weak Formulation} \label{sec:weak}
To approximate weak-solutions of the radiative transfer equation, we consider the weak-form of (\ref{eq:RTE}) subject to (\ref{eq:RTE_IC}) in the $\Omega$ variable. Following Proposition~\ref{prop:WP_kin}, we seek $I \in D(L) = L^1(\mathbb{S}^2)$ satisfying for all scalar-valued test function $m$ from $V = L^\infty(\mathbb{S}^2)$
\begin{subequations}\label{eq:weak_form}
\begin{align}
\int_{\mathbb S^2} m(\Omega) \left[ 
\big( \partial_t + \Omega \cdot \nabla_{x} \big) I(t, x, \Omega) - L\big(I(t, x, \cdot)\big)(\Omega) 
\right] \, \text d \Omega &= 0,
\quad \forall t \in ]0,T[,\ \forall x \in \mathbb{R}^3,\label{eq:momrte} 
\end{align}
subject to
\begin{align}
\int_{\mathbb S^2} m(\Omega) \left[ 
 I(0,x,\Omega) - I_0(x,\Omega)
 \right] \, \text d \Omega &= 0,  
\quad \forall x \in \mathbb{R}^3. \label{eq:momic}
\end{align}
\end{subequations}
The system in (\ref{eq:weak_form}) may be conceived of as a system of equations, in the sense that carrying out the $\Omega$-integrals in (\ref{eq:weak_form}) yields a partial differential equation with corresponding initial-boundary value data for each $m \in V$. In the case we restrict the test functions to the basis functions of a finite-dimensional subspace $M$ of $V$. If $M$ is chosen as a polynomial space, then (\ref{eq:weak_form}) corresponds to a system (constituting a finite number) of partial differential moment equations. However, such a moment-system is not closed since $D(L)$ is infinite-dimensional. We aim to derive closed moment models that retain the salient properties of the (\ref{eq:RTE}), namely, rotational symmetry, conservation of energy, and entropy dissipation. The derivation of such a closed system of partial differential moment equations involves characterizing the test space $M$ and the specification of a moment-closure relation.

In order to derive conditions on $M$ we examine the weak formulation of the scattering operator,
\begin{equation}
q : \left\{\begin{array}{ll} D(L) \times V &\to \mathbb R, \\ (I, m) &\mapsto \displaystyle\int_{\mathbb S^2} m(\Omega) L(I)(\Omega) \, \text d \Omega. \end{array}\right.
\end{equation}
The conservation properties of $L$ amount to
\begin{equation}\label{eq:conservation-weak-collision}
q(I, m) = 0, \quad \forall I \in C,
\end{equation}
where $C$ is the set of isotropic functions over $\mathbb{S}^2$, and the rotational symmetry of $L$ is equivalent to
\begin{equation}\label{eq:galilean-invariance-weak-collision}
q(\mathcal T_{\mathcal O} I, \mathcal T_{\mathcal O} m) = q(I, m),
\end{equation}
for $I \in D(L)$, $m \in V$,
and where $(\mathcal T_{\mathcal O} f)(\Omega) = f({\mathcal O} \Omega)$ for orthogonal matrices $\mathcal O \in \mathbb R^{3 \times 3}$.

Restricting the test space to $M$ means that we restrict the domain of $q$ to $D(L) \times M$.
This gives rise to a restricted collision operator,
\begin{equation}
\pi L: \left\{ \begin{array}{ll} D(L) &\to M', \\ I &\mapsto \big(m\mapsto q(I, m) \big). \end{array}\right.
\end{equation}
Thus, the choice of $M'$ determines whether $\pi L$ inherits the salient properties of $L$.
First we note that for $\pi L$ to conserve the total energy, $M$ has to contain the space of all isotropic functions:
\begin{equation}\label{eq:col_invar}
C \subset M.
\end{equation}
Also, to retain rotational symmetry~\eqref{eq:galilean-invariance-weak-collision} it is necessary that any rotation by $\mathcal O \in \mathbb R^{3 \times 3}$ of a test function in $M$ is again an element of $M$,
\begin{equation}\label{eq:rot_invar}
\mathcal T_{\mathcal{O}} M \subset M.
\end{equation}
In the remainder of this work we consider spaces $M$ that are spanned by polynomials up to a certain degree $N \geq 1$. Such spaces satisfy conditions (\ref{eq:col_invar}) and (\ref{eq:rot_invar}). In the next section, we will consider the closure relation that preserves the dissipation of entropy.

\subsection{$\varphi-$divergence--based closure}\label{sec:phidiv}
To close the moment-system (\ref{eq:weak_form}) in such a way that the entropy dissipation property is retained, one may select a specific entropy function $\eta(\cdot)$ (which can be any smooth convex function in our case) and consider the renormalization mapping
\begin{equation}\label{eq:entmap}
I(t, x,\Omega) = \beta(g(t,x,\Omega)) := (\eta')^{-1}(g(t,x,\Omega)).
\end{equation}
The we can study the following Galerkin approximation of $g \in M$ in the moment-system (\ref{eq:weak_form}) \cite{levermore,dreyer1987}:
find $g \in M$ satisfying for all $m \in M$ 
\begin{subequations}\label{eq:ent_moments}
\begin{align}
\int_{\mathbb S^2} m(\Omega) \left[
\big( \partial_t + \Omega \cdot \nabla_{x} \big) \beta(g(t, x, \Omega)) - L\big(\beta(g(t, x, \cdot))\big)(\Omega) 
\right] \, \text d \Omega = 0, \quad \forall x \in \mathbb{R}^3;\label{eq:ent_mom_rte}
\end{align}
subject to
\begin{align}
\int_{\mathbb S^2} m(\Omega)\left[ 
\beta(g(0,x,\Omega)) - I_0(x,\Omega)
\right] \, \text d \Omega &= 0,  \quad \forall x \in \mathbb{R}^3.
\end{align}
\end{subequations}
Note that here we need $M \subset \beta^{-1}(D(L))$ in order that \eqref{eq:ent_mom_rte} is well defined.

In the remainder of this work we consider a sequence of $\varphi-$divergence based entropy functions (\cite{abdelmalik})
\begin{displaymath}\label{eq:phi_ent}
\eta_K(I) = \beta_K^{-1}(I) = K I \left( \frac{K}{K+1} I^{1/K} - 1\right), \quad K=1,3,5,\ldots
\end{displaymath}
associated with the corresponding sequence of renormalization maps
\begin{equation}
I = \beta_K(g) = \left( 1 + \frac{g}{K} \right)^K , \quad K=1,3,5,\ldots \label{eq:_def_betaK}
\end{equation}
For simplicity, given $N$ and $K$, we will name our new approach as the "$\beta_{N,K}$ model".

The convexity requirement, posed in (\ref{eq:entropy}), for such an entropy function follows from the positivity of its second derivative:
\[\eta_K''(I) = I^{\frac{1-K}{K}}\geq0 \quad \forall I \in \mathbb R.\]

To elucidate the relationship between the the $\varphi-$divergence based closures in (\ref{eq:_def_betaK}) corresponds to a Galerkin formulation of the $M_N$ closure (\cite{levermore,dubroca_feugeas}) which encompasses the $P_N$ closures (see e.g.~\cite{spectral_meth1,spectral_meth2}). We note that setting $K=1$ and approximating $g$ in a space of polynomials of degree $N$ leads to
\begin{equation}\label{eq:P}
\beta_1 (g) = 1+g
\end{equation}
which corresponds to $P_N$ closure and in the limit $K\rightarrow\infty$
\begin{equation}\label{eq:M}
\beta_{\infty} = \exp(g),
\end{equation}
which corresponds to $M_N$ with Boltzmann-Shannon entropy ($\eta(I) = I \log I - I$). All choices of $K$ guarantee the conservation of energy, rotational invariance and the dissipation of its own entropy $\eta_K$. However, they behave differently in the sense that the $M_N$ with Boltzmann entropy ($\beta_{N,\infty}$ method) can enforce the positivity of the intensity function, while the $P_N$ method ($\beta_{N,1}$ method) can be viewed as a polynomial spectral method on the sphere, may provide non-positive results. Nevertheless, due to its inherent linearity, it is easier to implement and more widely used since the moment inversion problem is trivial.

For intermediate choices of $1<K<\infty$, we expect to get an ``intermediate" model in which the intensity function may have enforced positivity on a wider part of $\mathbb S^2$ than the $P_N$ representation and the moment inversion problem to be less stiff than the $M_N$ one. 

Choosing a basis $\boldsymbol{m}$ of $M$ and the renormalization map~\eqref{eq:_def_betaK}, system \eqref{eq:ent_moments} rewrites in a conservative hyperbolic form 
\begin{subequations}
\label{eq:betaK_system}
    \begin{equation}
        \partial_t \boldsymbol{\rho} + \operatorname{div}_x \boldsymbol{F}(\boldsymbol{\rho}) = \boldsymbol{L} (\boldsymbol{\rho})
    \end{equation}
    where $\boldsymbol{\rho}$, $\boldsymbol{F}(\boldsymbol{\rho})$ and $\boldsymbol{L} (\boldsymbol{\rho})$ are related through a distribution of the form~\eqref{eq:_def_betaK} by the formula
    \begin{align}
        \label{eq:mom_inv}
        \boldsymbol{\rho} &= \int_{\mathbb{S}^2} \boldsymbol{m}(\Omega) \left(1 + \frac{\boldsymbol{\lambda}(\boldsymbol{\rho}) \cdot \boldsymbol{m}(\Omega)}{K} \right)^K \,\mathrm{d}\Omega, \\
        \boldsymbol{F}(\boldsymbol{\rho}) &= \int_{\mathbb{S}^2} \Omega \boldsymbol{m}(\Omega) \left(1 + \frac{\boldsymbol{\lambda}(\boldsymbol{\rho}) \cdot \boldsymbol{m}(\Omega)}{K} \right)^K \,\mathrm{d}\Omega, \label{eq:flux_betaK}\\
        \boldsymbol{L}(\boldsymbol{\rho}) &= \sigma \left( \frac{\rho_0}{4\pi} \int_{\mathbb{S}^2} \boldsymbol{m}(\Omega) d\Omega - \boldsymbol{\rho}\right),
    \end{align}
    and $\rho_0$ is the zeroth moment of the intensity function. In the definition of the flux function, the operator $\boldsymbol{\lambda}(\boldsymbol{\rho})$ can be regarded as the ``moment inversion operator'', which is implicitly defined by \eqref{eq:mom_inv}. In the next section, we will show that this operator is well defined.
\end{subequations}

\begin{remark}
We remark that the renormalization mappings in (\ref{eq:_def_betaK}) can alternatively be conceived of as an optimizer for the following optimization problem:
\begin{displaymath}
\begin{aligned}
& \operatorname*{argmin}_{I} \eta(I) \\
& \text{subject to } \langle \boldsymbol{m} I \rangle = \boldsymbol{\rho}.
\end{aligned}
\end{displaymath}
The solution to this optimization problem is chosen as $I^*$, which is fully determined by $\boldsymbol{\rho}$, and thus leads to the moment closure. When $\eta$ is convex, the optimal $I$ that minimizes $\eta$ as the form
\begin{displaymath}
I^*(\boldsymbol{\rho}; \Omega) = \beta(\boldsymbol{\lambda}(\boldsymbol{\rho}) \cdot \boldsymbol{m}),  
\end{displaymath}
where $\beta(\cdot)$ is the inverse function of $\eta'(\cdot)$, and $\boldsymbol{\lambda}$ is the Lagrange multiplier to be determined by the moment constraints:
\begin{equation} \label{eq:moment_constraints}
\int_{\mathbb{S}^2} \boldsymbol{m} \beta(\boldsymbol{\lambda}(\boldsymbol{\rho}) \cdot \boldsymbol{m}) \,\mathrm{d}\Omega = \boldsymbol{\rho}.
\end{equation}
\end{remark}
In the remainder of this section we establish the well-posedness properties of the moment system $\varphi-$divergence-based moment systems (\ref{eq:ent_moments}).

\subsection{Well-posedness of the moment inversion problem}
In this subsection, we show that the problem~\eqref{eq:mom_inv} defines a diffeomorphism from the set of moment vectors of distributions of the form~\eqref{eq:_def_betaK}, also named the realizability domain, and the associated set of Lagrange multipliers. Especially, we show that those two sets are both $\mathbb{R}^r$ where $r = (N+1)^2$ is the number of the number of considered moments. 

Let $\{m_1(\Omega), \cdots, m_r(\Omega)\}$ be a basis of the polynomial space up to degree $N$, and define $\boldsymbol{m}(\Omega) = \left(m_1(\Omega), \cdots, m_r(\Omega) \right)^{\top}$ and
\begin{displaymath}
  \boldsymbol{M}(\boldsymbol{\lambda}) = \int_{\mathbb{S}^2} \boldsymbol{m}(\Omega) \left( 1 + \frac{\boldsymbol{\lambda} \cdot \boldsymbol{m}(\Omega)}{K} \right)^K \mathrm{d}\Omega.
\end{displaymath}
Here we always assume that $K$ is odd and positive. 
The moment inversion problem can be formulated as
\begin{displaymath}
\boldsymbol{M}(\boldsymbol{\lambda}) = \boldsymbol{\rho}
\end{displaymath}
for a given $\boldsymbol{\rho} \in \mathbb{R}^r$.

\begin{lemma} \label{lem:coercivity}
Let $\{\boldsymbol{\lambda}_k\} \subset \mathbb{R}^r$ be a sequence of vectors such that $\|\boldsymbol{\lambda}_k\| \rightarrow +\infty$. Then the vectors $\boldsymbol{\rho}_k = \boldsymbol{M}(\boldsymbol{\lambda}_k)$
satisfy $\|\boldsymbol{\rho}_k\| \rightarrow +\infty$.
\end{lemma}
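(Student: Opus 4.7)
The plan is to prove coercivity by a duality argument: estimate the pairing $\boldsymbol{\lambda}_k \cdot \boldsymbol{\rho}_k$ from below and then use Cauchy--Schwarz to transfer the blow-up to $\|\boldsymbol{\rho}_k\|$. Concretely, dotting the definition of $\boldsymbol{M}$ with $\boldsymbol{\lambda}_k$ gives
\begin{equation*}
\boldsymbol{\lambda}_k \cdot \boldsymbol{\rho}_k \;=\; \int_{\mathbb{S}^2} (\boldsymbol{\lambda}_k \cdot \boldsymbol{m}(\Omega))\,\Bigl(1 + \tfrac{\boldsymbol{\lambda}_k \cdot \boldsymbol{m}(\Omega)}{K}\Bigr)^K \mathrm{d}\Omega \;=\; \int_{\mathbb{S}^2} h\bigl(\boldsymbol{\lambda}_k \cdot \boldsymbol{m}(\Omega)\bigr)\,\mathrm{d}\Omega,
\end{equation*}
where $h(x) = x(1+x/K)^K$. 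Since $K$ is odd, the exponent $K+1$ is even, so $h(x) \to +\infty$ as $|x|\to\infty$ and an elementary critical-point computation shows that $h$ is bounded below by a finite constant $-C(K) = -(K/(K+1))^{K+1}$ on all of $\mathbb{R}$.

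Next, I would extract information from the direction of $\boldsymbol{\lambda}_k$. Write $t_k = \|\boldsymbol{\lambda}_k\|$ and $\boldsymbol{n}_k = \boldsymbol{\lambda}_k/t_k$; after passing to a subsequence, $\boldsymbol{n}_k \to \boldsymbol{n}$ with $\|\boldsymbol{n}\|=1$. Because $\{m_1,\dots,m_r\}$ is a linearly independent family of polynomials, the polynomial $p(\Omega) := \boldsymbol{n}\cdot\boldsymbol{m}(\Omega)$ is not identically zero, hence cannot vanish on a subset of $\mathbb{S}^2$ of positive surface measure. Therefore there exist $\delta > 0$ and a measurable set $A \subset \mathbb{S}^2$ with $|A|>0$ such that $|p(\Omega)| \geq \delta$ on $A$; by uniform convergence of $\boldsymbol{n}_k\cdot\boldsymbol{m}$ to $p$ on $\mathbb{S}^2$, we still have $|\boldsymbol{n}_k\cdot\boldsymbol{m}(\Omega)| \geq \delta/2$ on $A$ for all $k$ sufficiently large.

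Now I would split the integral over $A$ and its complement. On $A$, $|\boldsymbol{\lambda}_k\cdot\boldsymbol{m}| \geq t_k\delta/2 \to \infty$, so the asymptotic $h(x) \sim x^{K+1}/K^K$ (using $K+1$ even, hence positive) yields a pointwise lower bound $h(\boldsymbol{\lambda}_k\cdot\boldsymbol{m}(\Omega)) \geq c\, t_k^{K+1}$ on $A$ for some $c>0$ and all large $k$. On $\mathbb{S}^2\setminus A$, the integrand is bounded below by $-C(K)$. Combining,
\begin{equation*}
\boldsymbol{\lambda}_k \cdot \boldsymbol{\rho}_k \;\geq\; c\,|A|\,t_k^{K+1} - 4\pi\, C(K).
\end{equation*}
Cauchy--Schwarz gives $\boldsymbol{\lambda}_k\cdot\boldsymbol{\rho}_k \leq t_k \|\boldsymbol{\rho}_k\|$, so
\begin{equation*}
\|\boldsymbol{\rho}_k\| \;\geq\; c\,|A|\,t_k^{K} - \frac{4\pi C(K)}{t_k} \;\longrightarrow\; +\infty.
\end{equation*}
Since any subsequence of $(\boldsymbol{n}_k)$ admits a further convergent subsequence to which the same argument applies, $\|\boldsymbol{\rho}_k\|\to+\infty$ along the entire original sequence.

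The main obstacle is the geometric step: lower-bounding $|\boldsymbol{n}\cdot\boldsymbol{m}(\Omega)|$ on a set of positive measure. This is where one uses in an essential way that $\boldsymbol{m}$ is a basis of a finite-dimensional polynomial space and that polynomials cannot vanish on subsets of $\mathbb{S}^2$ of positive measure; the rest is a direct asymptotic analysis of the scalar function $h$, made transparent by the assumption that $K$ is odd (so $K+1$ is even and $h$ is coercive in both directions).
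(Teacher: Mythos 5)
Your proof is correct, but it follows a genuinely different route from the paper's. The paper avoids the sign problem in the pairing altogether by a small shift: instead of dotting $\boldsymbol{\rho}_k$ with $\boldsymbol{\lambda}_k$, it dots with the multiplier vector $\boldsymbol{\lambda}_k'$ representing the full polynomial $g_k' = 1 + K^{-1}\boldsymbol{\lambda}_k\cdot\boldsymbol{m}$ (possible because constants lie in $M$), so that the pairing becomes $\int_{\mathbb{S}^2}(g_k')^{K+1}\,\mathrm{d}\Omega = \|g_k'\|_{L^{K+1}}^{K+1}$, manifestly nonnegative since $K+1$ is even. It then closes the argument purely by equivalence of norms on the finite-dimensional polynomial space, $\|\boldsymbol{\lambda}_k'\| \le C_1\|g_k'\|_{L^2} \le C_1C_2\|g_k'\|_{L^{K+1}}$, which yields the quantitative bound $\|\boldsymbol{\rho}_k\| \gtrsim \|g_k'\|_{L^2}^{K}$ with constants depending only on $N$ and $K$, and needs no subsequence extraction. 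You instead pair with $\boldsymbol{\lambda}_k$ itself, accept that $h(x) = x(1+x/K)^K$ is not sign-definite, and compensate with its uniform lower bound $-(K/(K+1))^{K+1}$ together with a compactness argument on the directions $\boldsymbol{n}_k$ and the fact that the limiting polynomial $\boldsymbol{n}\cdot\boldsymbol{m}$ is bounded away from zero on a set of positive measure (for that step continuity and non-vanishing of $p$ already suffice; you do not actually need the stronger measure-zero property of polynomial zero sets). Your version is more elementary in that it bypasses the $L^2$--$L^{K+1}$ norm equivalence, at the price of the subsequence bookkeeping and of a lower-bound constant $c\,|A|$ that depends on the limiting direction, so the growth rate $t_k^{K}$ you obtain is not uniform over the sequence unless you add a further compactness argument over the unit sphere of directions; for the purely qualitative statement of the lemma this is immaterial, and both proofs are complete.
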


\begin{proof}
Let $g'_k(\Omega) = 1 + K^{-1} \boldsymbol{\lambda}_k \cdot \boldsymbol{m}_k(\Omega)$. It is clear that $\|g_k'\|_{L^2(\mathbb{S}^2)} \rightarrow +\infty$, and for every $k$, we can find a vector $\boldsymbol{\lambda}_k'$ such that $\boldsymbol{\lambda}_k' \cdot \boldsymbol{m}(\Omega) = g'_k(\Omega)$. Thus,
\begin{displaymath}
\boldsymbol{\lambda}_k' \cdot \boldsymbol{\rho}_k = \int_{\mathbb{S}^2} [ g'_k(\Omega)]^{K+1} \,\mathrm{d}\Omega = \|g_k'\|_{L^{K+1}(\mathbb{S}^2)}^{K+1},
\end{displaymath}
where we have used the fact that $K$ is odd. Since the linear space for all polynomials up to degree $N$ has a finite dimension, there exists constants $C_1$ and $C_2$ depending only on $N$ and $K$ such that
\begin{displaymath}
\|\boldsymbol{\lambda}_k'\| \leq C_1 \|\boldsymbol{\lambda}_k' \cdot \boldsymbol{m}\|_{L^2(\mathbb{S}^2)} = C_1 \|g_k'\|_{L^2(\mathbb{S}^2)}, \qquad \|g_k'\|_{L^2(\mathbb{S}^2)} \leq C_2 \|g_k'\|_{L^{K+1}(\mathbb{S}^2)}.
\end{displaymath}
Consequently,
\begin{displaymath}
\frac{\boldsymbol{\lambda}_k'}{\|\boldsymbol{\lambda}_k'\|} \cdot \boldsymbol{\rho}_k = \frac{\|g_k'\|_{L^{K+1}(\mathbb{S}^2)}^{K+1}}{\|\boldsymbol{\lambda}_k'\|} \geq \frac{\|g_k'\|_{L^2(\mathbb{S}^2)}^{K}}{C_1 C_2^{K+1}} \rightarrow +\infty.
\end{displaymath}
Meanwhile, we have
\begin{displaymath}
\frac{\boldsymbol{\lambda}_k'}{\|\boldsymbol{\lambda}_k'\|} \cdot \boldsymbol{\rho}_k \leq\|\boldsymbol{\rho}_k\|.
\end{displaymath}
Therefore, $\|\boldsymbol{\rho}_k\| \rightarrow +\infty$.
\end{proof}

\begin{proposition}
For any $\boldsymbol{\rho}^* \in \mathbb{R}^{r}$, there exists a unique vector $\boldsymbol{\lambda}^* \in \mathbb{R}^{r}$ such that
$\boldsymbol{\rho}^* = \boldsymbol{M}(\boldsymbol{\lambda^*})$.
\end{proposition}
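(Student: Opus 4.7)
The plan is to realize $\boldsymbol{M}$ as the gradient of a strictly convex, coercive potential on $\mathbb{R}^r$, so that both existence and uniqueness of $\boldsymbol{\lambda}^*$ follow from classical convex analysis. Concretely, I would introduce
\begin{equation*}
\Phi(\boldsymbol{\lambda}) = \frac{K}{K+1}\int_{\mathbb{S}^2}\left(1+\frac{\boldsymbol{\lambda}\cdot\boldsymbol{m}(\Omega)}{K}\right)^{K+1}\mathrm{d}\Omega,
\end{equation*}
which is $C^\infty$ and finite on $\mathbb{R}^r$ since $K+1$ is even. Differentiation under the integral sign yields $\nabla\Phi(\boldsymbol{\lambda}) = \boldsymbol{M}(\boldsymbol{\lambda})$ and
\begin{equation*}
\operatorname{Hess}\Phi(\boldsymbol{\lambda}) = \int_{\mathbb{S}^2}\boldsymbol{m}(\Omega)\boldsymbol{m}(\Omega)^{\top}\left(1+\frac{\boldsymbol{\lambda}\cdot\boldsymbol{m}(\Omega)}{K}\right)^{K-1}\mathrm{d}\Omega,
\end{equation*}
which is positive semidefinite because $K-1$ is even. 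Hence $\Phi$ is convex, and solving $\boldsymbol{M}(\boldsymbol{\lambda})=\boldsymbol{\rho}^*$ reduces to the unconstrained minimization of $\Psi(\boldsymbol{\lambda}) := \Phi(\boldsymbol{\lambda}) - \boldsymbol{\rho}^*\cdot\boldsymbol{\lambda}$.

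For existence, I would establish coercivity of $\Psi$ by recycling the norm-equivalence reasoning already deployed in Lemma~\ref{lem:coercivity}. Since $1+\boldsymbol{\lambda}\cdot\boldsymbol{m}/K$ lies in the finite-dimensional polynomial space $M$, the $L^{K+1}$ and $L^2$ norms on $M$ are equivalent, and $\|1+\boldsymbol{\lambda}\cdot\boldsymbol{m}/K\|_{L^2(\mathbb{S}^2)}$ grows linearly in $\|\boldsymbol{\lambda}\|$ up to an additive constant. Because $K+1$ is even, $\Phi(\boldsymbol{\lambda}) = \frac{K}{K+1}\|1+\boldsymbol{\lambda}\cdot\boldsymbol{m}/K\|_{L^{K+1}(\mathbb{S}^2)}^{K+1}$, so there exist constants $c>0,C\ge 0$ such that $\Phi(\boldsymbol{\lambda})\ge c\|\boldsymbol{\lambda}\|^{K+1}-C$. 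Since $K+1\ge 2$, this superlinear growth dominates the linear perturbation $-\boldsymbol{\rho}^*\cdot\boldsymbol{\lambda}$, so $\Psi(\boldsymbol{\lambda})\to+\infty$ as $\|\boldsymbol{\lambda}\|\to\infty$. Continuity and coercivity then yield a minimizer $\boldsymbol{\lambda}^*$ of $\Psi$, and the first-order condition $\nabla\Psi(\boldsymbol{\lambda}^*)=0$ reads $\boldsymbol{M}(\boldsymbol{\lambda}^*)=\boldsymbol{\rho}^*$.

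The main obstacle is uniqueness, because the Hessian can degenerate: at any $\boldsymbol{\lambda}$ for which $1+\boldsymbol{\lambda}\cdot\boldsymbol{m}(\Omega)/K$ vanishes identically on $\mathbb{S}^2$ (which happens for certain $\boldsymbol{\lambda}$ since the constant function $1$ belongs to $M$), the Hessian is the zero matrix, so strict convexity cannot be read off pointwise. I would instead use a line-segment argument. Suppose $\boldsymbol{\lambda}_1\neq\boldsymbol{\lambda}_2$ were both minimizers of $\Psi$; set $v:=\boldsymbol{\lambda}_1-\boldsymbol{\lambda}_2$ and $\boldsymbol{\lambda}(t):=t\boldsymbol{\lambda}_1+(1-t)\boldsymbol{\lambda}_2$. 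Convexity forces $\Psi$ to be affine on the segment, so the second derivative along it vanishes and therefore
\begin{equation*}
\int_{\mathbb{S}^2}\bigl(v\cdot\boldsymbol{m}(\Omega)\bigr)^2\left(1+\frac{\boldsymbol{\lambda}(t)\cdot\boldsymbol{m}(\Omega)}{K}\right)^{K-1}\mathrm{d}\Omega = 0,\qquad \forall t\in[0,1].
\end{equation*}
Since $\{m_i\}$ is a basis and $v\neq 0$, the polynomial $v\cdot\boldsymbol{m}$ is not identically zero on $\mathbb{S}^2$, hence is nonzero on an open dense subset. On that subset the continuous factor $(1+\boldsymbol{\lambda}(t)\cdot\boldsymbol{m}/K)^{K-1}$ must vanish for every $t$, and by continuity $\boldsymbol{\lambda}(t)\cdot\boldsymbol{m}(\Omega)\equiv -K$ on all of $\mathbb{S}^2$ for every $t\in[0,1]$. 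Subtracting this identity at two distinct values of $t$ gives $(t_1-t_2)\,v\cdot\boldsymbol{m}\equiv 0$, hence $v\cdot\boldsymbol{m}\equiv 0$, contradicting the basis property. This establishes uniqueness and completes the proof that $\boldsymbol{M}:\mathbb{R}^r\to\mathbb{R}^r$ is a bijection.
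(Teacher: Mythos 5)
Your proof is correct, but it follows a genuinely different route from the paper's. The paper argues topologically: it first proves injectivity of $\boldsymbol{M}$ via a mean-value/monotonicity argument, then invokes Brouwer's invariance of domain to conclude the range is open, and finally uses the coercivity of Lemma~\ref{lem:coercivity} to rule out boundary points of the range, so that the range must be all of $\mathbb{R}^r$. You instead exploit the variational structure: $\boldsymbol{M}=\nabla\Phi$ for an explicit convex potential (essentially the Legendre dual of the entropy functional appearing in the paper's Remark~1), so existence follows from minimizing the coercive functional $\Phi-\boldsymbol{\rho}^*\cdot\boldsymbol{\lambda}$ and uniqueness from strict convexity along segments. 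Your approach avoids the invariance-of-domain theorem entirely, at the cost of being tied to gradient maps, whereas the paper's argument would survive for non-gradient perturbations of $\boldsymbol{M}$. A point in your favor: you correctly identify that the Hessian degenerates (indeed vanishes, for $K\ge 3$) at the $\boldsymbol{\lambda}$ with $1+\boldsymbol{\lambda}\cdot\boldsymbol{m}/K\equiv 0$, and your line-segment argument handles this; the paper's uniqueness step asserts that the mean-value matrix is positive definite without addressing this degeneracy (and applies the vector-valued mean value theorem with a single $\xi$, which strictly speaking requires the integral form), so your treatment actually patches a small gap. Two minor remarks: for $K=1$ the factor $(1+\boldsymbol{\lambda}(t)\cdot\boldsymbol{m}/K)^{K-1}$ is identically $1$ and cannot vanish, so that case should be dispatched directly from $\int_{\mathbb{S}^2}(v\cdot\boldsymbol{m})^2\,\mathrm{d}\Omega=0$; and in the coercivity estimate you should note that $\Phi\ge 0$ everywhere so the bound $\Phi(\boldsymbol{\lambda})\ge c\|\boldsymbol{\lambda}\|^{K+1}-C$ only needs to be justified for large $\|\boldsymbol{\lambda}\|$, where $c''\|\boldsymbol{\lambda}\|-\sqrt{4\pi}>0$.
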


\begin{proof}
We first show the uniqueness. Suppose there are vectors $\boldsymbol{\lambda}_1^*$ and $\boldsymbol{\lambda}_2^*$ satisfying
\begin{displaymath}
\boldsymbol{\rho}^* = \int_{\mathbb{S}^2} \boldsymbol{m}(\Omega) \left( 1 + \frac{\boldsymbol{\lambda}^*_1 \cdot \boldsymbol{m}(\Omega)}{K} \right)^{K} \,\mathrm{d}\Omega, \qquad
\boldsymbol{\rho}^* = \int_{\mathbb{S}^2} \boldsymbol{m}(\Omega) \left( 1 + \frac{\boldsymbol{\lambda}^*_2 \cdot \boldsymbol{m}(\Omega)}{K} \right)^{K} \,\mathrm{d}\Omega.
\end{displaymath}
Taking the difference of these two equations and applying the mean value theorem yield
\begin{displaymath}
\left[ \int_{\mathbb{S}^2} \boldsymbol{m}(\Omega) [\boldsymbol{m}(\Omega)]^{\top} \left( 1 + \frac{[\xi \boldsymbol{\lambda}_1^* + (1-\xi) \boldsymbol{\lambda}_2^*] \cdot \boldsymbol{m}(\Omega)}{K} \right)^{K-1} \,\mathrm{d}\Omega \right] (\boldsymbol{\lambda}_2^* - \boldsymbol{\lambda}_1^*) = 0,
\end{displaymath}
where $\xi \in (0,1)$. Since $K$ is odd and the polynomials in $\boldsymbol{m}(\Omega)$ are linearly independent, the matrix inside the square brackets is symmetric positive definite. Therefore, $\boldsymbol{\lambda}_2^* - \boldsymbol{\lambda}_1^*$ is zero, showing the uniqueness of the solution.

We now show the existence. By the uniqueness proven in the previous paragraph, the map $\boldsymbol{M}: \mathbb{R}^r \rightarrow \mathbb{R}^r$ is injective. It is clear that $\boldsymbol{M}$ is a continuous map. Therefore, by the domain invariance theorem \cite{Brouwer1912}, the range of $\boldsymbol{M}$ (denoted by $\mathcal{R}$ hereafter) is an open set. If $\mathcal{R} \neq \mathbb{R}^r$, then we can find $\boldsymbol{\rho}_{\infty} \in \partial \mathcal{R}$ and a sequence $\{\boldsymbol{\rho}_k\} \subset \mathcal{R}$ such that $\boldsymbol{\rho}_k \rightarrow \boldsymbol{\rho}_{\infty}$. For every $k$, since $\boldsymbol{\rho}_k \in \mathcal{R}$, we can find $\boldsymbol{\lambda}_k \in \mathbb{R}^r$ such that $\boldsymbol{M}(\boldsymbol{\lambda}_k) = \boldsymbol{\rho}_k$. We now consider two cases:
\begin{enumerate}
\item If the sequence $\{\boldsymbol{\lambda}_k\}$ is bounded, it has a convergent subsequence $\{\boldsymbol{\lambda}_{k_j}\}$. Assume that its limit is $\boldsymbol{\lambda}_{\infty}$. Then by the continuity of $\boldsymbol{M}$, we have $\boldsymbol{M}(\boldsymbol{\lambda}_{\infty}) = \boldsymbol{\rho}_{\infty}$. This contradicts our assumption $\boldsymbol{\rho}_{\infty} \not\in \mathcal{R}$.
\item If the sequence $\{\boldsymbol{\lambda}_k\}$ is unbounded, it has a subsequence $\{\boldsymbol{\lambda}_{k_j}\}$ such that $\|\boldsymbol{\lambda}_{k_j}\| \rightarrow +\infty$. By Lemma~\ref{lem:coercivity}, we have $\|\boldsymbol{\rho}_{k_j}\| \rightarrow +\infty$. This contradicts our assumption $\boldsymbol{\rho}_k \rightarrow \boldsymbol{\rho}_{\infty}$.
\end{enumerate}
Therefore, the range of $\boldsymbol{\rho}$ must be $\mathcal{R}$, which completes the proof of existence.
\end{proof}

\subsection{Symmetric dissipative hyperbolicity}
Since the map between $\boldsymbol{\lambda}$ and $\boldsymbol{\rho}$ is invertible, the moment equations \eqref{eq:betaK_system} can also be formulated as equations of $\boldsymbol{\lambda}$. This can be directly observed from \eqref{eq:betaK_system}, and the results have the following form:
\begin{subequations}\label{eq:moments-galerkin}
\begin{equation}\label{eq:moments-galerkin-pde}
\bm A_0(\bm \lambda) \partial_t \bm \lambda + \bm A_i(\bm \lambda) \partial_{x_i} \bm \lambda = \bm s(\bm \lambda),
\end{equation}
where the matrices $\bm A_0, \dots, \bm A_3$ are given by
\begin{align}
\bm A_0(\bm \lambda) &= \int_{\mathbb S^2} \beta_K'(\bm \lambda \cdot \bm m(\Omega)) \bm m(\Omega) [\bm m(\Omega)]^\top \, \text d \Omega, \\
\bm A_i(\bm \lambda) &= \int_{\mathbb S^2} \Omega_i \beta_K'(\bm \lambda \cdot \bm m(\Omega)) \bm m(\Omega) [\bm m(\Omega)]^\top \, \text d \Omega, \quad i=1,2,3.
\end{align}
\end{subequations}
These equations are equivalent to \eqref{eq:betaK_system} for smooth solutions.

To demonstrate the well-posedness of the Cauchy initial value problem (\ref{eq:moments-galerkin}) in the sense of \cite{Kawashima}, we show that (\ref{eq:moments-galerkin}) conforms to the so called symmetric dissipative hyerpbolic systems defined as follows:
\begin{definition}
A system of $r \in \mathbb N$ first order partial differential equations
\begin{equation}\label{eq:def_symm_diss_hyp}
\partial_t \bm w (\bm u) + \partial_{x_i} \bm F_i(\bm w(\bm u)) = \bm B_0(\bm u) \partial_t \bm u + \bm B_i(\bm u) \partial_{x_i} \bm u = \bm c(\bm w(\bm u))
\end{equation}
posed for functions $u$ with open, convex codomain $\mathcal U \subseteq \mathbb R^r$ and
\begin{equation}
\mathcal M = \big\{ \bm \psi \in \mathbb R^r: \bm \psi \cdot \bm s(\bm w(\bm u)) = 0, \, \forall \bm u \in \mathcal U\big\}
\end{equation}
is called symmetric dissipative if
\begin{enumerate}
  \item $\bm B_0(\bm u)$ is symmetric positive definite;
  \item $\bm B_i(\bm u)$, $i=1,2,3$, are symmetric;
  \item $\bm s(\bm w(\bm u)) = 0$ if and only if $\bm u \in \mathcal M$; and
  \item the linearization $\bm \nabla \bm s(\bm w(\bm u))$ in $\bm u \in \mathcal M$ is symmetric and nonpositive definite, and its null space equals $\mathcal M$.
\end{enumerate}
\end{definition}

By showing that (\ref{eq:moments-galerkin}) conforms to symmetric dissipative hyperbolic systems we imply the linear well-posedness and, moreover, that under suitable conditions on the initial data, local-in-time existence of solutions can be established viz. the following theorem due to \cite{Kawashima}

\begin{theorem}
Suppose the balance laws (\ref{eq:def_symm_diss_hyp}) are symmetric dissipative hyperbolic. Let  $\bar{\bm w} \in \mathcal W$, where $\mathcal W\subset\mathbb R^r$ is open and convex, be a constant state such that $\bm s( \bar{\bm w} ) = 0$. If the initial data $\bm w_0(\bm x)$ satisfy $\bm{w}_0(x) - \bar{\bm w} \in H^s(\mathbb R^r)$ with an integer $s\geq [d/2] + 2$ and take values in a compact subset of $\mathcal W$, then there exists $T > 0$ such that the corresponding Cauchy problem for (\ref{eq:def_symm_diss_hyp}) has a unique solution $\bm w = \bm w(t,\bm x)$ satisfying $\bm w - \bar{\bm w} \in C([0,T];H^s(\mathbb{R}^r))$.
\end{theorem}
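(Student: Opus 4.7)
The plan is to reduce (\ref{eq:def_symm_diss_hyp}) to a standard quasilinear symmetric hyperbolic Cauchy problem and apply the Friedrichs--Kato--Majda existence theory. Conditions 1 and 2 in the definition of symmetric dissipative hyperbolicity already guarantee that, in the $\bm u$-variable, the principal part is symmetric hyperbolic with a uniformly positive-definite time coefficient; conditions 3 and 4 on dissipation are not needed for local-in-time existence and would only play a role if one wanted global-in-time results near $\bar{\bm w}$. I would first pick $\bar{\bm u}$ with $\bm w(\bar{\bm u}) = \bar{\bm w}$, write $\bm v = \bm u - \bar{\bm u}$, and fix a compact convex set $\mathcal K \subset \mathcal U$ containing a neighbourhood of the range of the initial data on which $\bm B_0$ is uniformly positive definite and all coefficients are smooth with bounded derivatives.

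The core construction would be a Picard iteration: given $\bm u^{(n)}$, define $\bm u^{(n+1)}$ as the solution of the linear symmetric hyperbolic system
\begin{equation*}
\bm B_0(\bm u^{(n)}) \partial_t \bm u^{(n+1)} + \bm B_i(\bm u^{(n)}) \partial_{x_i} \bm u^{(n+1)} = \bm c(\bm w(\bm u^{(n)})),
\end{equation*}
with $\bm u^{(n+1)}(0,\cdot) = \bm u_0$. Each iterate exists by Friedrichs' mollification argument, and the relevant $H^s$ bound is obtained by differentiating with $\partial^\alpha$ for $|\alpha| \leq s$, pairing with $\partial^\alpha \bm u^{(n+1)}$ in the $\bm B_0(\bm u^{(n)})$-weighted inner product, using the symmetry of the $\bm B_i$ to turn the top-order flux contribution into a pure divergence, and estimating the resulting commutators by Moser-type inequalities. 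The Sobolev threshold $s \geq [d/2] + 2$ is exactly what makes $H^s$ an algebra continuously embedded into $W^{1,\infty}$, which is what these estimates require.

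A Gronwall argument on $\|\bm u^{(n+1)} - \bar{\bm u}\|_{H^s}^2$ then produces a uniform-in-$n$ bound on a common interval $[0,T]$, provided $T$ is small enough relative to the initial data and the geometry of $\mathcal K$. Convergence of the iterates would be established at a lower-regularity level (typically $L^2$ or $H^{s-1}$) by subtracting two consecutive equations and applying the same linear energy estimate to their difference; interpolation with the uniform $H^s$ bound upgrades this to strong convergence in $C([0,T]; H^{s'})$ for $s' < s$ and weak-$\ast$ convergence in $L^\infty([0,T]; H^s)$. Upgrading to strong continuity $\bm w - \bar{\bm w} \in C([0,T]; H^s)$ is then a standard Bona--Smith argument, and uniqueness follows immediately from the same $L^2$ energy estimate applied to the difference of two solutions.

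The main obstacle I would expect is the usual quasilinear coupling: the $H^s$ estimate delivers control in norm, but to re-evaluate $\bm B_0(\bm u^{(n+1)})$ and $\bm B_i(\bm u^{(n+1)})$ for the next step one needs pointwise values to remain inside $\mathcal K$, and the commutator bounds require iterated use of the algebra property of $H^s$. Choosing $T$ small enough that Sobolev embedding guarantees $\bm u^{(n)}(t,x) \in \mathcal K$ uniformly in $n$ while simultaneously closing the Gronwall inequality is the delicate balancing; once this is set up, the source $\bm c(\bm w(\bm u))$ is harmless because it is smooth and vanishes at $\bar{\bm u}$, so it is locally Lipschitz on bounded sets of $H^s$, and the rest is bookkeeping.
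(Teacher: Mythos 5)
The paper does not prove this theorem at all: it is quoted verbatim as an external result due to Kawashima (and, for the local-in-time part, it is really the classical Friedrichs--Kato--Majda theory), so there is no internal proof to compare against. Your sketch supplies exactly that classical argument, and it is correct in its essentials: the Picard iteration on the frozen-coefficient linear symmetric hyperbolic systems, the $\bm B_0$-weighted $H^s$ energy estimates with Moser commutator bounds, convergence of the iterates in a lower norm combined with the uniform high-norm bound, the Bona--Smith upgrade to strong continuity, and uniqueness via the $L^2$ estimate on differences. Your observation that conditions 3 and 4 of the definition are irrelevant for local existence is also right --- the dissipative structure is what Kawashima exploits for \emph{global} existence near equilibrium, which is not what this statement asserts. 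The one place where your write-up is slightly too quick is the passage between the $\bm w$- and $\bm u$-variables: the hypotheses are placed on $\bm w_0 - \bar{\bm w} \in H^s$ with values in a compact subset of $\mathcal W$, so to launch your iteration in the $\bm u$-variable you need the change of variables $\bm u \mapsto \bm w(\bm u)$ to be a diffeomorphism between the relevant neighbourhoods (which follows from the positive definiteness of $\bm B_0 = \nabla_{\bm u}\bm w$) together with a Moser-type composition estimate to conclude $\bm u_0 - \bar{\bm u} \in H^s$ with values in a compact subset of $\mathcal U$; and at the end you must transfer the $C([0,T];H^s)$ regularity back to $\bm w - \bar{\bm w}$ by the same composition estimate. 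These are routine but should be stated, since the theorem is formulated entirely in terms of $\bm w$.
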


For moment equations derived from dissipative kinetic theories, the symmetric dissipativity is a natural structure that preserves fundamental properties such as convergence to equilibrium and entropy dissipation when the solution is near the manifold $\mathcal{M}$. Numerically, this indicates stable simulations for suitably constructed numerical schemes.  For our $\beta_{N,K}$ models, we set
\begin{equation}
\mathcal U = \mathbb R^r
\quad 
\text{and}
\quad
\mathcal M = \{ \bm \psi \in \mathbb R^r : \bm \psi \cdot \bm m \in C\}.
\end{equation}
We note that the of the terms appearing in the symmetric dissipative hyperbolicity definition conform to
\[
\bm w  = \bm \rho, \quad \bm u = \bm \lambda, \quad \bm c = \bm s
\]
\begin{proposition}
System~\eqref{eq:moments-galerkin-pde} is symmetric dissipative.
\end{proposition}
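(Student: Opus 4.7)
The strategy is to verify each of the four conditions in the symmetric dissipative definition, for the $\bm\lambda$-form of the system, with $\bm B_0 = \bm A_0(\bm\lambda)$, $\bm B_i = \bm A_i(\bm\lambda)$, and $\bm c = \bm s$. The symmetry of $\bm A_0$ and $\bm A_i$ is immediate from the rank-one factor $\bm m(\Omega) [\bm m(\Omega)]^\top$ in each integrand. For positive definiteness of $\bm A_0$, I will use that $K$ is odd, so $\beta_K'(g) = (1 + g/K)^{K-1}$ is nonnegative and vanishes only at the single value $g = -K$. Hence
\[
\bm v^\top \bm A_0(\bm\lambda) \bm v = \int_{\mathbb S^2} \beta_K'(\bm\lambda \cdot \bm m)(\bm v\cdot\bm m)^2\,\mathrm d\Omega \;\geq\; 0,
\]
and it vanishes only if the polynomial $\bm v\cdot\bm m$ vanishes on a set of positive measure on $\mathbb S^2$, hence identically, which forces $\bm v = 0$ by linear independence of $\bm m$ (setting aside the degenerate zero-distribution state $\bm\lambda\cdot\bm m\equiv -K$).

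Next, I will verify the characterization of the zero set of $\bm s$. If $\bm\lambda \in \mathcal M$, then $\bm\lambda \cdot \bm m$ is a constant function on $\mathbb S^2$, so $\beta_K(\bm\lambda \cdot \bm m)$ is isotropic and the formula for $\bm L$ immediately gives $\bm s = 0$. Conversely, if $\bm s(\bm\lambda) = 0$, then $\bm\rho(\bm\lambda)$ equals the moment vector of some isotropic distribution, and the uniqueness part of the preceding well-posedness proposition forces $\bm\lambda$ to be the Lagrange multiplier of that isotropic state, which lies in $\mathcal M$.

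The main step, and the main obstacle, is the analysis of the linearization $\nabla_{\bm\lambda}\bm s$ at an arbitrary $\bm\lambda \in \mathcal M$. Differentiating $\bm s = \sigma\bigl(\rho_0/(4\pi)\int \bm m\,\mathrm d\Omega - \bm\rho\bigr)$ in $\bm\lambda$ gives
\[
\nabla_{\bm\lambda}\bm s \;=\; \sigma\!\left(\frac{1}{4\pi}\bm e\,(\nabla_{\bm\lambda}\rho_0)^\top - \bm A_0(\bm\lambda)\right),\qquad \bm e := \int_{\mathbb S^2}\bm m\,\mathrm d\Omega,
\]
which is not manifestly symmetric in general. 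The crucial observation is that on $\mathcal M$, $\bm\lambda \cdot \bm m$ is a constant $c$, so $\beta_K'(\bm\lambda\cdot\bm m)$ reduces to the constant $b := (1+c/K)^{K-1}$ and can be pulled out of every integral. This collapses $\bm A_0$ to $bG$, with $G = \int\bm m \bm m^\top\mathrm d\Omega$, and collapses $\nabla_{\bm\lambda}\rho_0$ to $b\bm e$, yielding
\[
\nabla_{\bm\lambda}\bm s\big|_{\mathcal M} \;=\; \sigma b\!\left(\tfrac{1}{4\pi}\bm e\bm e^\top - G\right),
\]
which is manifestly symmetric. Applying the Cauchy--Schwarz inequality on $\mathbb S^2$ to $\bm v\cdot\bm m$ against the constant $1$ gives
\[
\bm v^\top\!\left(\tfrac{1}{4\pi}\bm e\bm e^\top - G\right)\bm v \;=\; \tfrac{1}{4\pi}\!\left(\int \bm v\cdot\bm m\,\mathrm d\Omega\right)^{\!2} - \int (\bm v\cdot\bm m)^2\,\mathrm d\Omega \;\leq\; 0,
\]
with equality iff $\bm v\cdot\bm m$ is constant on $\mathbb S^2$, i.e.\ iff $\bm v \in \mathcal M$. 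This simultaneously yields the nonpositive definiteness and identifies the null space as $\mathcal M$. Without the collapse-on-$\mathcal M$ observation the generic Jacobian would have no reason to be symmetric, and without the Cauchy--Schwarz equality case there would be no clean way to pin down the null space; these two ingredients together are what make the fourth condition work.
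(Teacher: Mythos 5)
Your proposal is correct, and Conditions 1--3 are verified exactly as in the paper (symmetry of the $\bm A$'s from the rank-one integrand, nonnegativity of $\beta_K'$ for odd $K$, and the uniqueness of the moment inversion to characterize the zero set of $\bm s$); your explicit aside about the degenerate state $\bm\lambda\cdot\bm m\equiv -K$ is a caveat the paper leaves implicit. The only genuine divergence is in Condition 4. The paper writes $\bm s(\bm\lambda)=\frac{\sigma}{4\pi}\int_{\mathbb S^2\times\mathbb S^2}(\bm m(\Omega')-\bm m(\Omega))\beta_K(\bm\lambda\cdot\bm m(\Omega))\,\mathrm d\Omega\,\mathrm d\Omega'$ and symmetrizes over $\Omega\leftrightarrow\Omega'$ to obtain $\bm\nabla\bm s=-\frac{\sigma}{8\pi}\int\int(\bm m(\Omega')-\bm m(\Omega))(\bm m(\Omega')-\bm m(\Omega))^\top\beta_K'(\bm\lambda\cdot\bm m(\Omega))\,\mathrm d\Omega\,\mathrm d\Omega'$, from which nonpositivity and the kernel are read off directly; you instead restrict to $\mathcal M$ first, pull out the constant $b=\beta_K'(c)$, and reduce the Jacobian to $\sigma b\bigl(\frac{1}{4\pi}\bm e\bm e^\top-G\bigr)$, concluding by Cauchy--Schwarz. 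These are the same quadratic form in two guises (the double-integral identity is precisely the expanded Cauchy--Schwarz defect), so neither buys more generality. Your version does have one real merit: it makes explicit that the $\Omega\leftrightarrow\Omega'$ symmetrization, and hence the symmetry of the Jacobian, is only valid where $\beta_K'(\bm\lambda\cdot\bm m)$ is constant, i.e.\ on $\mathcal M$ --- which is all that Condition 4 requires, but which the paper's displayed formula for $\bm\nabla\bm s$ presents as if it held for arbitrary $\bm\lambda$. The paper's version, in exchange, parallels the kinetic entropy-dissipation computation of Section~\ref{sec:RTE} and so exhibits the moment-level dissipation as the direct Galerkin analogue of the kinetic one.
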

\begin{proof}
One observes that $\mathcal U$ is indeed open and convex and that $\bm \psi \cdot \bm L(\bm \lambda) = 0$ for all $\bm \psi \in \mathcal M$.

To satisfy Conditions $1.$ and $2.$ we note that the symmetry of $\bm A_0$ and $\bm A_i$ is evident, and the positive definiteness of $\bm A_0$ follows from the fact that 
\begin{equation}
\beta_K'(g) = M \left( 1 + \frac{g}{K} \right)^{K - 1}
\end{equation}
is non-negative as $K$ is odd, so for all $\bm b \in \mathbb R^r$,
\begin{equation}
\bm b^\top \bm A_0(\bm \lambda) \bm b
= \int_{\mathbb S^2} \beta_K'(\bm \lambda \cdot \bm m)
\big( \bm b \cdot \bm m \big)^2 \, \text d  \Omega \geq 0
\end{equation}
and equality if and only if the integrand vanishes. 

Condition $3.$ rewrites $\boldsymbol{L} (\bm M(\bm \lambda)) = 0$ if and only if $\bm \lambda \cdot \bm m \in C$. By definition of $\boldsymbol{L}$ and by positivity of $\sigma$, then $\boldsymbol{L}(\bm \rho) = 0$ equivals to requiring that $\frac{\bm\rho}{\rho_0} = 4\pi \int_{\mathbb{S}^2} \bm m$. Then, by uniqueness of the moment inversion, this yields a unique representation of $\frac{\bm \rho}{\rho_0}$. Since the invariant space $C$ is composed only of isotropic functions, then this representation $\beta_K(\bm \lambda \cdot \bm m)$ satisfying $\bm \rho = \int \bm m \beta_K(\bm \lambda \cdot \bm m)$ is isotropic and therefore $\bm \lambda \cdot \bm m \in C$ and $\bm \lambda \in \mathcal M$.

For Condition $4.$, one computes \[ \bm s(\bm \lambda) = \frac{\sigma}{4\pi} \int_{\mathbb{S}^2\times\mathbb{S}^2} (\bm m (\Omega') - \bm m (\Omega)) \beta_K(\bm \lambda \cdot \bm m(\Omega)) \,\mathrm{d}\Omega \,\mathrm{d}\Omega'\]
which provides
\[ \bm \nabla \bm s(\bm \lambda) = -\frac{\sigma}{8\pi} 
 \int_{\mathbb{S}^2\times\mathbb{S}^2} (\bm m (\Omega') - \bm m (\Omega))(\bm m (\Omega') - \bm m (\Omega))^\top \beta_K'(\bm \lambda \cdot \bm m(\Omega)) \,\mathrm{d}\Omega \,\mathrm{d}\Omega', \]
 which is non-positive due to the non-positivity of $\beta_K'$ and its kernel coincides is the set of $\bm V$ such that \[ \int_{\mathbb{S}^2\times\mathbb{S}^2} \left( (\bm m (\Omega') - \bm m (\Omega)) \cdot \bm V \right)^2\beta_K'(\bm \lambda \cdot \bm m(\Omega)) d\Omega d\Omega' = 0. \] 
 Since $\beta_K'(\bm \lambda \cdot \bm m(\Omega))$ is strictly positive then this requires $(\bm m (\Omega') - \bm m (\Omega)) \cdot \bm V$ to be uniformly zero. This is only possible if $\bm m \cdot \bm V$ is isotropic and therefore for $\bm V\in C$. 
\end{proof}
\begin{proposition}
The characteristic speeds of~\eqref{eq:betaK_system} are bounded by 1.
\end{proposition}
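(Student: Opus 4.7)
The plan is to identify the characteristic speeds as generalized eigenvalues of the symmetric pencil $(\bm n \cdot \bm A, \bm A_0)$ for an arbitrary unit direction $\bm n \in \mathbb S^2$, and then to write each such eigenvalue as a Rayleigh-type quotient that can be bounded pointwise by $|\bm n \cdot \Omega| \leq 1$.

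First, I would recall that the characteristic speeds in direction $\bm n$ of the quasilinear form \eqref{eq:moments-galerkin-pde} are the real numbers $\mu$ for which there exists a nonzero $\bm b \in \mathbb R^r$ satisfying
\[
\Bigl( \sum_{i=1}^3 n_i \bm A_i(\bm \lambda)\Bigr) \bm b = \mu \, \bm A_0(\bm \lambda) \bm b.
\]
By the previous proposition $\bm A_0$ is symmetric positive definite and the $\bm A_i$ are symmetric, so this generalized eigenvalue problem has only real eigenvalues and a full basis of eigenvectors, and $\mu$ can be written as the Rayleigh-type quotient
\[
\mu = \frac{\bm b^\top \bigl(\sum_i n_i \bm A_i(\bm \lambda)\bigr) \bm b}{\bm b^\top \bm A_0(\bm \lambda) \bm b}
    = \frac{\displaystyle \int_{\mathbb S^2} (\bm n \cdot \Omega)\, \beta_K'(\bm \lambda \cdot \bm m(\Omega))\, \bigl( \bm b \cdot \bm m(\Omega)\bigr)^2 \, \mathrm d \Omega}{\displaystyle \int_{\mathbb S^2} \beta_K'(\bm \lambda \cdot \bm m(\Omega))\, \bigl( \bm b \cdot \bm m(\Omega)\bigr)^2 \, \mathrm d \Omega}.
\]

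Next, I would observe that for odd $K$ the exponent $K-1$ is even so $\beta_K'(g) = (1+g/K)^{K-1} \geq 0$, and that the denominator is strictly positive: $\beta_K'$ vanishes only on the measure-zero set where $\bm \lambda \cdot \bm m(\Omega) = -K$, while by linear independence of $\bm m$ the polynomial $\bm b \cdot \bm m$ has only a measure-zero zero set for any nonzero $\bm b$. Consequently the non-negative weight $w(\Omega) := \beta_K'(\bm \lambda \cdot \bm m(\Omega))\, (\bm b \cdot \bm m(\Omega))^2$ integrates to a strictly positive quantity, and $\mu$ is the $w$-weighted average of $\bm n \cdot \Omega$ on $\mathbb S^2$.

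Finally, since $\bm n$ and $\Omega$ are both unit vectors, $|\bm n \cdot \Omega| \leq 1$ pointwise, so
\[
|\mu| \leq \frac{\displaystyle \int_{\mathbb S^2} w(\Omega) \, \mathrm d \Omega}{\displaystyle \int_{\mathbb S^2} w(\Omega) \, \mathrm d \Omega} = 1,
\]
which yields the claim. There is no serious obstacle here: the only point requiring care is checking strict positivity of the denominator so that the Rayleigh quotient makes sense, which follows from the observations above on $\beta_K'$ and the linear independence of $\bm m$.
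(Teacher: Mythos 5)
Your proposal is correct and follows essentially the same route as the paper: both reduce to the generalized eigenvalue problem for the symmetric pencil $(\sum_i n_i \bm A_i, \bm A_0)$ and exploit $|\bm n \cdot \Omega| \leq 1$ together with the non-negativity of the weight $\beta_K'(\bm\lambda\cdot\bm m)(\bm b\cdot\bm m)^2$; your Rayleigh-quotient phrasing is just the quadratic-form version of the paper's observation that $\int_{\mathbb S^2}(\mu - \Omega\cdot\bm n)\,\bm m\bm m^\top\beta_K'\,\mathrm d\Omega$ is definite (hence nonsingular) whenever $|\mu|\geq 1$. The paper in fact concludes the slightly stronger strict bound $|\mu|<1$, which your argument also yields once you note that $|\bm n\cdot\Omega|<1$ almost everywhere and the weight is positive on a set of positive measure.
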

\begin{proof}
Using the equivalent form of the moment equations \eqref{eq:moments-galerkin}, the characteristic speeds of the moment system in the direction $n \in \mathbb{S}^2$ are solutions of the following generalized eigenvalue problem:
\begin{displaymath}
\lambda \boldsymbol{A}_0 \boldsymbol{v} = \sum_{i=1}^3 n_i \boldsymbol{A}_i.
\end{displaymath}
By the definitions of the matrices $\boldsymbol{A}_i$, this equation holds only when the following matrix is singular:
\begin{displaymath}
\int_{\mathbb{S}^2} (\lambda - \Omega \cdot n) \boldsymbol{m}(\Omega)[\boldsymbol{m}(\Omega)]^{\top}\left(1 + \frac{\boldsymbol{\lambda}(\boldsymbol{\rho}) \cdot \boldsymbol{m}(\Omega)}{K} \right)^{K-1} \,\mathrm{d}\Omega.
\end{displaymath}
For $\Omega \in \mathbb{S}^2$, it holds that $|\Omega \cdot n| < 1$ almost everywhere. Therefore, if $K$ is odd and $\lambda \geq 1$ ($\lambda \leq -1$), the matrix above is symmetric positive (negative) definite. Thus, we know that all the characteristic speeds of the moment equations lie in the open interval $(-1,1)$.
\end{proof}
This makes it convenient for us to choose suitable time steps in the numerical scheme. Remark that a more accurate computation of the characteristic speed is available through this proof.

\subsection{Properties of $\varphi-$divergence--based moment systems}
Coming back to the properties of interests for moment models, the $\beta_{N,K}$ closure satisfies: 
\begin{itemize}
\item \textbf{Well-defined}: The $\beta_{K}$ reconstruction is well-defined for all vectors $\bm \rho \in \mathcal R = \mathbb R^r$ through the polynomial moment inversion operator~\eqref{eq:_def_betaK}. Even though the original PDE~\eqref{eq:RTE} is linear, the present reconstruction corresponds to a non-linear approximation (except in the case $K=1$).   
\item \textbf{Realizability}: The positivity of the underlying distribution function $\beta_{K}$ is not enforced. This yields a realizability domain $\mathcal R = \mathbb{R}^r$, i.e. a set of $\bm \rho \in\mathbb{R}^r$ that possesses a representation of the form $\beta_K$. Forcing the positivity of this representation is known to reduce the realizability domain to a strict subset of $\mathbb R^m$ (see e.g.~\cite{Kershaw,Lasserre_book,Schmuedgen_book})and to yield a moment inversion problem that becomes singular along the boundary of this restricted realizability domain and ill-conditionned close to it. 
\item \textbf{Well-posedness initial-value problem}: The symmetric dissipative structure of the moment system provides the existence and uniqueness of a solution to the initial-value problem. The case with boundaries is not covered by this theory and its study is left for future work.  
\item \textbf{Convergence to equilibrium}: Similarly, the symmetric dissipative structure also provides the dissipation of the convex entropy $\eta_K$, which minimum coincides with the set of isotropic distribution $C$. 
\item \textbf{Characteristic speed}: The Jacobians $A_0$ and $A_i$ being defined as moments of polynomial functions, the characteristic speeds can be computed accurately and are all bounded by $1$. 
\item \textbf{Approximation of the physical regimes}: Since the set $C$ of isotropic functions is part of the approximation space, such functions are exactly captured by the $\beta_{N,K}$ closure. Considering the purely anisotropic regimes represented by Diracs in $\Omega$, those cannot be represented exactly by a representation of the form $\beta_K$ at fixed $K$, even in the limit $|\bm \lambda| \rightarrow +\infty$. However, the distance (in a certain sense defined in the next sections) of Dirac to the set of functions of this form can be controlled through parameter $K$. Especially, Dirac measures can be retreived in the limit $\lim_{K\rightarrow \infty} \beta_K$. 
\item \textbf{Conservation of energy and rotational invariance}: those also hold with the $\beta_{N,K}$ closure.  
\end{itemize}

Now, comparing this construction with state-of-th-art models: As $K$ increases from $1$ to $+\infty$, our $\beta_{N,K}$ models connect the classical $P_N$ and $M_N$ methods with a sequence of moment systems. When $K = 1$, we have $\beta_K(x) = 1+x$, and thus the ansatz of the intensity function is a polynomial of degree $N$, which coincides with the $P_N$ method. When $K \rightarrow +\infty$, the limit of $\beta_K(x)$ is $\exp(x)$, which agrees with the hypothesis in the $M_N$ method based on the Boltzmann entropy. The entire sequence of models shares many good properties of both $P_N$ and $M_N$ models, e.g. the conservation laws, rotational invariance, and the entropy dissipation. Another classical model, known as the discrete ordinates method, is not covered in this series, but the $\beta_{N,K}$ models may exhibit some behaviors similar to the $S_N$ model when the integrals in the moment inversion problem are computed inexactly with numerical integration. Below we will provide a brief comparison between our models and these classical methods. 

\begin{itemize}
\item \textbf{$\boldsymbol{\beta_{N,K}}$ vs $\boldsymbol{P_N}$:} Both models have an unbounded realizability domain due to their permission of negative parts in the intensity function, which is practically more convenient since there is no need to guarantee the realizability of moments during the computation. For positively realizable moments, due to the higher similarity between the $\beta_{N,K}$ model and the $M_N$ model, the $\beta_{N,K}$ model is more likely to generate an intensity function with a larger range of positive value over $\mathbb S^2$ compared to $P_N$. This can help reduce spurious oscillations when approximating singular intensity functions. However, compared with the $P_N$ method, the non-linearity of the $\beta_{N,K}$ models may lead to less accuracy when approximating smooth intensity functions. We will show such examples in Section \ref{sec:num}.
\item \textbf{$\boldsymbol{\beta_{N,K}}$ vs $\boldsymbol{M_N}$:} Compared with the $M_N$ model with the common choices of entropy (Boltzmann $\eta(I) = I\log I - I$ or Bose-Einstein $\eta(I) = (I+1)\log(I+1) - I\log I$), the $\beta_{N,K}$ models are considerably easier to implement, due to the possibility to compute exact values of the moments. As we will elaborate in the next section, the moment inversion problem will be solved by Newton's method, in which the integral
\begin{displaymath}
\int_{\mathbb{S}^2} \boldsymbol{m}(\Omega)\boldsymbol{m}(\Omega)^{\top} \beta'(\boldsymbol{\lambda} \cdot \boldsymbol{m}(\Omega)) \,\mathrm{d}\Omega
\end{displaymath}
needs to be calculated. In the $\beta_{N,K}$ model, the integrand is a polynomial, so that the exact integral can be obtained by numerical quadrature. However, in the $M_N$ model, the integrand involves an exponential or Planck function, which cannot be exactly integrated numerically. The advantage of the $M_N$ model is mainly theoretical: it uses a physical entropy function, guarantees the positivity, and is able to describe beams exactly. However, even with the theoretical possibility, capturing beams in the numerical scheme of the $M_N$ method, with high order $N$, is highly challenging since the intensity function cannot be expressed as the exponential of a bounded polynomial and the moment inversion problem turns singular.
\item \textbf{$\boldsymbol{\beta_{N,K}}$ vs $\boldsymbol{S_N}$:} The $S_N$ method, also known as the discrete ordinates method, approximates the intensity function by $N$ beams at fixed angles. Compared to our $\beta_{N,K}$ models, the $S_N$ method is easier to implement and can preserve positivity. However, the space of such intensity functions is not rotationally invariant, and therefore the $S_N$ models also fail to preserve rotational invariance, due to which the numerical solutions may exhibit ray effect when simulating isotropic radiative sources \cite{Garrett2013comparison}. The $\beta_{N,K}$ models are advantageous from this aspect. Note that in \cite{Garrett2013comparison}, the numerical results of the $M_N$ model also shows some ray effects due to the numerical integration. In fact, when the numerical integration provides inexact results, the rotational invariance is ruined, and the directions specified by the quadrature nodes are artificially preferred, which is similar to the $S_N$ model. Similar behaviors are also expected in the simulations of the $\beta_{N,K}$ models if the integrals are not computed exactly. Nevertheless, one can always avoid the ray effect by taking sufficient quadrature nodes to ensure that all integrals are exactly computed.
\end{itemize}

\section{Numerical Method}
\label{sec:num_meth}
To better understand the $\beta_{N,K}$ models, we will perform some numerical experiments to test their performances. In this section, we will introduce the numerical methods for our experiments, and the numerical results will be reported in the next section.

\subsection{Moment inversion problem}
\label{sec:numerical_method}
Simulating the radiative transfer problem using the $\beta_{N,K}$ model requires solving the moment inversion problem numerically so that the numerical flux can be calculated. In our implementation, instead of using monomials to define the moments, we adopt ``orthogonal moments'' defined by the spherical harmonics:
\begin{displaymath}
\rho_{lm} := \int_{\mathbb{S}^2} Y_{lm}(\Omega) I(\Omega) \,\mathrm{d}\Omega.
\end{displaymath}
The real spherical harmonics $Y_{lm}(\cdot)$ satisfies the orthogonality:
\begin{displaymath}
\int_{\mathbb{S}^2} Y_{lm}(\Omega) Y_{l'm'}(\Omega) \,\mathrm{d}\Omega = \delta_{ll'} \delta_{mm'}.
\end{displaymath}
Then the moment inversion problem can be stated as follows:
\begin{quote} \it
Given the moments $\rho_{lm}$, $l = 0,1,\cdots,N$, $m = -l, \cdots,l$, find coefficients $\lambda_{lm}$, $l = 0,1,\cdots,N$, $m = -l, \cdots,l$ such that
\begin{displaymath}
\int_{\mathbb{S}^2} Y_{lm}(\Omega) \left( 1 + \frac{1}{K} \sum_{l'=0}^N \sum_{m'=-l'}^{l'} \lambda_{l'm'} Y_{l'm'}(\Omega) \right)^K \,\mathrm{d}\Omega = \rho_{lm}.
\end{displaymath}
\end{quote}

In our implementation, we apply Newton's method to solve the moment inversion problem. Using $\lambda_{lm}^{(n)}$ to denote the coefficients at the $n$th time step, we update the solution from the $n$th step to the $(n+1)$th step by solving the following linear system of $\lambda_{lm}^{(n+1)}$:
\begin{multline*}
  \sum_{l'=0}^N \sum_{m'=-l'}^{l'} \left(\lambda_{l'm'}^{(n+1)} - \lambda_{l'm'}^{(n)} \right) \int_{\mathbb{S}^2} Y_{l'm'}(\Omega) Y_{lm}(\Omega) \left( 1 + \frac{1}{K} \sum_{l''=0}^N \sum_{m''=-l''}^{l''} \lambda_{l''m''}^{(n)} Y_{l''m''}(\Omega) \right)^{K-1} \,\mathrm{d}\Omega = \Delta_{lm}, \\
  l=0,1,\cdots,N, \qquad m = -l, \cdots, l.
\end{multline*}
where
\begin{displaymath}
  \Delta_{lm} = \rho_{lm} - \int_{\mathbb{S}^2} Y_{lm}(\Omega)\left( 1 + \frac{1}{K} \sum_{l''=0}^N \sum_{m''=-l''}^{l''} \lambda_{l''m''}^{(n)} Y_{l''m''}(\Omega) \right)^K \,\mathrm{d}\Omega.
\end{displaymath}
One can observe that the linear system above has a symmetric coefficient matrix, and therefore can be solved by the conjugate gradient method.
The Newton iteration terminates when the $L^2$ difference of the moments is smaller than $10^{-10}$:
\begin{displaymath}
  \sqrt{\sum_{l=0}^N \sum_{m=-l}^l |\Delta_{lm}|^2} < 10^{-10}.
\end{displaymath}
The integrals appearing in the linear system are calculated using the Lebedev quadrature. Regardless of the round-off error, the Lebedev quadrature can guarantee the exactness of the numerical integration with sufficient number of quadrature points. The nodes and weights of the Lebedev quadrature for different algebraic orders of accuracy can be found in a series of papers including \cite{Lebedev1976quadratures, Lebedev1999quadrature}. Here we need the order of accuracy to be at least $N(K+1)$ to guarantee the exactness of the numerical integral. The linear system can be solved by the conjugate gradient method due to the positive definiteness of the coefficient matrix. In the time-dependent problem, the initial values $\alpha_{lm}^{(0)}$ are chosen as the solution at the previous time step; otherwise, we set
\begin{displaymath}
  \lambda_{lm}^{(0)} = \rho_{lm}^{(0)}, \qquad l = -N, \cdots, N, \quad m = -l, \cdots, l
\end{displaymath}
as the start of our iterations.

After solving the coefficients, the computation of the flux function can again be obtained by the Lebedev quadrature. In the $\beta_{N,K}$ model, the flux function for the moment $M_{lm}$ is the following integral:
\begin{displaymath}
\int_{\mathbb{S}^2} \Omega \, Y_{lm}(\Omega) \left( 1 + \frac{1}{K} \sum_{l'=0}^N \sum_{m'=-l'}^{l'} \lambda_{l'm'} Y_{l'm'}(\Omega) \right)^K \,\mathrm{d}\Omega,
\end{displaymath}
which requires the order of accuracy to be $(K+1)N+1$ to calculate exactly. In Table \ref{tab:n_quad_point}, we list the number of quadrature points required in the Lebedev quadrature for some values of $N$ and $K$.
\begin{table}[!ht]
\centering
\caption{Number of quadrature points needed in the $\beta_{N,K}$ model}
\label{tab:n_quad_point}
\begin{tabular}{ccc@{\hspace*{35pt}}ccc@{\hspace*{35pt}}ccc}
\hline
$N$ & $K$ & No. of points & $N$ & $K$ & No. of points &
$N$ & $K$ & No. of points \\
\hline
3 & 1 & 26 & 7 & 1 & 86 & 11 & 1 & 194 \\
3 & 3 & 74 & 7 & 3 & 302 & 11 & 3 & 770 \\
3 & 5 & 146 & 7 & 5 & 770 & 11 & 5 & 1730 \\
3 & 7 & 230 & 7 & 7 & 1202 & 11 & 7 & 2702 \\
\hline
\end{tabular}
\end{table}

\subsection{Spatial and temporal discretization}
In our experiments, the finite volume method is adopted to discretize the moment equations. This work focuses only on two-dimensional problems, so that we can write the moment equations in the following form of balance laws:
\begin{displaymath}
\frac{\partial \boldsymbol{\rho}}{\partial t} + 
\frac{\partial \boldsymbol{F}(\boldsymbol{\rho})}{\partial x} +
\frac{\partial \boldsymbol{G}(\boldsymbol{\rho})}{\partial y} = \boldsymbol{L}(\boldsymbol{\rho}).
\end{displaymath}
The spatial domain is discretized with a uniform grid, and each grid cell is denoted by $[x_{i-1/2}, x_{i+1/2}] \times [y_{j-1/2}, y_{j+1/2}]$. Then according to the finite volume method, the numerical solution $\boldsymbol{\rho}_{i,j}^n$ approximates the average of the solution at the $n$th time step $t_n$:
\begin{displaymath}
\boldsymbol{\rho}_{i,j}^n \approx \frac{1}{\Delta x \,\Delta y} \int_{x_{i-1/2}}^{x_{i+1/2}} \int_{y_{j-1/2}}^{y_{j+1/2}} \boldsymbol{\rho}(x,y,t_n) \,\mathrm{d}y \,\mathrm{d}x.
\end{displaymath}
To update $\boldsymbol{\rho}_{i,j}^n$, we use the numerical scheme below following Heun's method:
\begin{align*}
& \boldsymbol{\rho}_{i,j}^* = \boldsymbol{\rho}_{i,j}^n - \frac{\Delta t}{\Delta x} (\boldsymbol{F}_{i+1/2,j}^n - \boldsymbol{F}_{i-1/2,j}^n)
- \frac{\Delta t}{\Delta y} (\boldsymbol{G}_{i,j+1/2}^n - \boldsymbol{G}_{i,j-1/2}^n) + \Delta t \, \boldsymbol{L}(\boldsymbol{\rho}_{i,j}^n), \\
& \boldsymbol{\rho}_{i,j}^{n+1} = \frac{\boldsymbol{\rho}_{i,j}^n + \boldsymbol{\rho}_{i,j}^*}{2} - \frac{\Delta t}{2\Delta x} (\boldsymbol{F}_{i+1/2,j}^* - \boldsymbol{F}_{i-1/2,j}^*)
- \frac{\Delta t}{2\Delta y} (\boldsymbol{G}_{i,j+1/2}^* - \boldsymbol{G}_{i,j-1/2}^*) + \frac{\Delta t}{2} \, \boldsymbol{L}(\boldsymbol{\rho}_{i,j}^*).
\end{align*}
The numerical fluxes $\boldsymbol{F}_{i+1/2,j}^n$, $\boldsymbol{F}_{i+1/2,j}^*$ and $\boldsymbol{G}_{i,j+1/2}^n$, $\boldsymbol{G}_{i,j+1/2}^*$ are computed based on linear reconstructions:
\begin{align*}
\boldsymbol{F}_{i+1/2,j}^n &=  \boldsymbol{F} \left(\boldsymbol{\rho}_{i,j}^n + \frac{\Delta x}{2} \boldsymbol{\sigma}_{i,j}^n, \boldsymbol{\rho}_{i+1,j}^n - \frac{\Delta x}{2} \boldsymbol{\sigma}_{i+1,j}^n\right), &
\boldsymbol{F}_{i+1/2,j}^* &=  \boldsymbol{F} \left(\boldsymbol{\rho}_{i,j}^* + \frac{\Delta x}{2} \boldsymbol{\sigma}_{i,j}^*, \boldsymbol{\rho}_{i+1,j}^* - \frac{\Delta x}{2} \boldsymbol{\sigma}_{i+1,j}^*\right), \\
\boldsymbol{G}_{i,j+1/2}^n &=  \mathbf{G} \left(\boldsymbol{\rho}_{i,j}^n + \frac{\Delta y}{2} \boldsymbol{\kappa}_{i,j}^n, \boldsymbol{\rho}_{i,j+1}^n - \frac{\Delta y}{2} \boldsymbol{\kappa}_{i,j+1}^n\right), &
\boldsymbol{G}_{i,j+1/2}^* &=  \boldsymbol{G} \left(\boldsymbol{\rho}_{i,j}^* + \frac{\Delta y}{2} \boldsymbol{\kappa}_{i,j}^*, \boldsymbol{\rho}_{i,j+1}^* - \frac{\Delta y}{2} \boldsymbol{\kappa}_{i,j+1}^*\right),
\end{align*}
where the slopes $\boldsymbol{\sigma}_{i,j}^n$, $\boldsymbol{\sigma}_{i,j}^*$ and $\boldsymbol{\kappa}_{i,j}^n$, $\boldsymbol{\kappa}_{i,j}^*$ are obtained from the monotonized central limiter, for example:
\begin{displaymath}
\boldsymbol{\sigma}_{i,j}^n= \minmod\left( \frac{2(\boldsymbol{\rho}_{i+1,j}^n - \boldsymbol{\rho}_{i,j}^n)}{\Delta x}, \, \frac{2(\boldsymbol{\rho}_{i,j}^n - \boldsymbol{\rho}_{i-1,j}^n)}{\Delta x}, \, \frac{\boldsymbol{\rho}_{i+1,j}^n - \boldsymbol{\rho}_{i-1,j}^n}{2\Delta x} \right).
\end{displaymath}
We adopt the Lax-Friedrichs fluxes in our implementation:
\begin{displaymath}
\boldsymbol{F}(\boldsymbol{\rho}, \boldsymbol{\varrho}) = \frac{\boldsymbol{F}(\boldsymbol{\rho}) + \boldsymbol{F}(\boldsymbol{\varrho})}{2} - \frac{\boldsymbol{\varrho} - \boldsymbol{\rho}}{2}, \qquad \boldsymbol{G}(\boldsymbol{\rho}, \boldsymbol{\varrho}) = \frac{\boldsymbol{G}(\boldsymbol{\rho}) + \boldsymbol{G}(\boldsymbol{\varrho})}{2} - \frac{\boldsymbol{\varrho} - \boldsymbol{\rho}}{2}. 
\end{displaymath}
Note that here we have used the fact that the characteristic speeds are less than one to determine the numerical viscosity. Note that the moment inversion problem needs to be solved when calculating the flux functions $\boldsymbol{F}(\cdot)$ and $\boldsymbol{G}(\cdot)$. The time step $\Delta t$ is determined such that
\begin{displaymath}
\Delta t \left( \frac{1}{\Delta x} + \frac{1}{\Delta y} + \sigma \right) < 1.
\end{displaymath}

The method above will have second-order accuracy for smooth solutions. Here we choose this scheme because it is relatively easy to implement. As in other hyperbolic equations, higher-order schemes may have better performances. In applications, one can apply any general high-order schemes such as the WENO method and the discontinuous Galerkin method to the moment equations. In the next section, we will apply the method to some benchmark problems. More details of the numerical method such the cell sizes and the boundary conditions will be specified in each example.

\section{Numerical Results} \label{sec:num}
\subsection{Approximation of some density functions}
In this section, we study the approximation of $I(\Omega)$ using the $\beta_{N,K}$ model. Since one major advantage of the $M_N$ model is to represent beams (Dirac functions) exactly, we will also test the capability of the $\beta_{N,K}$ model in approximating beam-related functions. Three numerical examples will be presented in the following subsections.

\subsubsection{Approximation of a single beam}
We first study the approximation of the Dirac function defined on the sphere:
\begin{displaymath}
I(\Omega) = \delta(\Omega - \Omega_0),
\end{displaymath}
where $\Omega_0$ is a given point on $\mathbb{S}^2$. Due to the rotational invariance of the $\beta_{N,K}$ model, any point $\Omega_0$ is equivalent in this test. Below we choose $\Omega_0 = (0,0,1)^{\top}$ so that the moments of $I(\Omega)$ are
\begin{displaymath}
\rho_{lm} := \int_{\mathbb{S}^2} Y_{lm}(\Omega) I(\Omega) \,\mathrm{d}\Omega = \delta_{m0} \sqrt{\frac{2l+1}{4\pi}}.
\end{displaymath}
The moment inversion is solved by Newton's method as described in Section \ref{sec:numerical_method}.

\begin{figure}[!ht]
\centering
\subfloat[$N=3$, $K=1$]{%
  \includegraphics[width=.33\textwidth]{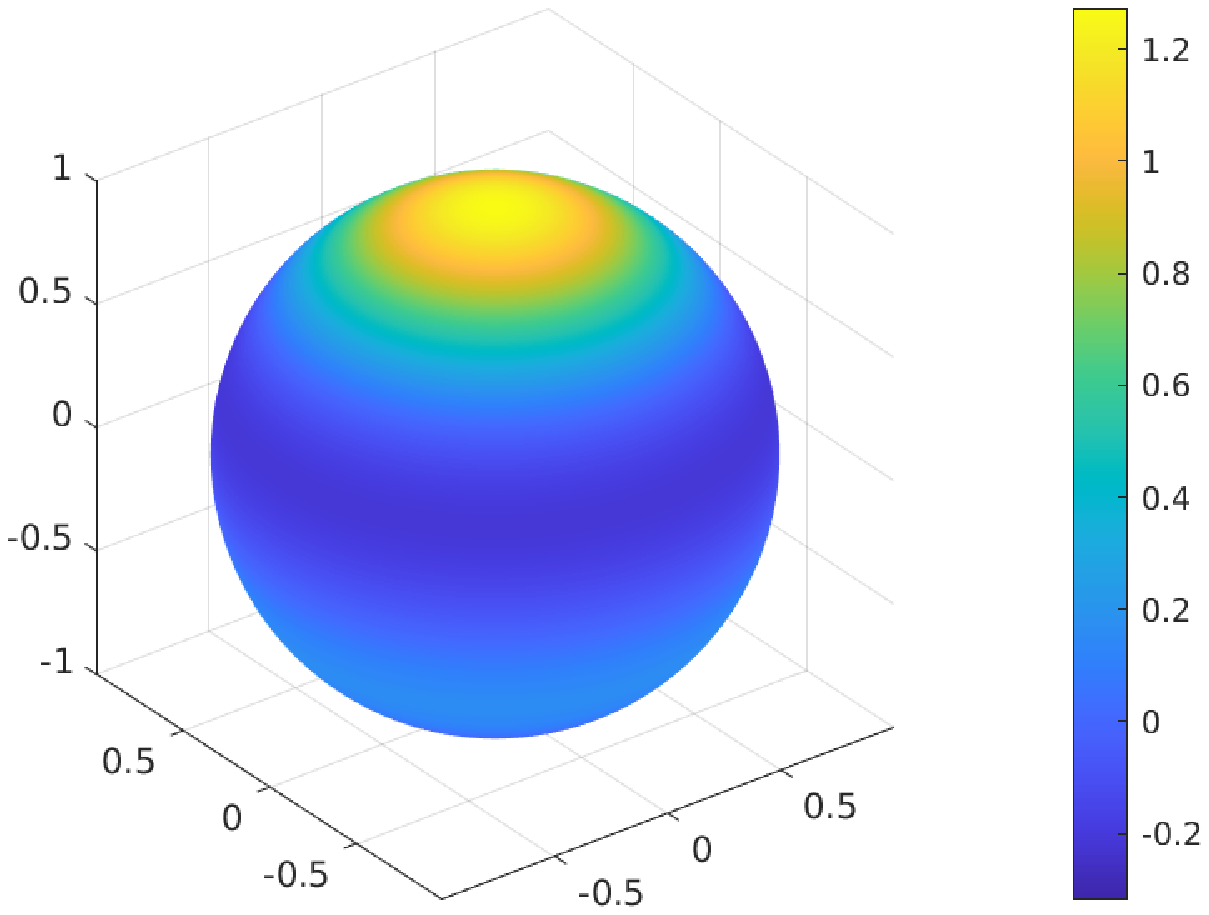}
}
\subfloat[$N=7$, $K=1$]{%
  \includegraphics[width=.33\textwidth]{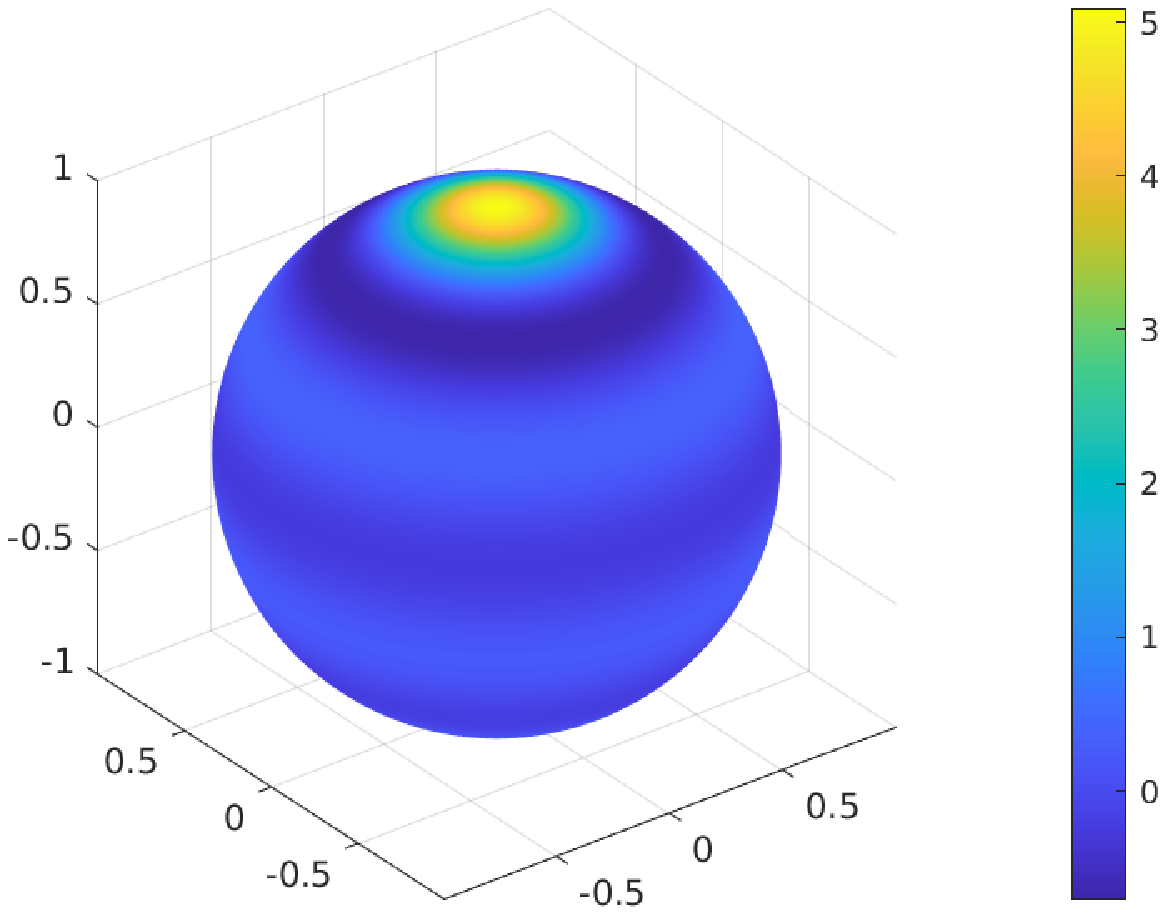}
}
\subfloat[$N=11$, $K=1$]{%
  \includegraphics[width=.33\textwidth]{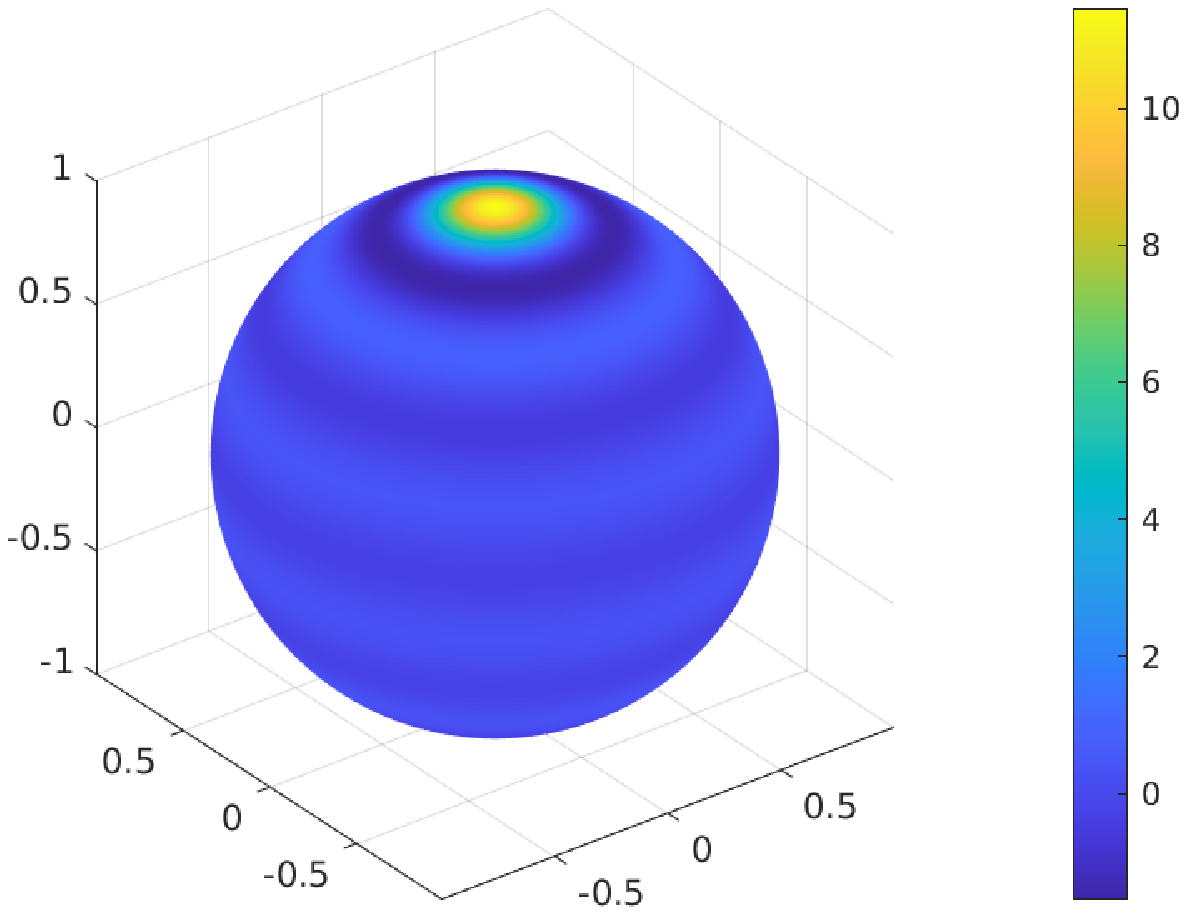}
} \\
\subfloat[$N=3$, $K=3$]{%
  \includegraphics[width=.33\textwidth]{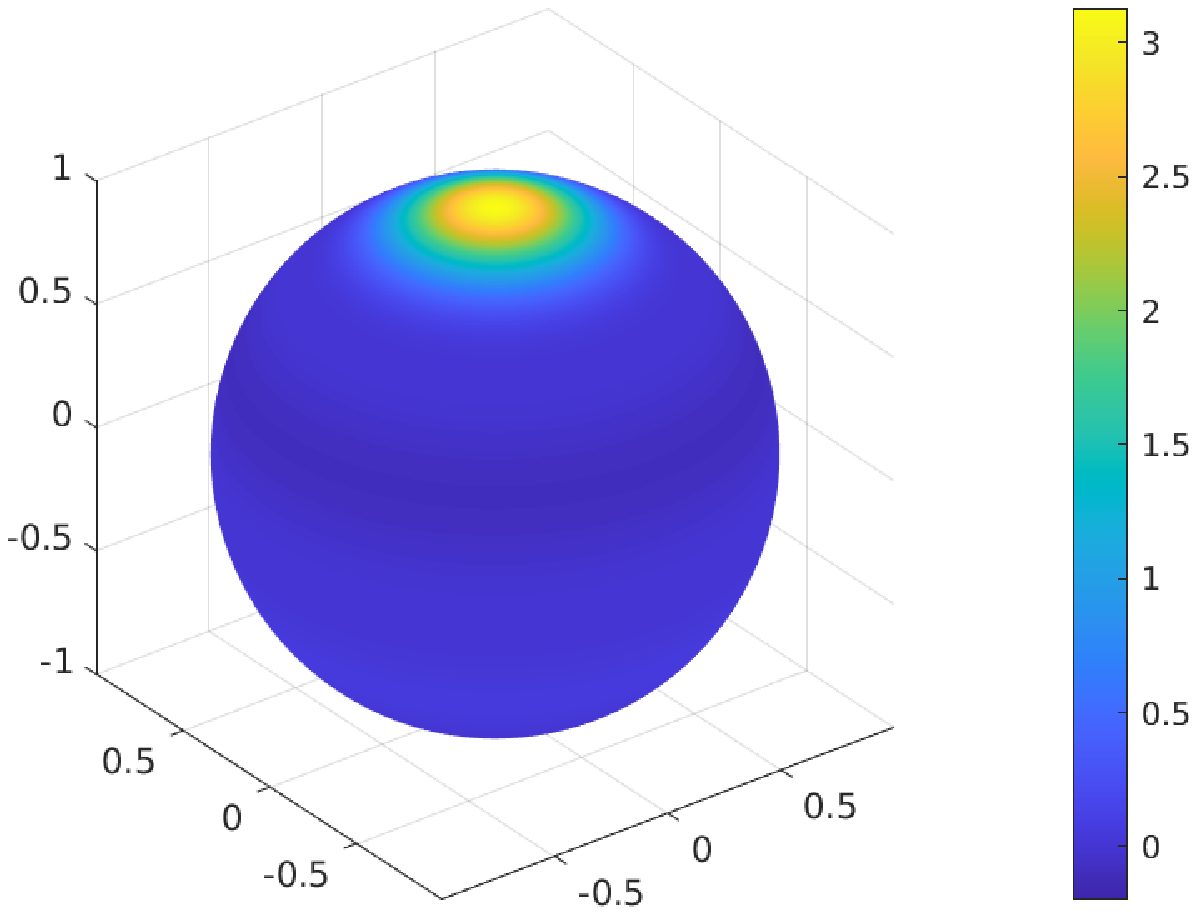}
}
\subfloat[$N=7$, $K=3$]{%
  \includegraphics[width=.33\textwidth]{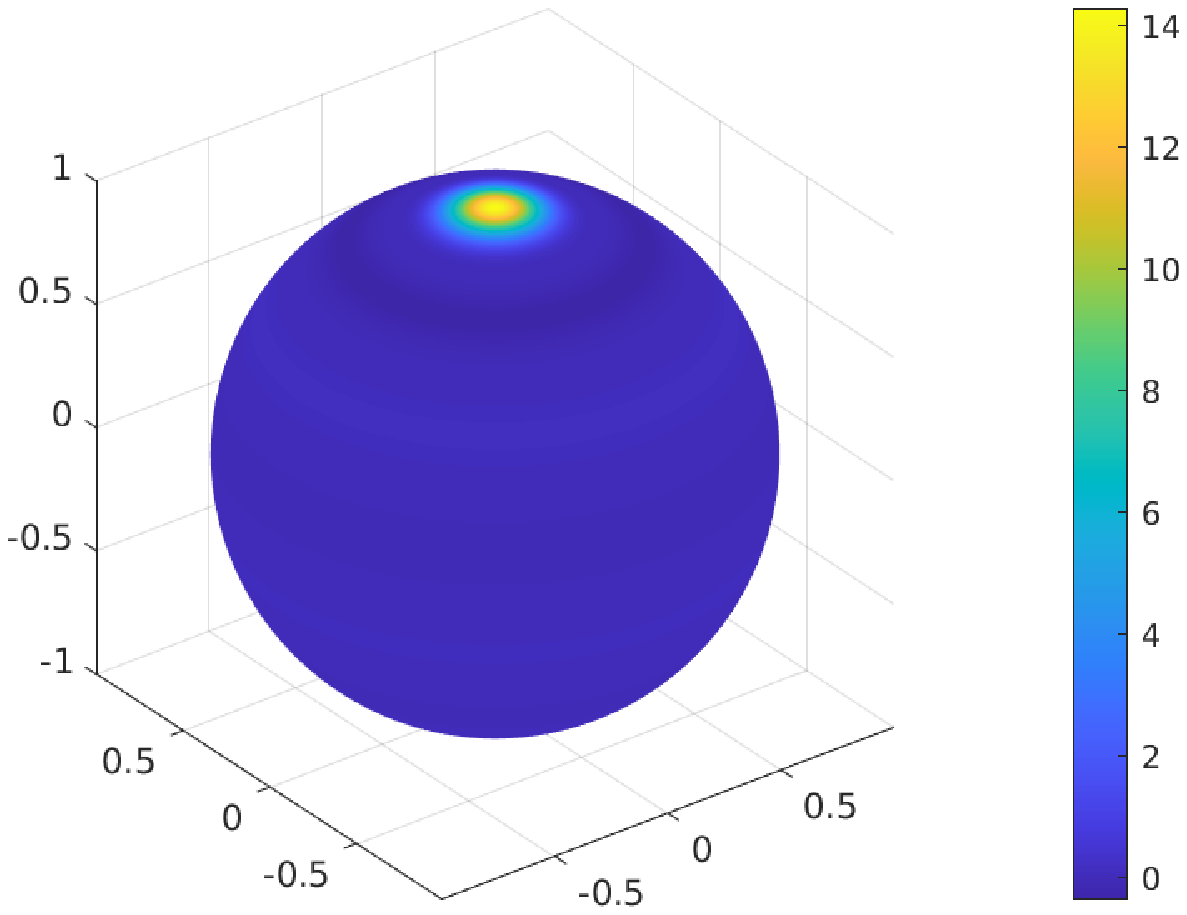}
}
\subfloat[$N=11$, $K=3$]{%
  \includegraphics[width=.33\textwidth]{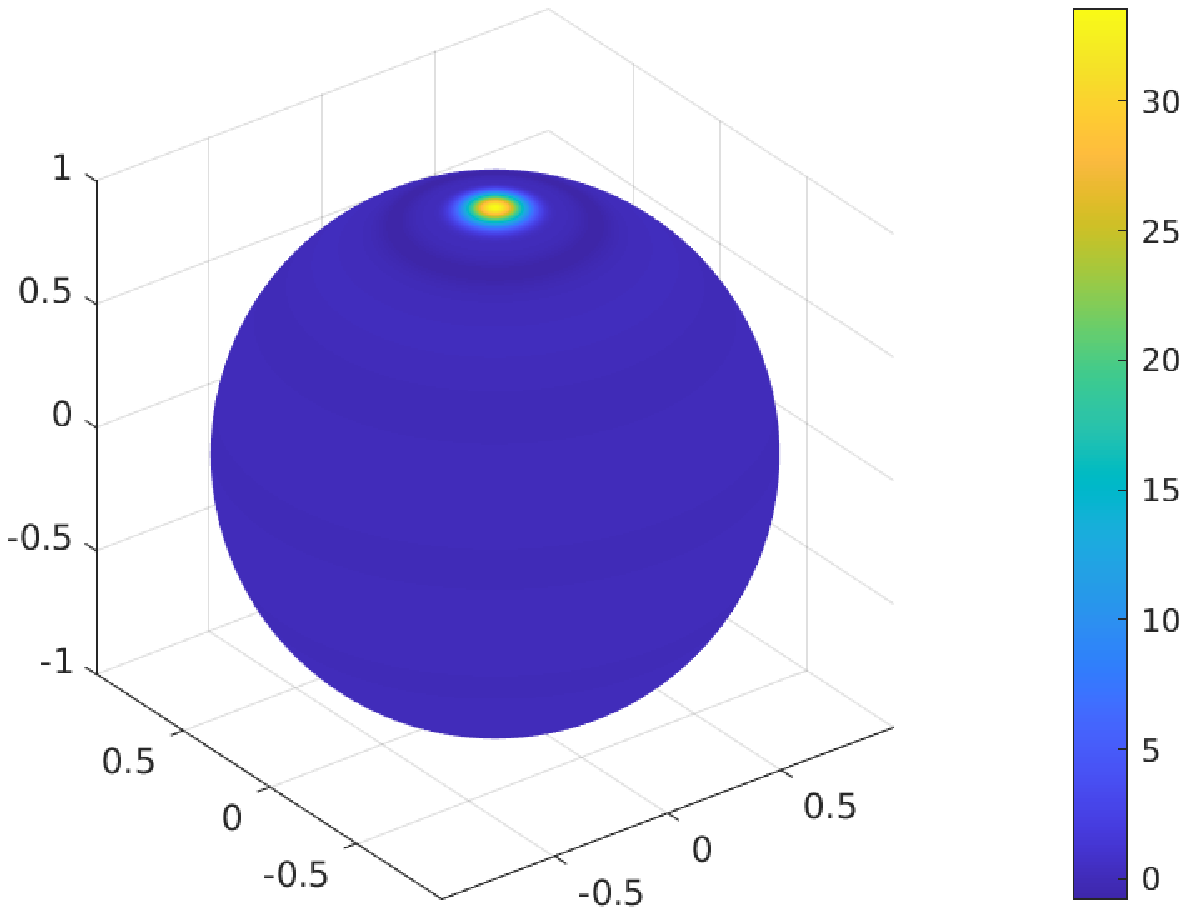}
} \\
\subfloat[$N=3$, $K=5$]{%
  \includegraphics[width=.33\textwidth]{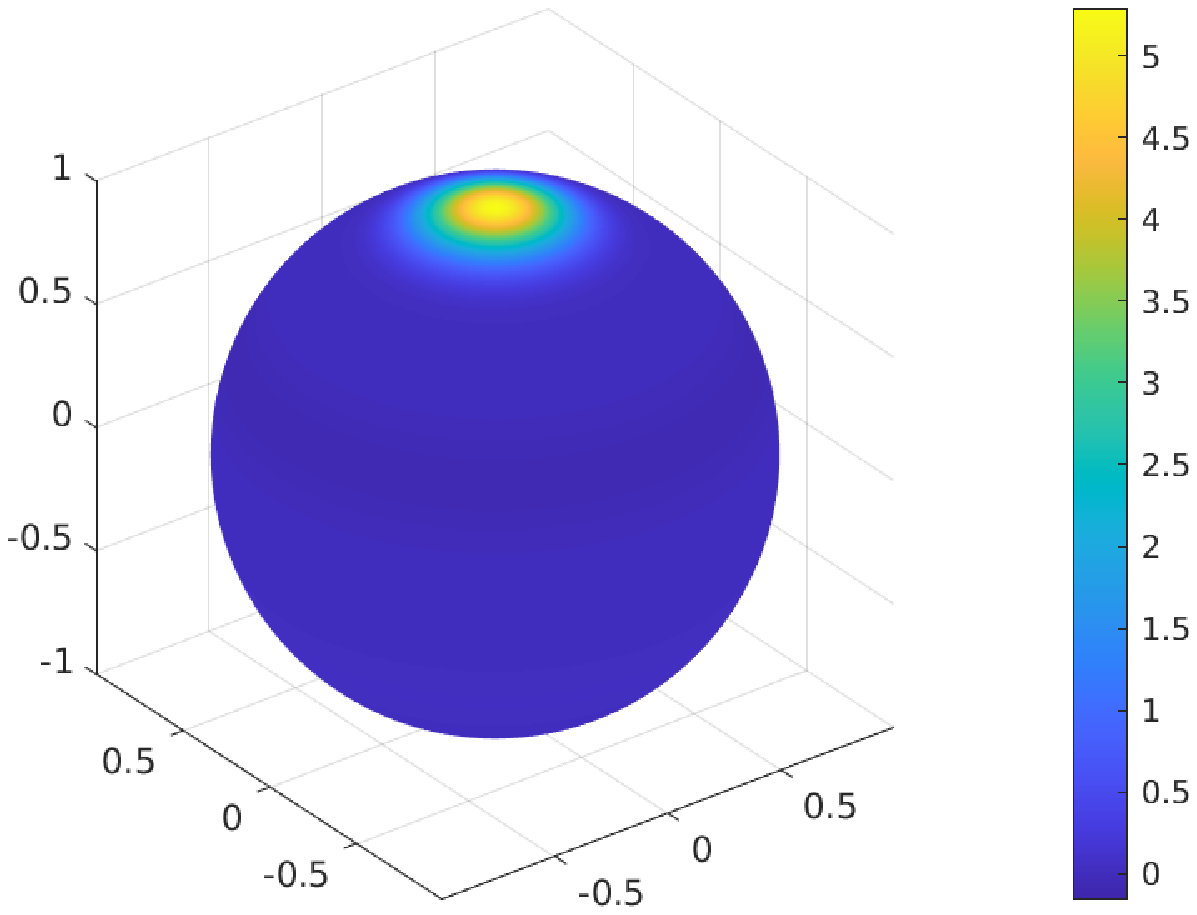}
}
\subfloat[$N=7$, $K=5$]{%
  \includegraphics[width=.33\textwidth]{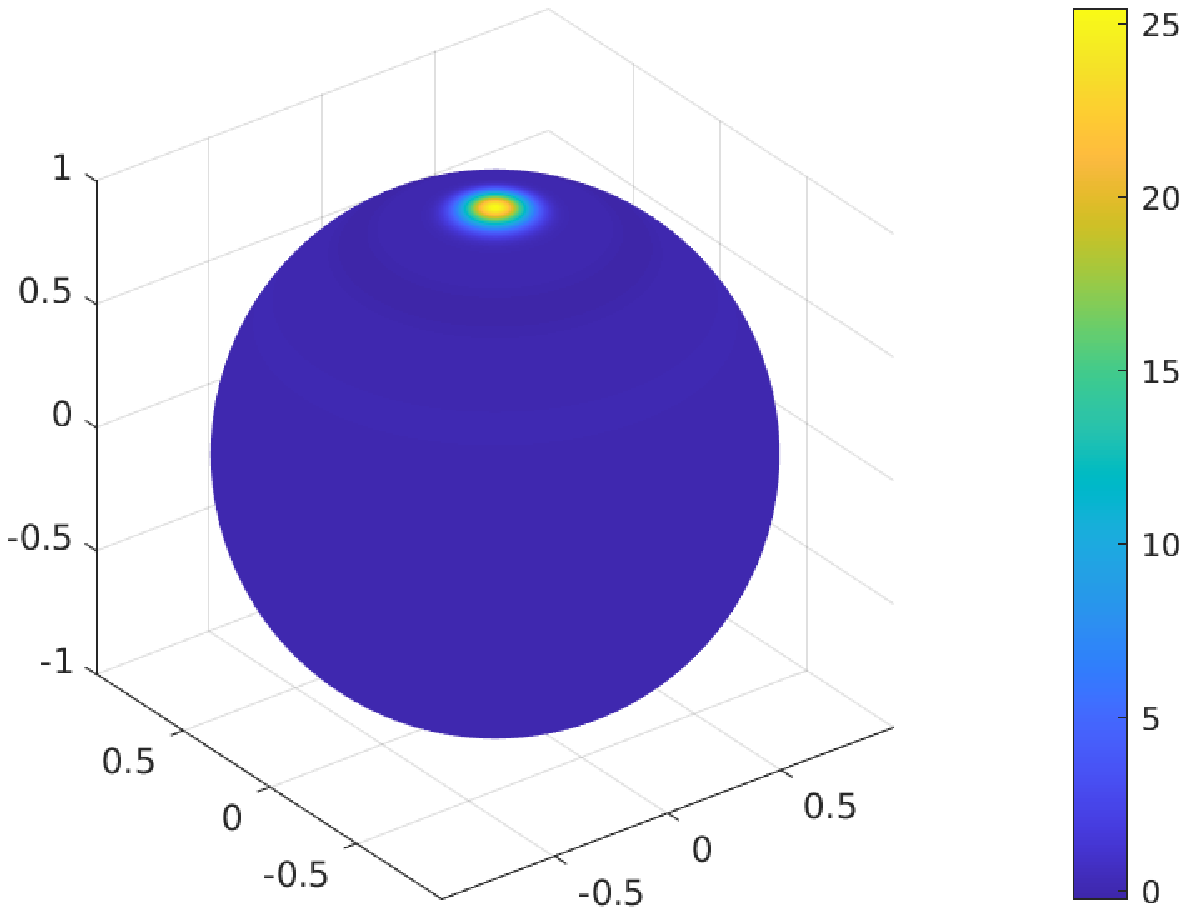}
}
\subfloat[$N=11$, $K=5$]{%
  \includegraphics[width=.33\textwidth]{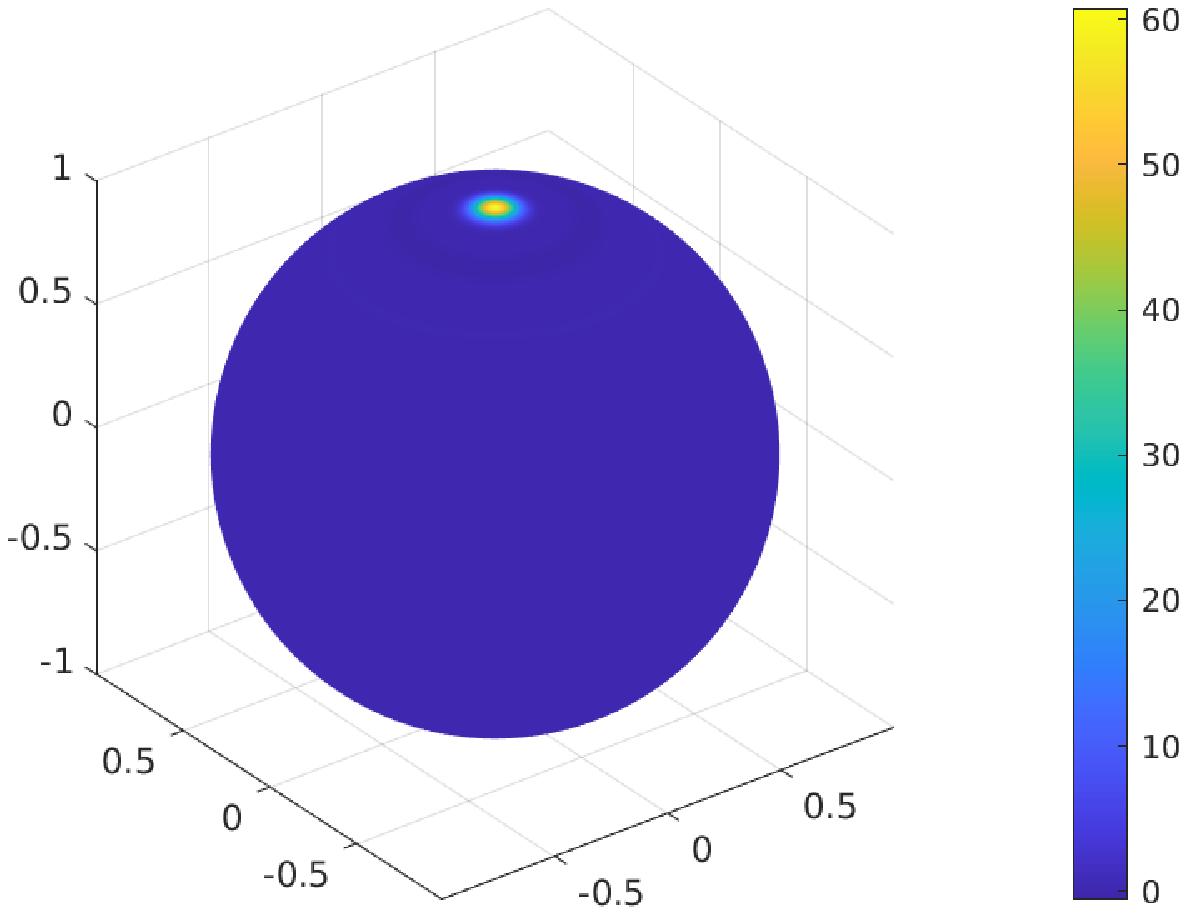}
} \\
\subfloat[$N=3$, $K=7$]{%
  \includegraphics[width=.33\textwidth]{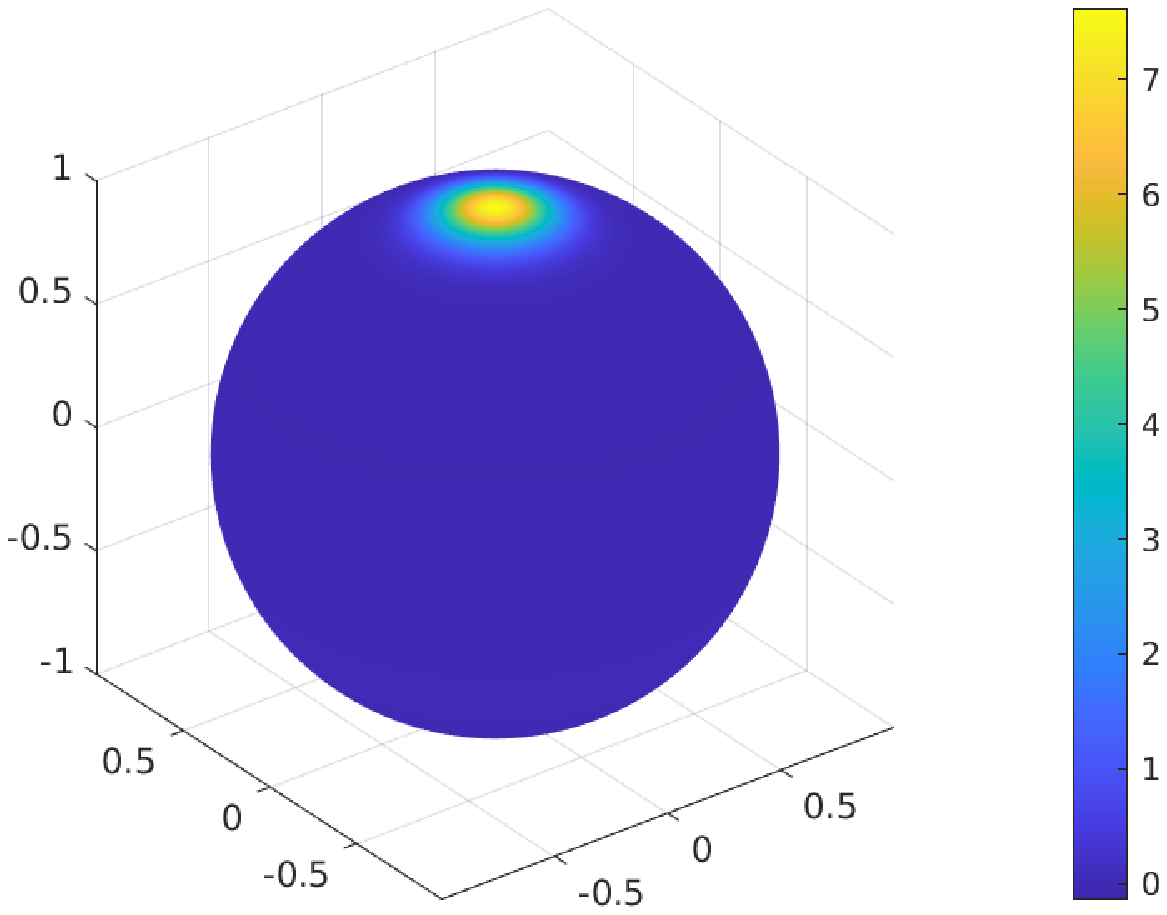}
}
\subfloat[$N=7$, $K=7$]{%
  \includegraphics[width=.33\textwidth]{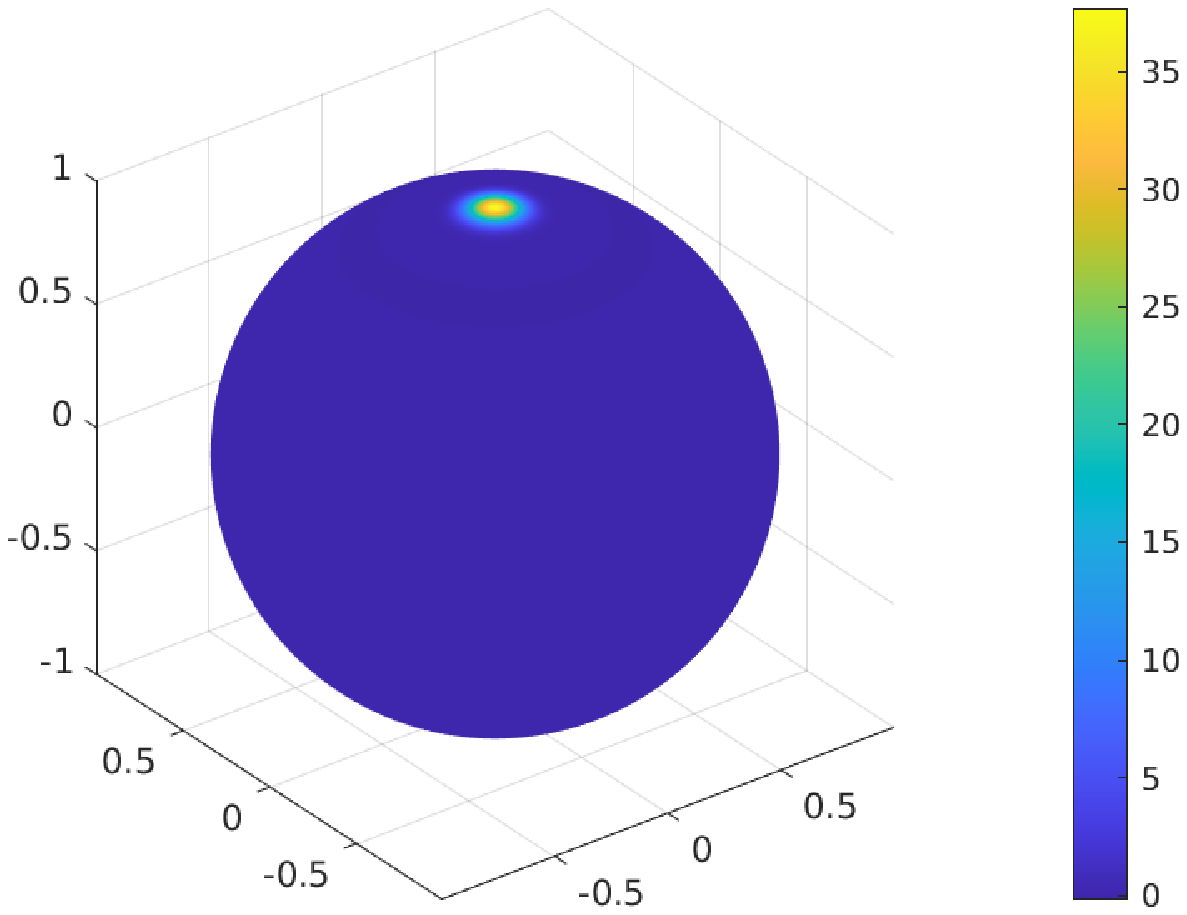}
}
\subfloat[$N=11$, $K=7$]{%
  \includegraphics[width=.33\textwidth]{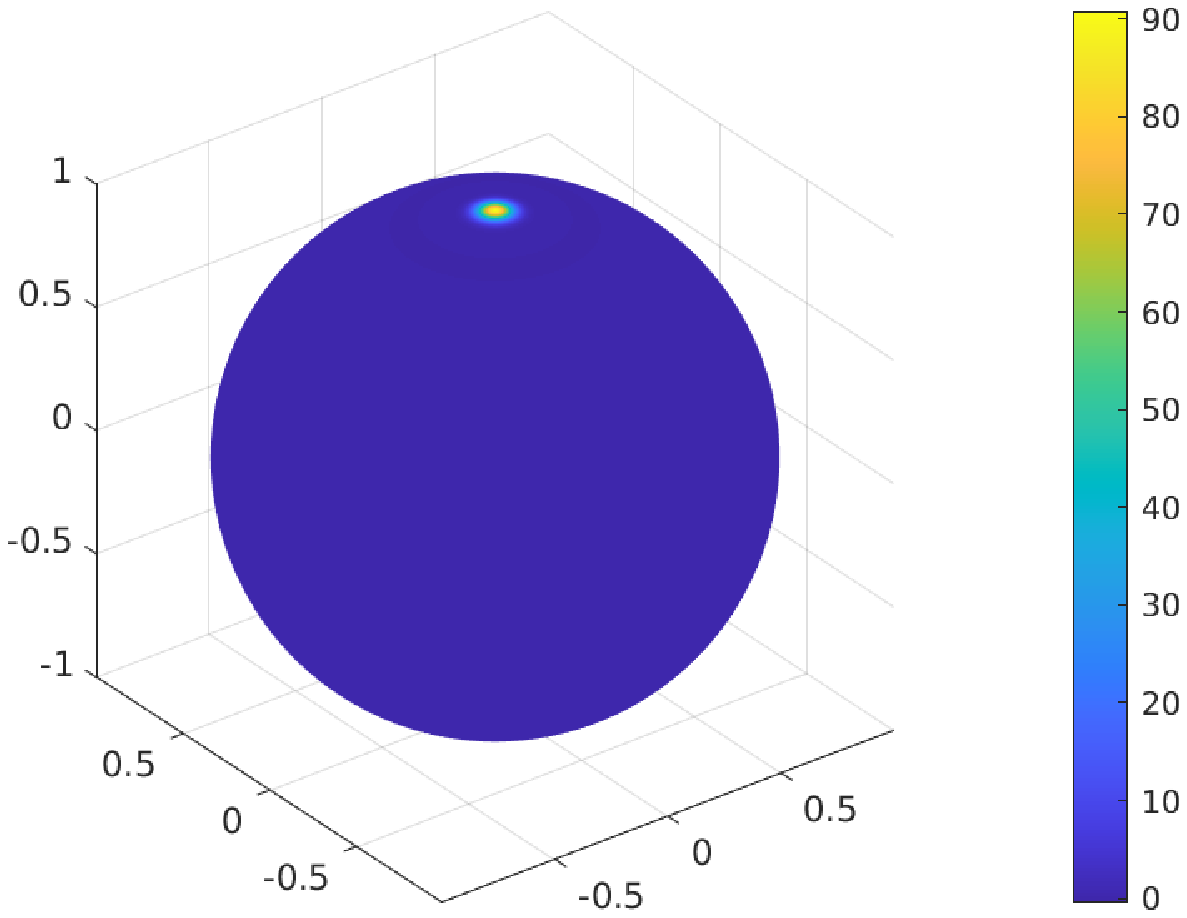}
}
\caption{Approximation of a single Dirac delta function using the $\beta_{N,K}$ model}
\label{fig:single_Dirac}
\end{figure}

The approximations of the Dirac delta function for $N=3,7,11$ and $K=1,3,5,7$ are plotted in Figure \ref{fig:single_Dirac}. Note that when $K=1$, the $\beta_{N,K}$ model is identical to the $P_N$ model. In general, when $N$ or $K$ increases, the intensity gets more concentrated on the north pole of the sphere, indicating better approximation to the Dirac delta function. For $K=1$, a significant negative part of the intensity function can be observed in the plots, and one can clearly see the oscillations of the approximate functions. This is improved remarkably when $K$ increases to $3$. Higher peak value can be attained by further increasing $K$, which implies possible convergence to the Dirac delta function.

To quantify the approximation error, we notice that the Dirac delta function is a member of $H^{\alpha}(\mathbb{S}^2)$ for all $\alpha < -1$, where the $H^{\alpha}$-norm is defined as (see \cite{Barcelo2021Fourier})
\begin{displaymath}
  \|I\|_{H^{\alpha}(\mathbb{S}^2)} = \left(\sum_{l=0}^{+\infty} \sum_{m=-l}^l [1 + l(l+1)]^{\alpha} |\rho_{lm}|^2 \right)^{1/2}.
\end{displaymath}
Here we consider the $H^{-2}$-error between the Dirac delta function and our approximation. According to the general convergence theory for spectral methods, the $H^{-2}$ convergence rate for the approximation of a function in the $H^{\alpha}$ space should be $O(N^{-2-\alpha})$. In this test case, we expect a convergence order that is close to $1$. To verify this prediction, we plot the numerical errors for $K=1,3,5$ in Figure \ref{fig:single_Dirac_error_N}. Due to the numerical difficulty in computing the $H^{-2}$ error, the results are given only up to $N = 13$. All the three lines show a consistent convergence order close to $1$, and the prefactor is smaller for larger values of $K$. This validates our observation from Figure \ref{fig:single_Dirac}. Note that the $\beta_{N,K}$ model has $(N+1)^2$ moments. Hence, the convergence order is $1/2$ with respect to the number of degrees of freedom.
\begin{figure}[!ht]
\centering
\includegraphics[width=.6\textwidth]{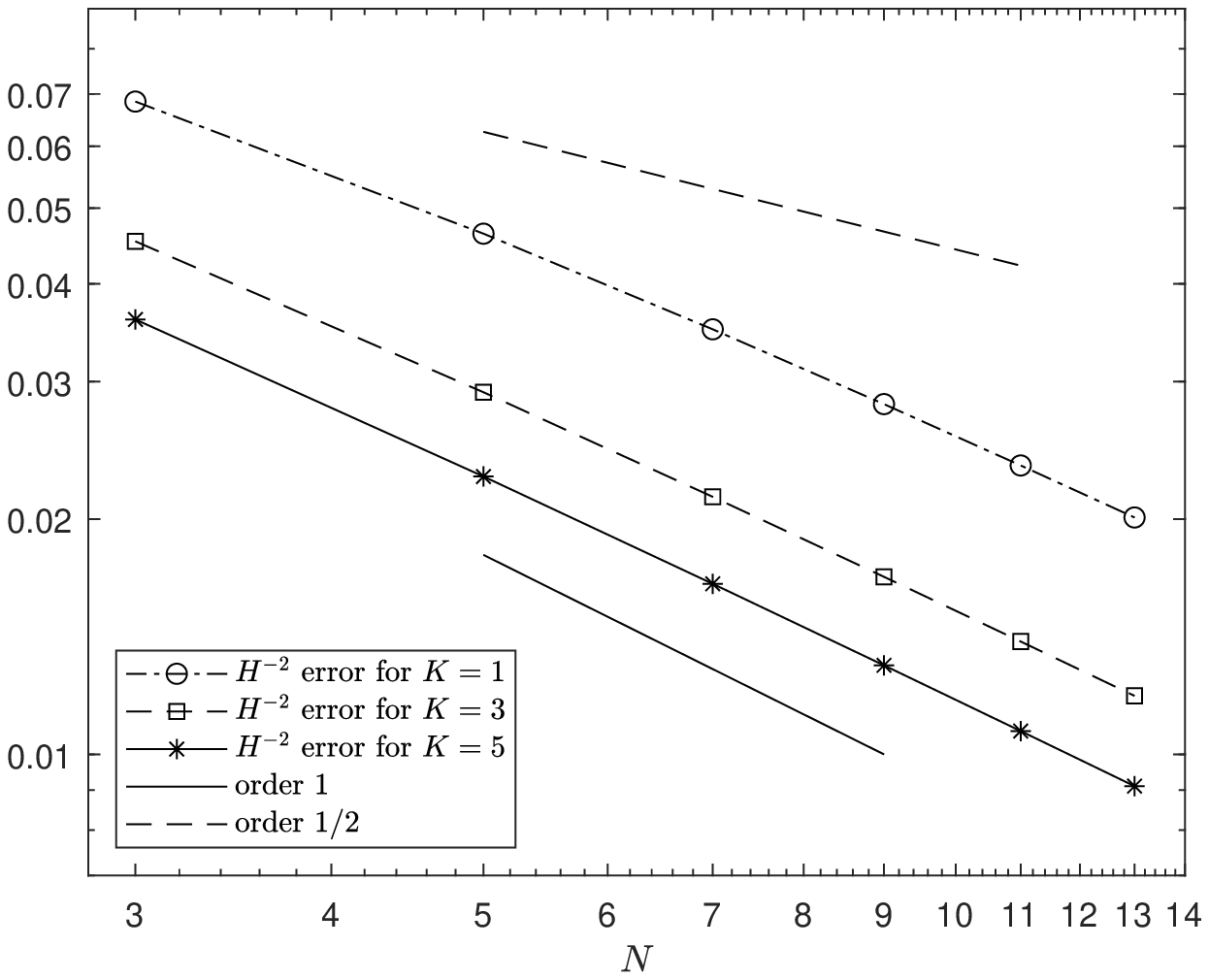}
\caption{Convergence order for the approximation of a single Dirac delta function}
\label{fig:single_Dirac_error_N}
\end{figure}

Recall that the $M_N$ model can represent a single Dirac delta function exactly for all $N \geq 1$, which means that the approximation error coincides with the difference between the $\beta_{N,K}$ model and the $M_N$ model. This fact allows us to use this test case to check the convergence rate of our $\beta_{N,K}$ model towards the $M_N$ model. Now we fix $N$ and compute the $H^{-2}$ error for various values of $K$. The results are given in Figure \ref{fig:single_Dirac_error_K}. It shows that for $N = 1,3,5$, the convergence order with respect $K$ is close to $1/2$. Note that this convergence rate only applies to this particular case, and it may change with the smoothness of the function and the norm used to measure the error.
\begin{figure}[!ht]
\centering
\includegraphics[width=.6\textwidth]{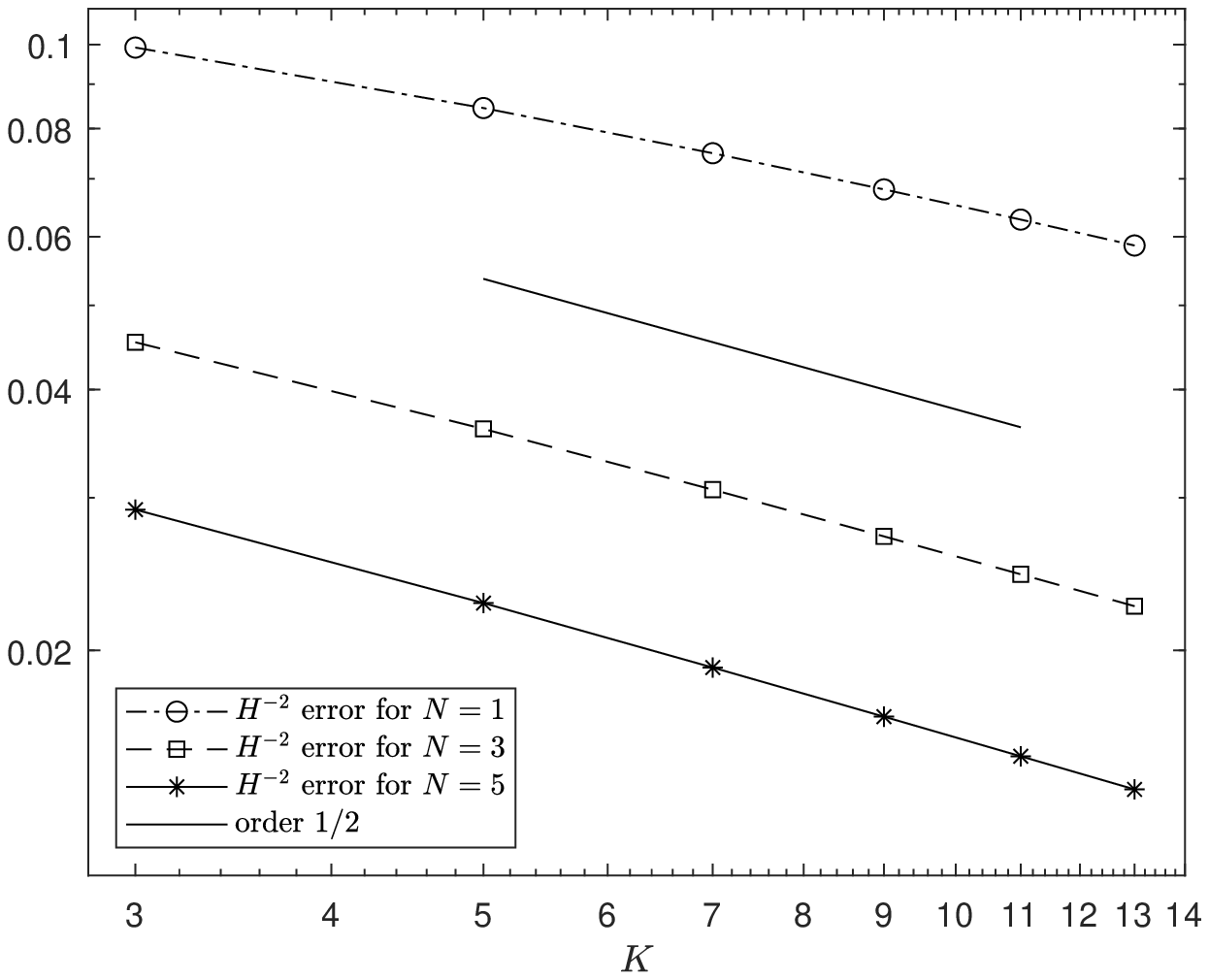}
\caption{Convergence order for the approximation of a single Dirac delta function}
\label{fig:single_Dirac_error_K}
\end{figure}

\subsubsection{Approximation of multiple beams} \label{sec:Dirac}
We now consider intensity functions with multiple beams:
\begin{displaymath}
I(\Omega) = \sum_{k=1}^n \delta(\Omega - \Omega_k).
\end{displaymath}
The numerical scheme to obtain the $\beta_{N,K}$ approximation is the same as the case of a single beam. For conciseness, we will only demonstrate results for $n = 2$ and $n = 3$.

For the two-beam case ($n = 2$), we let
\begin{displaymath}
\Omega_1 = (0,0,1)^{\top}, \qquad \Omega_2 = (0,-1,0)^{\top}.
\end{displaymath}
The numerical results for $N = 1,2,3$ and $K = 5,9$ are plotted in Figure \ref{fig:double_Dirac}. Unlike the case of a single beam, now the $M_1$ model is unable to represent $I(\Omega)$ exactly, so that the $\beta_{1,K}$ model gives a poor approximation of the intensity function. Figures \ref{fig:N1K5} and \ref{fig:N1K9} show that the approximation provides radiations spreading around the angle in the middle of $\Omega_1$ and $\Omega_2$. The $M_2$ model can represent the intensity function exactly, so that the results of the $\beta_{2,K}$ model have two bright spots around $\Omega_1$ and $\Omega_2$. However, even for $K = 9$, the $\beta_{2,K}$ result still shows a significant amount of radiation in the directions connecting $\Omega_1$ and $\Omega_2$. This can be well suppressed by increasing $N$ to $3$.
\begin{figure}[!ht]
\centering
\subfloat[$N=1$, $K=5$]{%
  \label{fig:N1K5}
  \includegraphics[width=.33\textwidth]{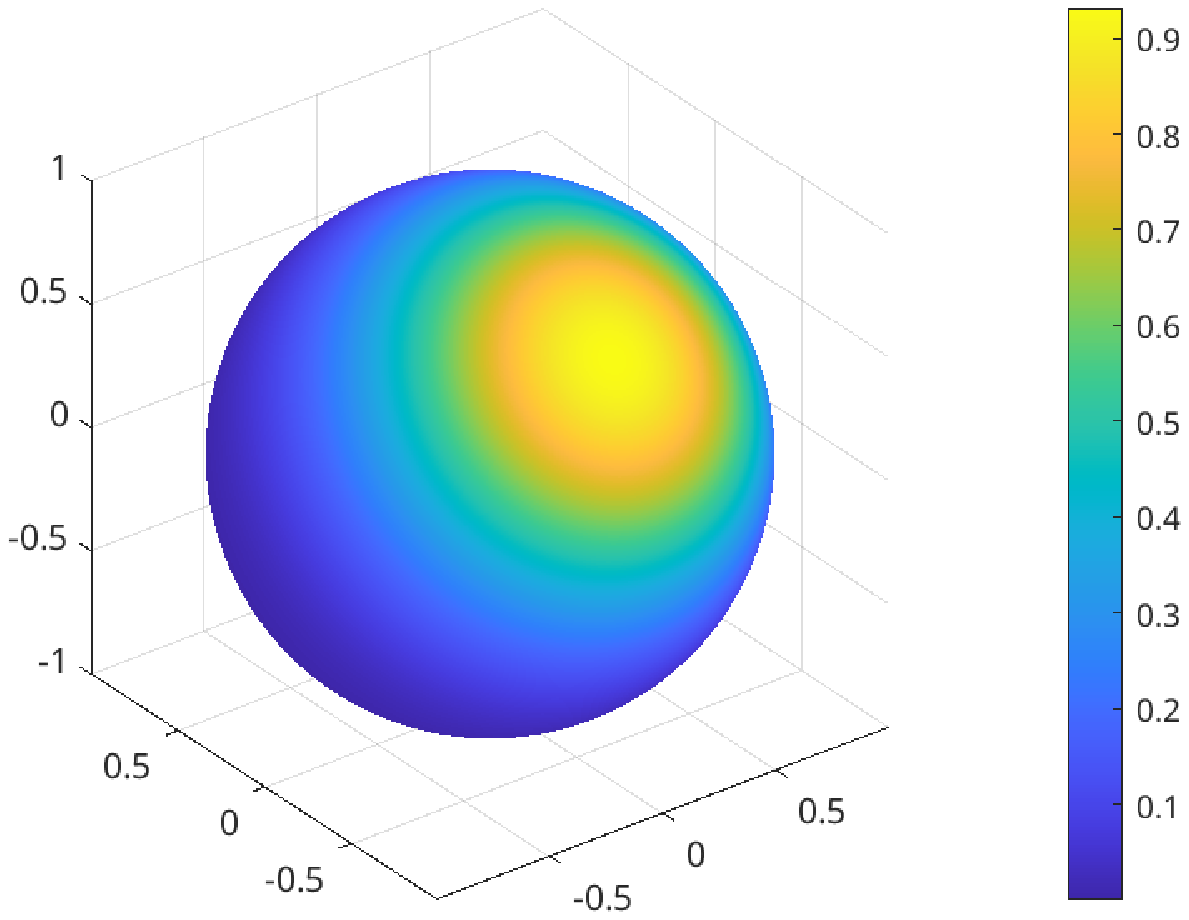}
}
\subfloat[$N=2$, $K=5$]{%
  \includegraphics[width=.33\textwidth]{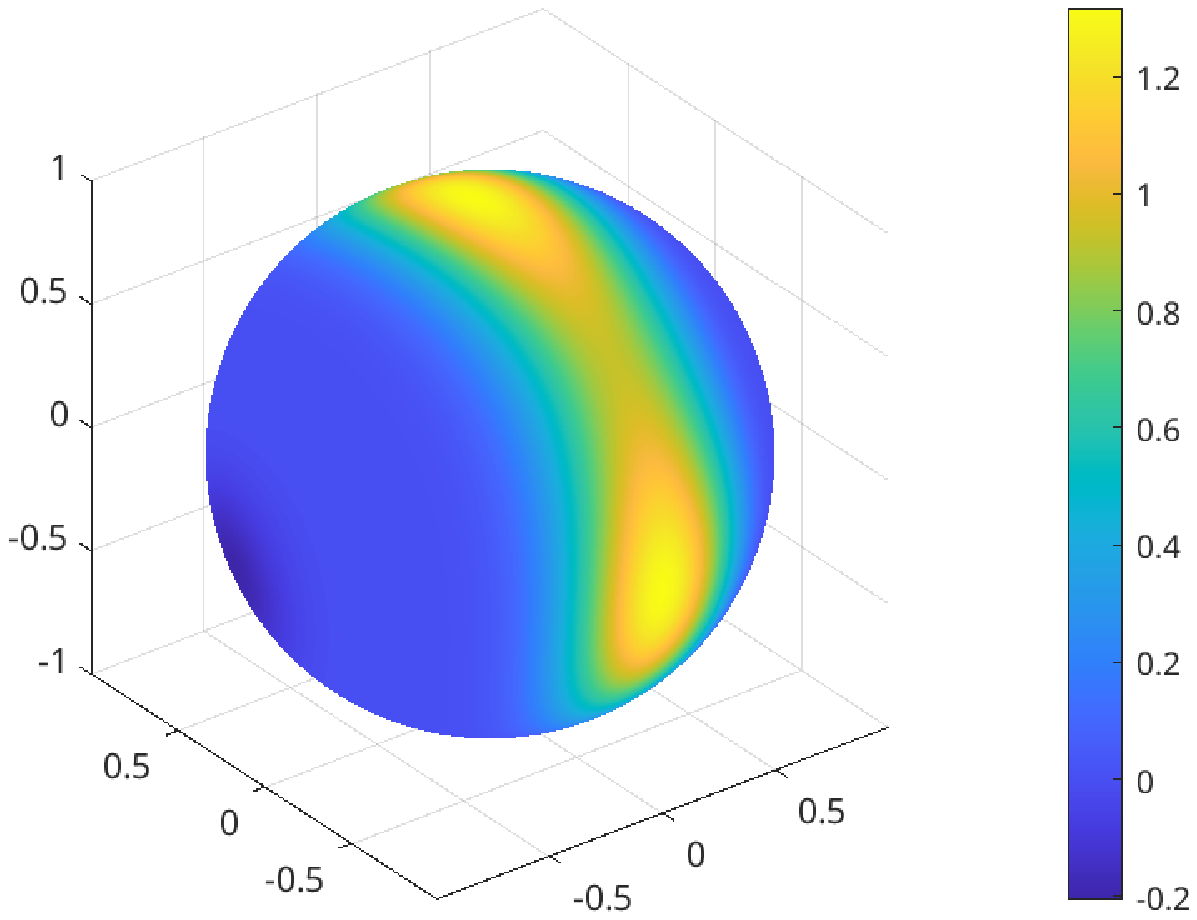}
}
\subfloat[$N=3$, $K=5$]{%
  \includegraphics[width=.33\textwidth]{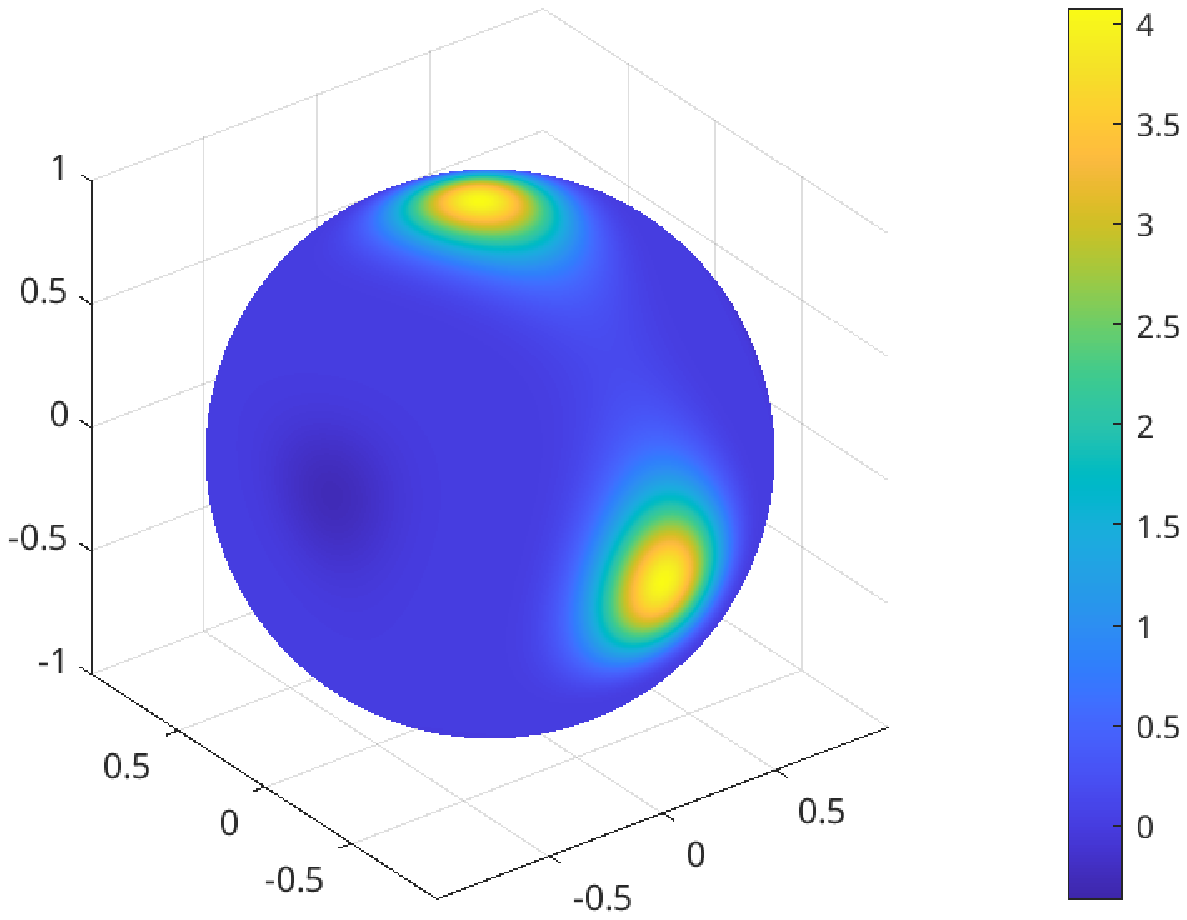}
} \\
\subfloat[$N=1$, $K=9$]{%
  \label{fig:N1K9}
  \includegraphics[width=.33\textwidth]{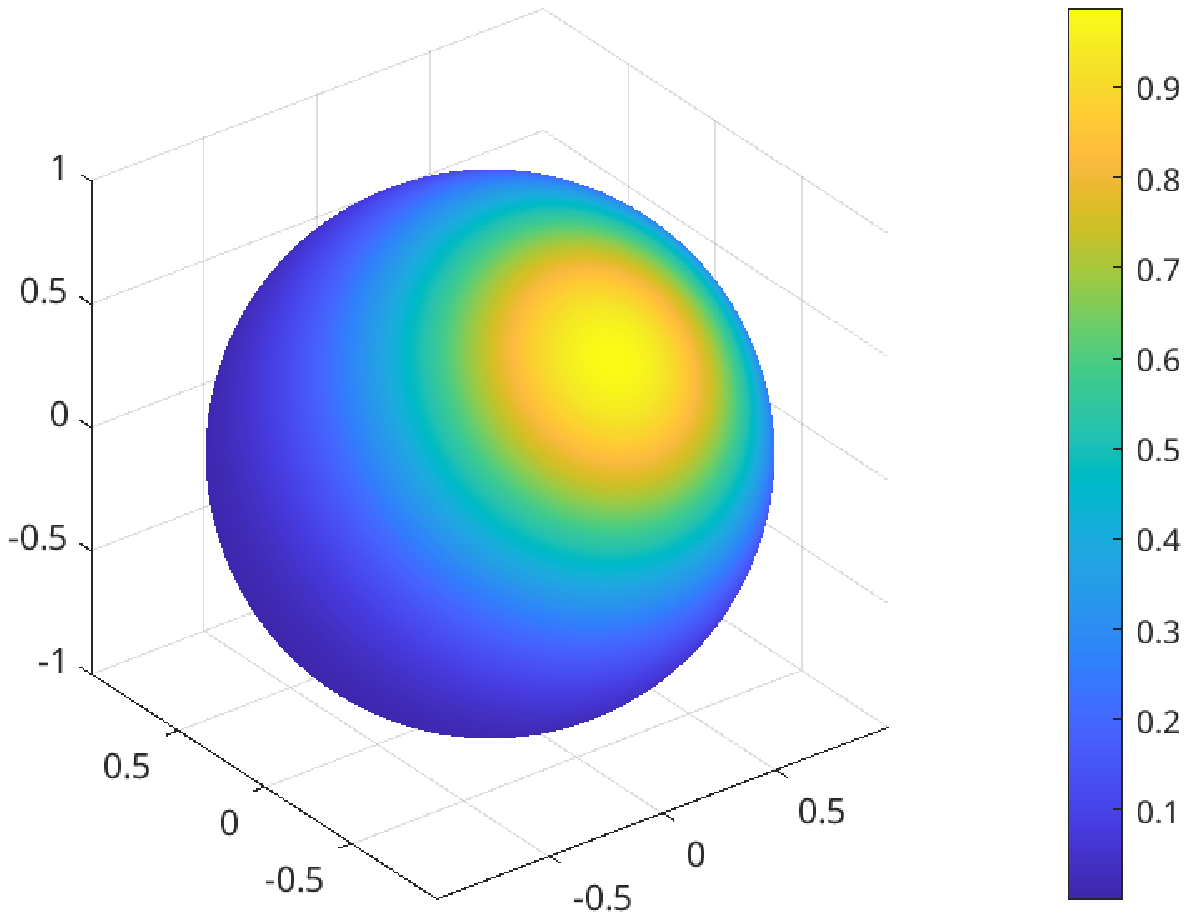}
}
\subfloat[$N=2$, $K=9$]{%
  \includegraphics[width=.33\textwidth]{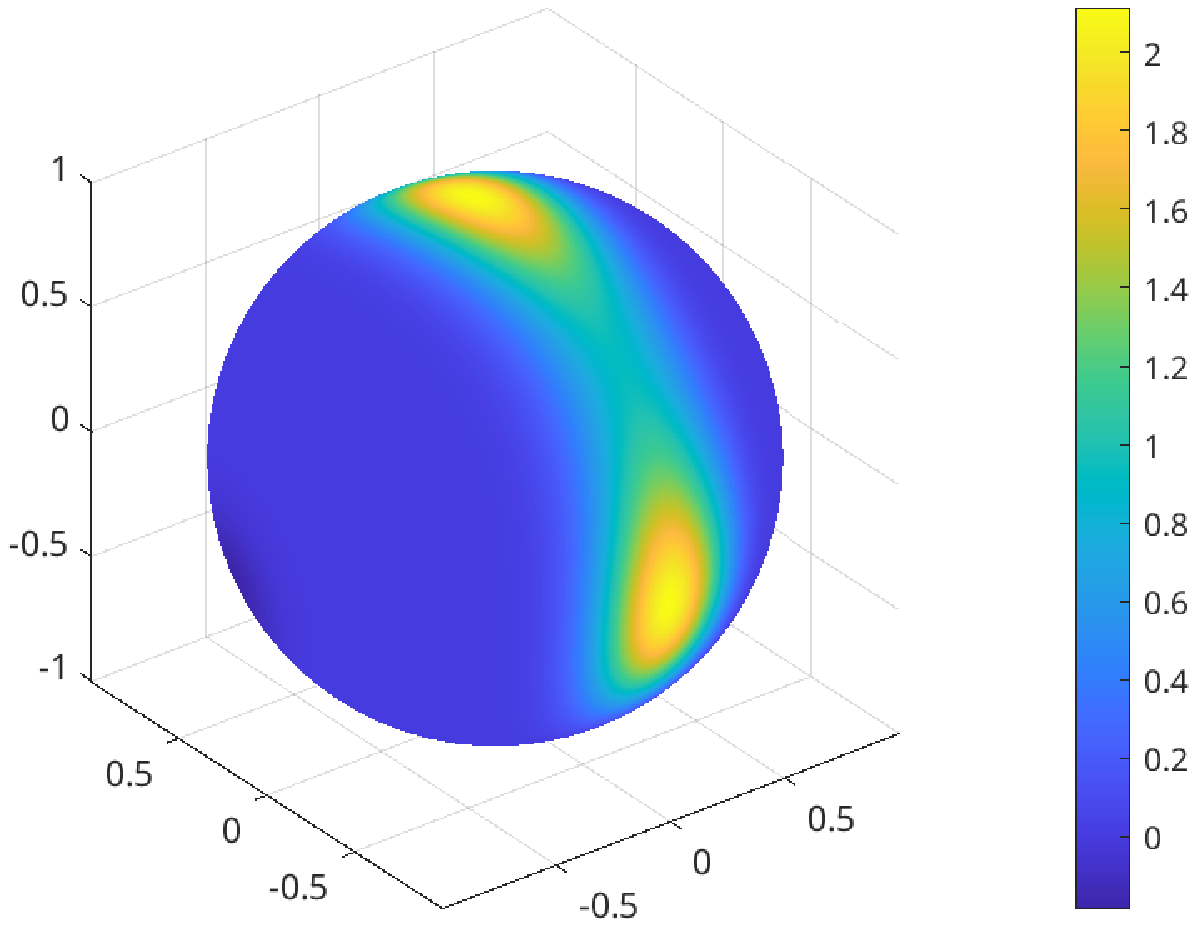}
}
\subfloat[$N=3$, $K=9$]{%
  \includegraphics[width=.33\textwidth]{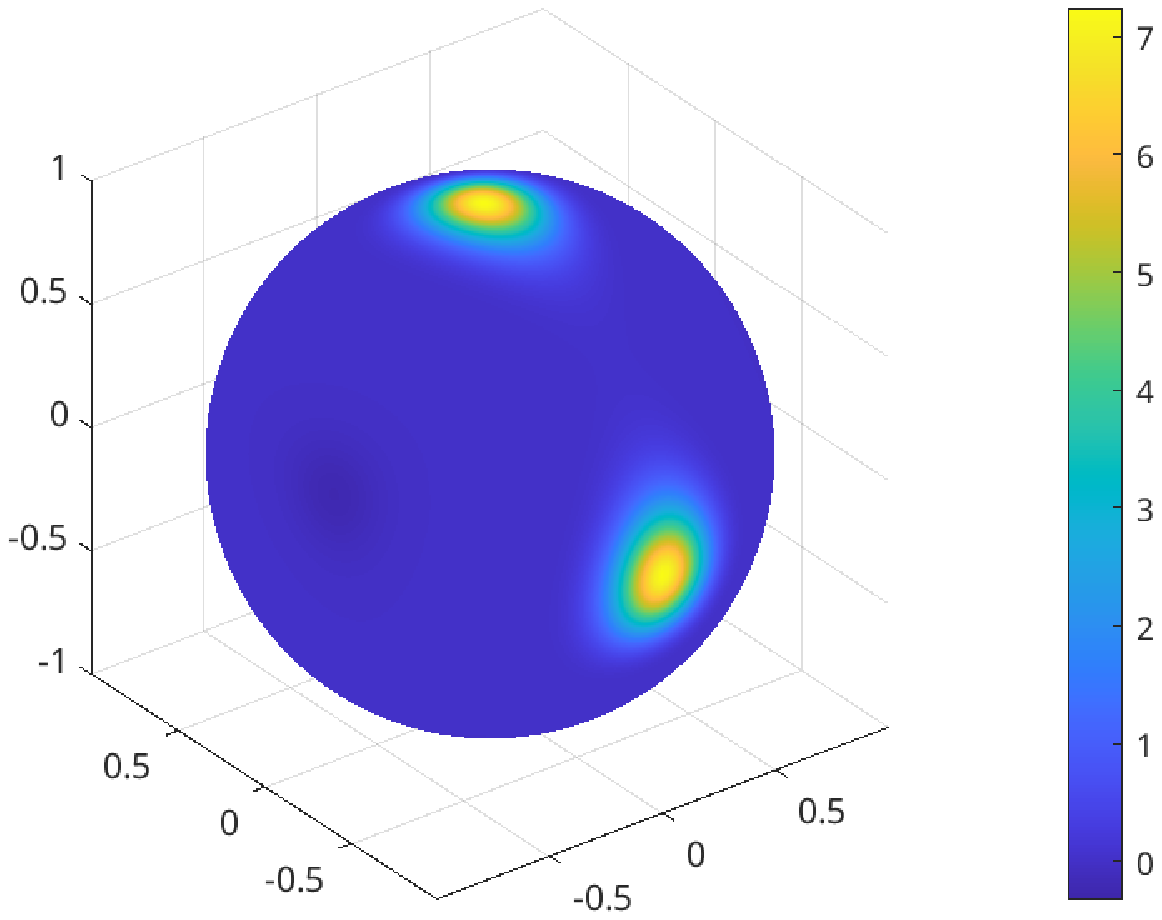}
}
\caption{Approximation to the sum of two Dirac delta functions using the $\beta_{N,K}$ model}
\label{fig:double_Dirac}
\end{figure}

Further increasing $N$ will lead to better approximations. Instead of showing the function plots, we provide the decay of the $H^{-2}$ numerical error. Again, we observe the first-order convergence for all $K = 1,3$ and $5$.
\begin{figure}[!ht]
\centering
\includegraphics[width=.6\textwidth]{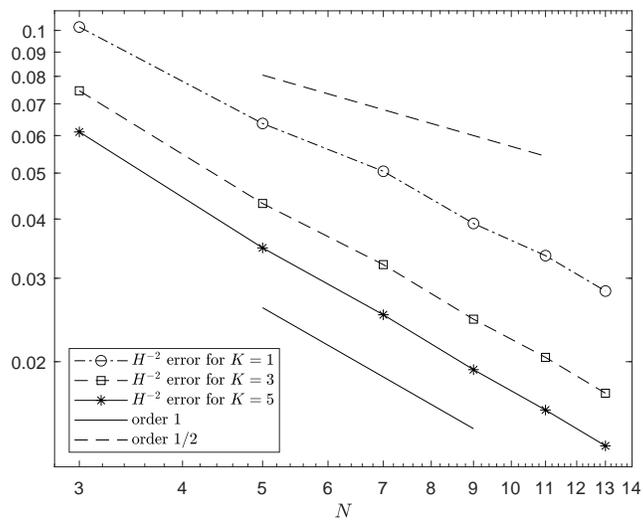}
\caption{Convergence order for the approximation to the sum of two Dirac delta functions}
\label{fig:double_Dirac_error_K}
\end{figure}

The tests for three beams ($n = 3$) show a similar behavior. Here we choose
\begin{displaymath}
\Omega_1 = (0,0,1)^{\top}, \qquad
\Omega_2 = (0,-1,0)^{\top}, \qquad
\Omega_3 = (-1,0,0)^{\top}.
\end{displaymath}
The results are shown in Figure \ref{fig:triple_Dirac}. However, since the $M_2$ model is incapable of representing such a function, the $\beta_{2,K}$ models also fail to produce qualitatively correct results. A sensible approximation requires at least $N = 3$ in the $\beta_{N,K}$ model.
\begin{figure}[!ht]
\centering
\subfloat[$N=1$, $K=5$]{%
  \includegraphics[width=.33\textwidth]{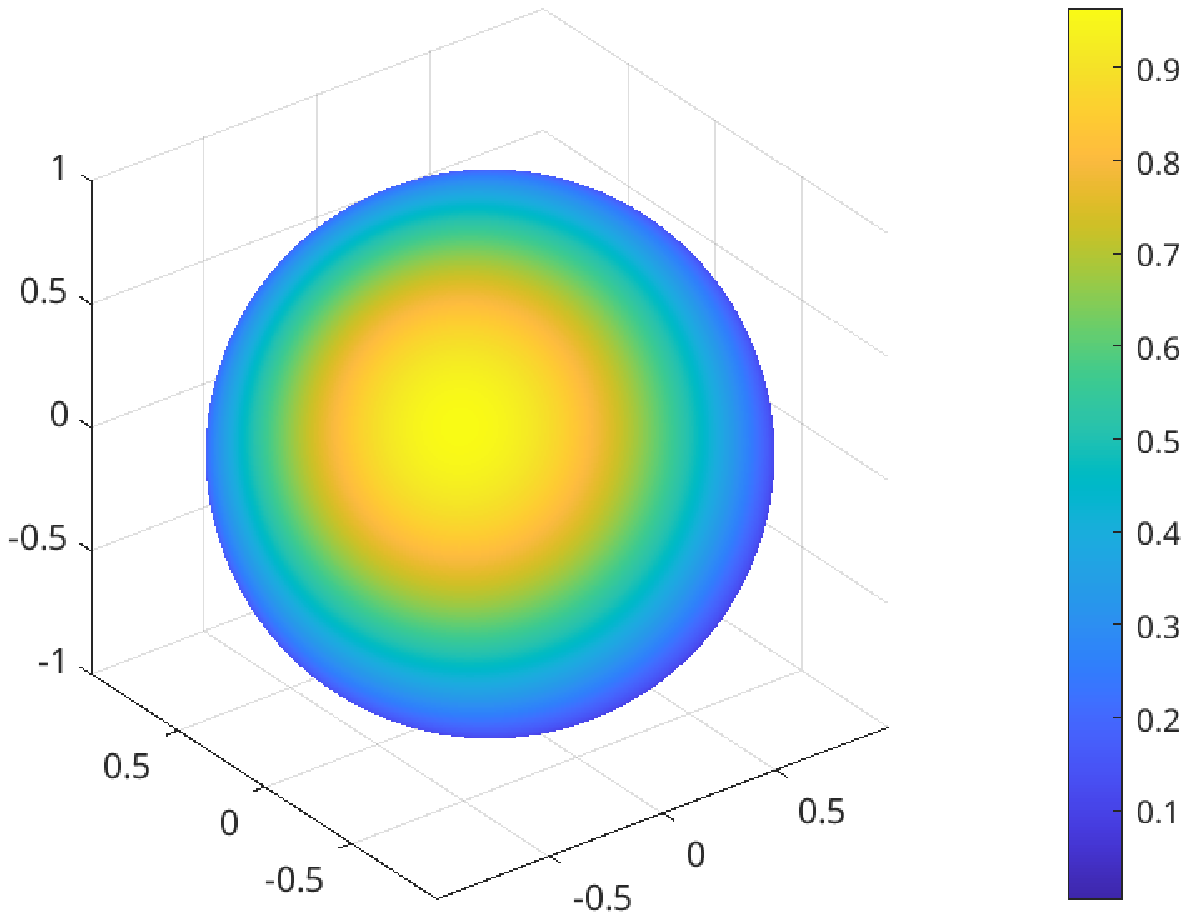}
}
\subfloat[$N=2$, $K=5$]{%
  \includegraphics[width=.33\textwidth]{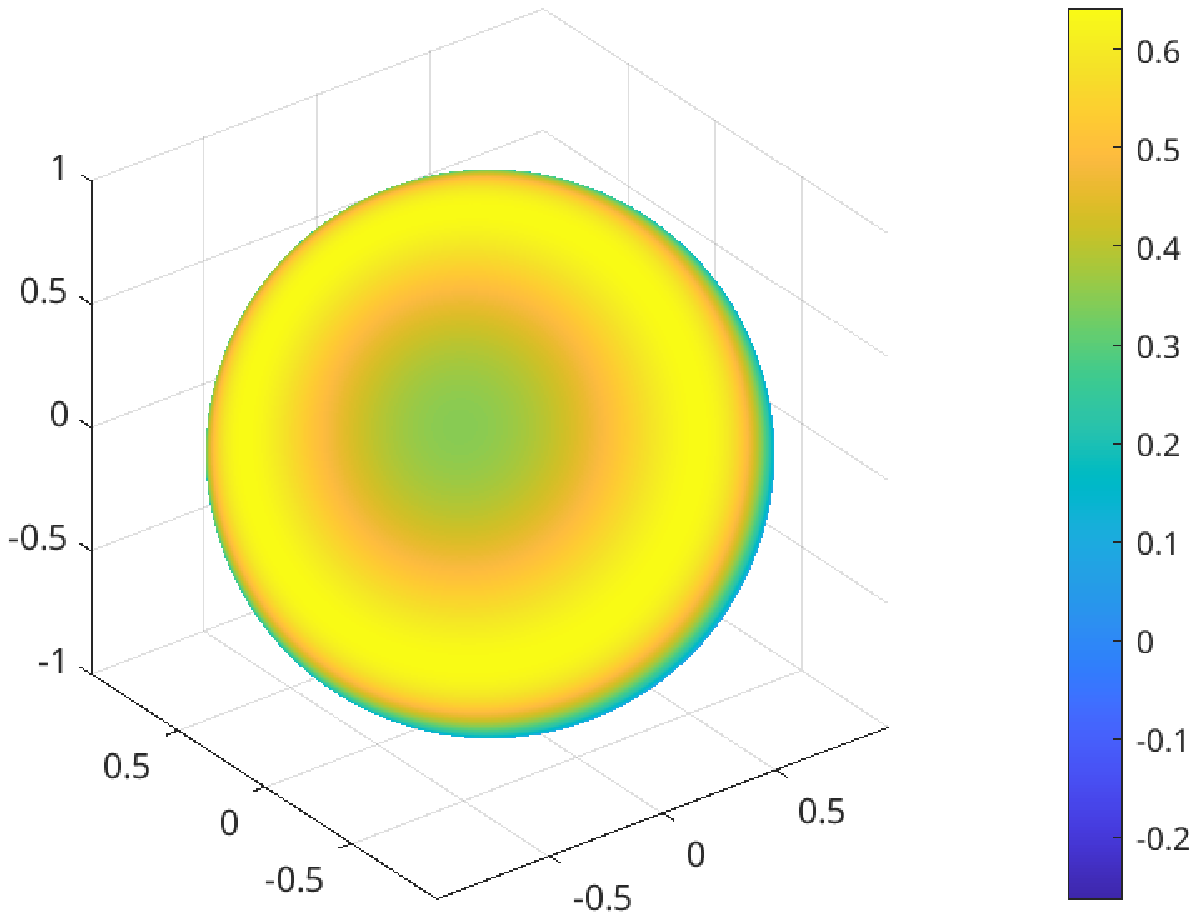}
}
\subfloat[$N=3$, $K=5$]{%
  \includegraphics[width=.33\textwidth]{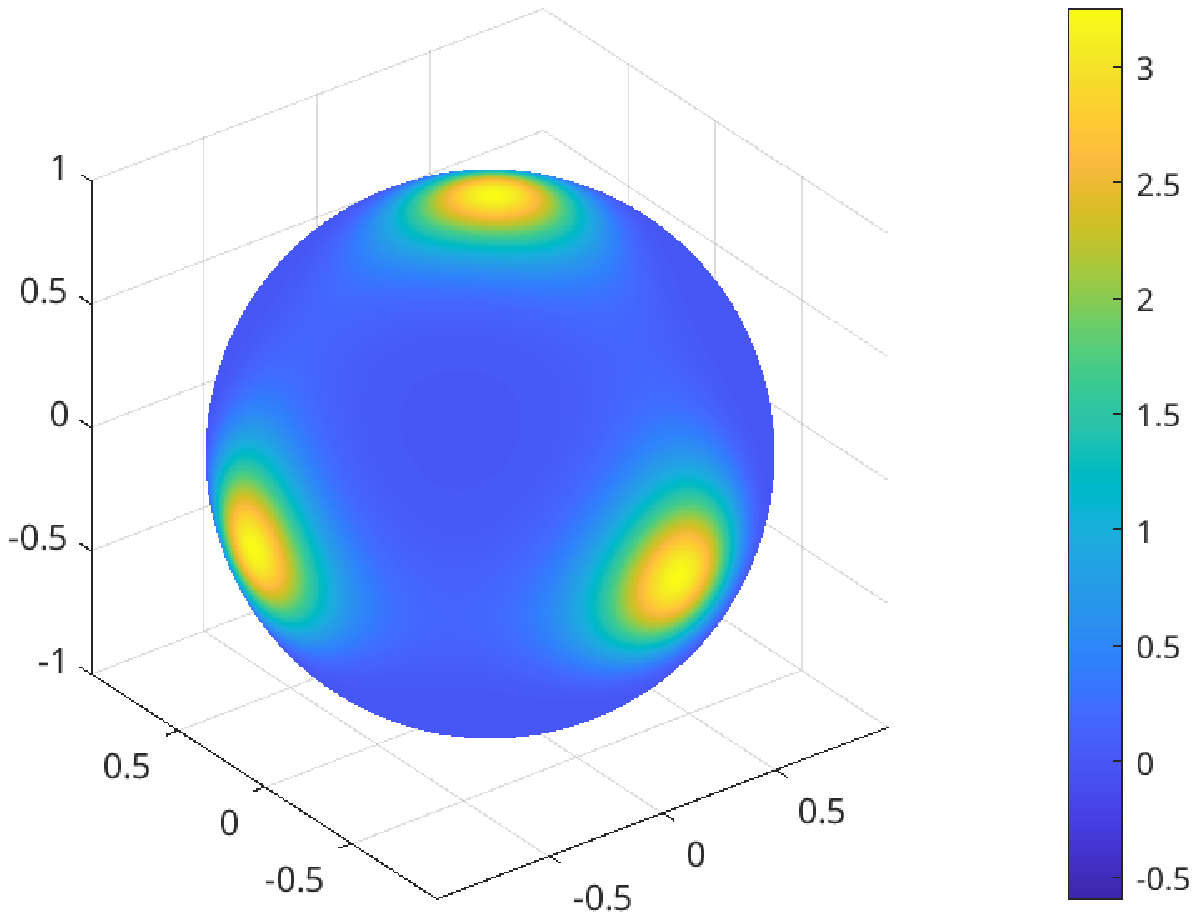}
} \\
\subfloat[$N=1$, $K=9$]{%
  \includegraphics[width=.33\textwidth]{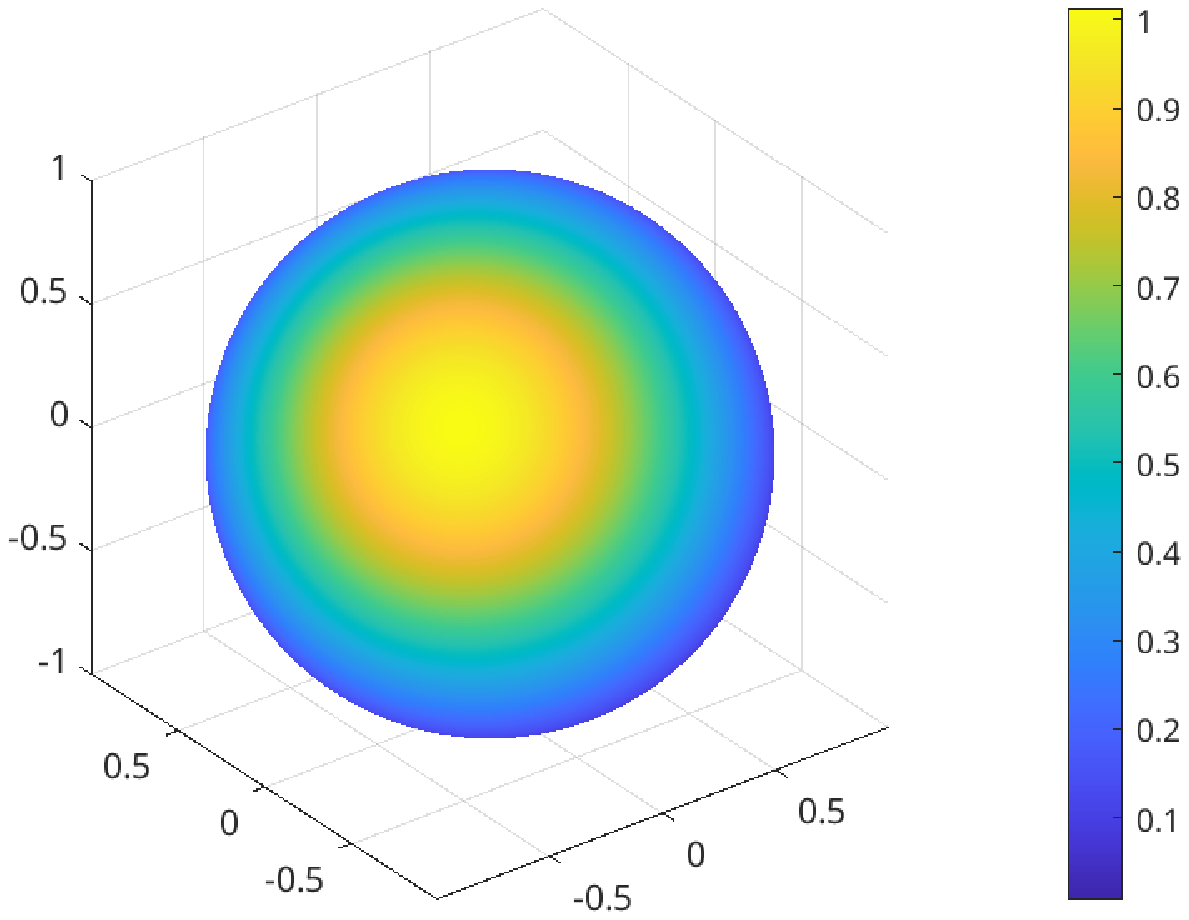}
}
\subfloat[$N=2$, $K=9$]{%
  \includegraphics[width=.33\textwidth]{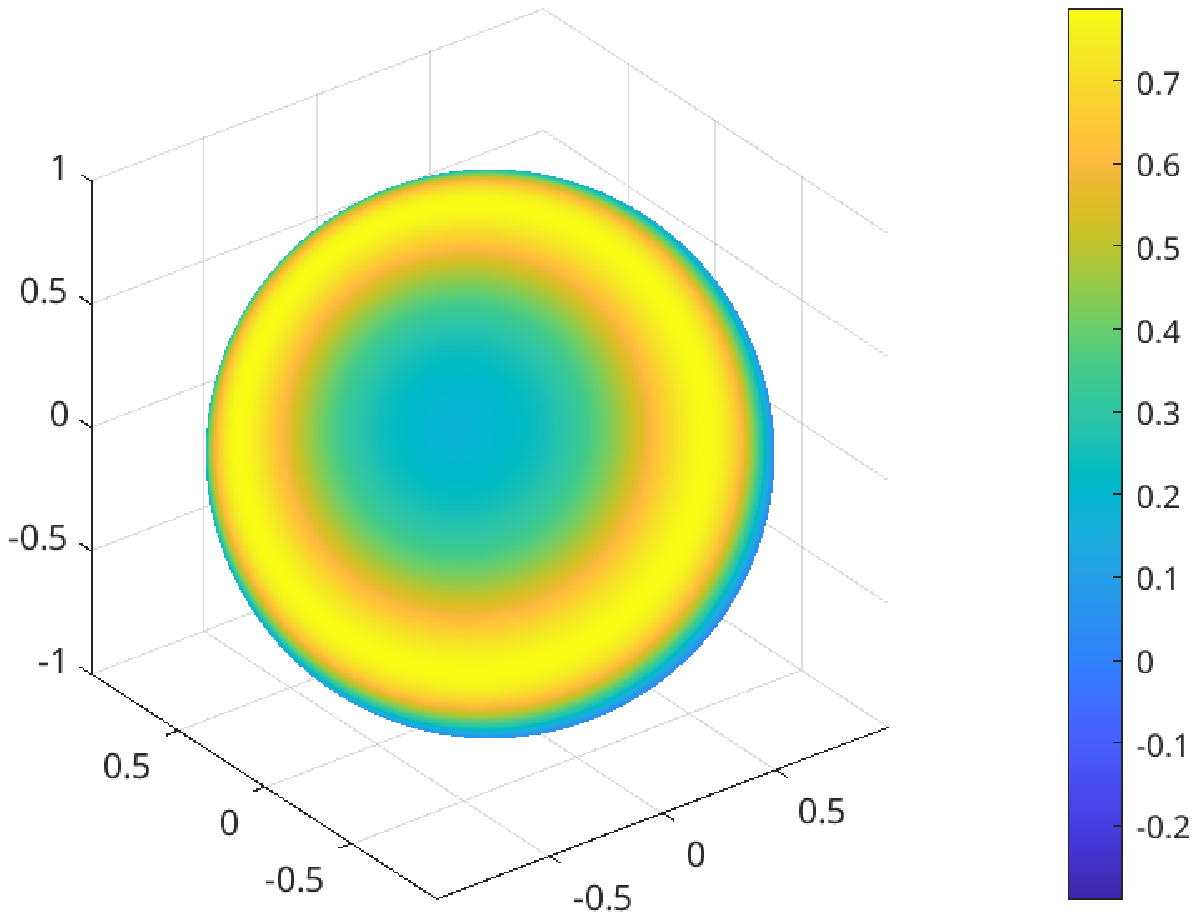}
}
\subfloat[$N=3$, $K=9$]{%
  \includegraphics[width=.33\textwidth]{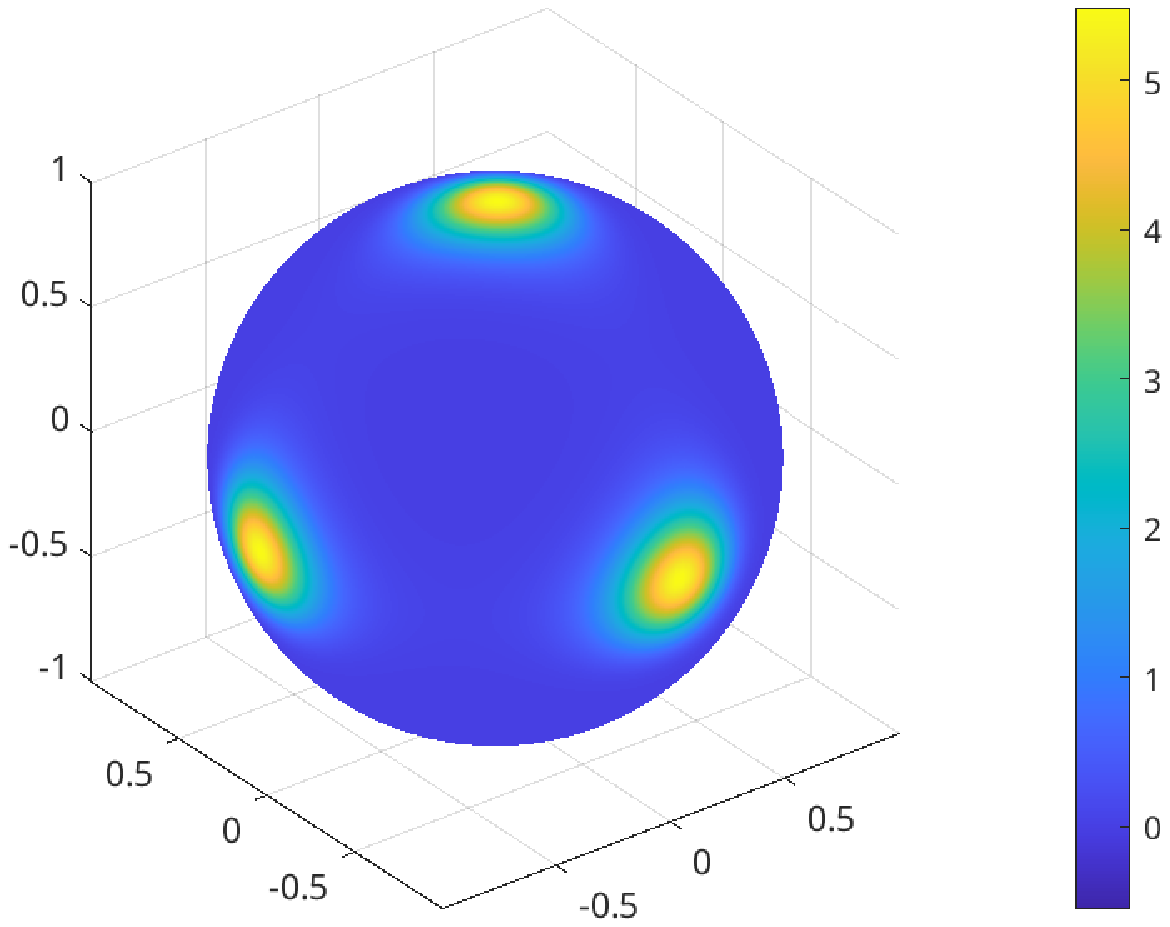}
}
\caption{Approximation to the sum of three Dirac delta functions using the $\beta_{N,K}$ model}
\label{fig:triple_Dirac}
\end{figure}

\subsubsection{Approximation of smooth functions}
In this section, we consider the approximation of the following smooth intensity function:
\begin{equation} \label{eq:smooth_test}
I(\Omega) = \sum_{k=1}^6 \exp(-5\|\Omega - \Omega_k\|^2),
\end{equation}
where
\begin{gather*}
  \Omega_1 = (1,0,0)^{\top}, \qquad \Omega_2 = (0,1,0)^{\top}, \qquad \Omega_3 = (0,0,1)^{\top}, \\
  \Omega_4 = (-1,0,0)^{\top}, \qquad \Omega_5 = (0,-1,0)^{\top}, \qquad \Omega_6 = (0,0,-1)^{\top}.
\end{gather*}
The original function and the approximations with the $\beta_{9,1}$ and $\beta_{5,3}$ models are given in Figure \ref{fig:smooth}, which shows that the $\beta_{5,3}$ model gives better approximation than the $P_5$ model. However, when $N$ further increases, the $P_N$ model will start overtaking. The results are shown in Figure \ref{fig:smooth_convergence}, where the $P_N$ model shows a considerably faster convergence rate than the other two models, although the $\beta_{N,3}$ model and the $\beta_{N,5}$ model also show spectral accuracy due to the logarithmic scale of the vertical axis. Meanwhile, it can be seen that the error of the $\beta_{N,5}$ model is slightly larger than the $\beta_{N,3}$ model. One possible reason of this phenomenon is that the $\beta_{N,K}$ model essentially approximates the function $[I(\Omega)]^{1/K}$ by a polynomial of degree $N$. When $K > 1$, the function $x^{1/K}$ has large derivatives for $x$ close to zero, so that the function $[I(\Omega)]^{1/K}$ may be difficult to approximate at places where the value of $I(\Omega)$ is small.

\begin{figure}
\centering
\subfloat[Original function]{%
  \includegraphics[width=.33\textwidth]{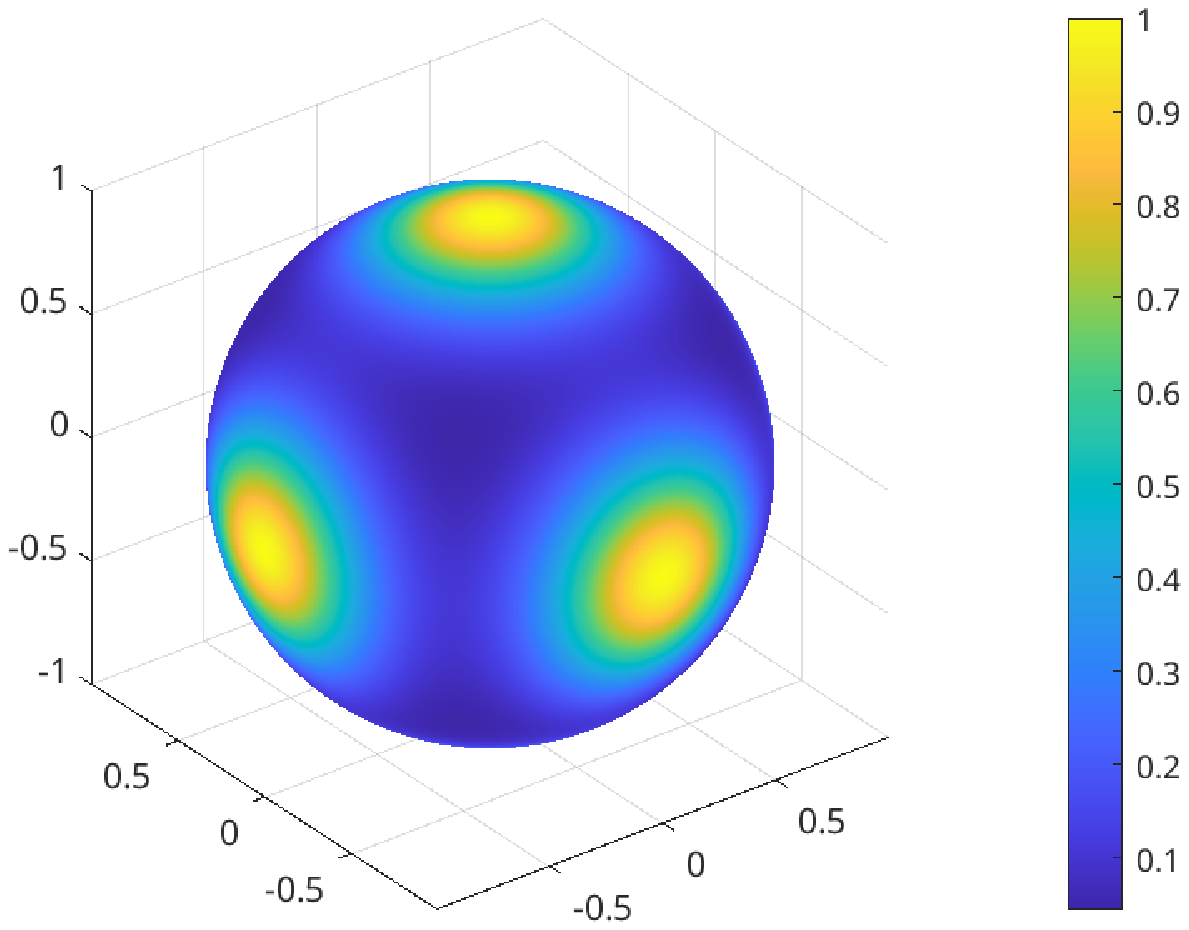}
}
\subfloat[$\beta_{5,1}$ ($P_5$) approximation]{%
  \includegraphics[width=.33\textwidth]{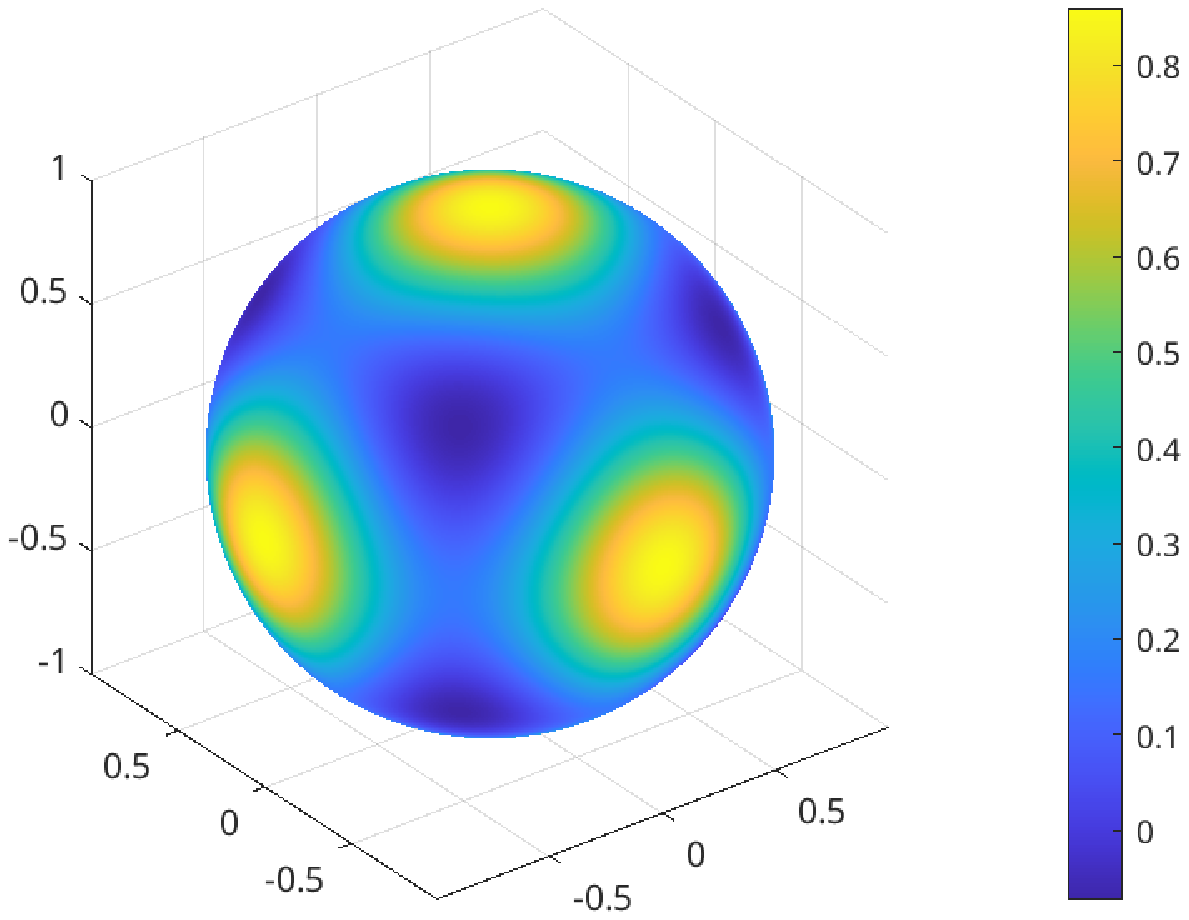}
}
\subfloat[$\beta_{5,3}$ approximation]{%
  \includegraphics[width=.33\textwidth]{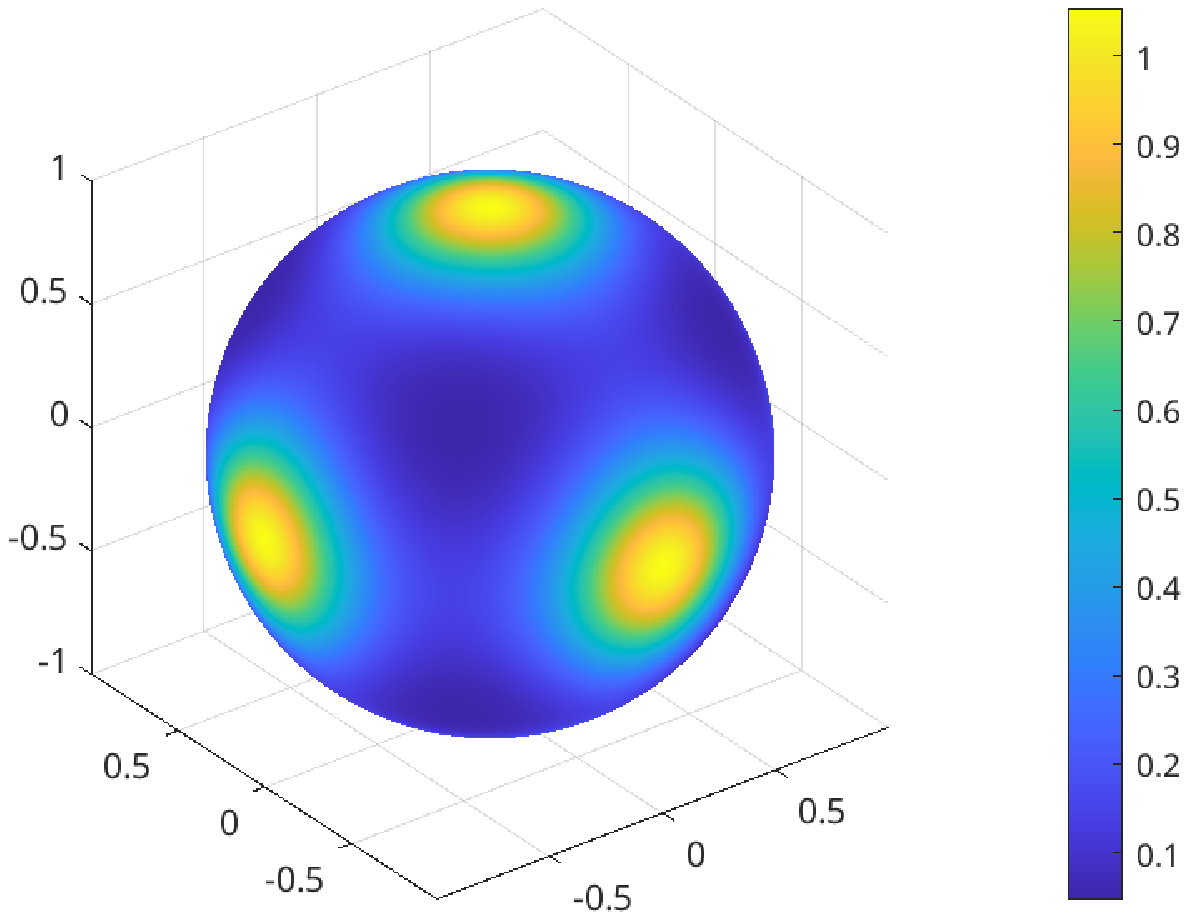}
}  
\caption{Approximation of the intensity function \eqref{eq:smooth_test}}
\label{fig:smooth}
\end{figure}

\begin{figure}[!ht]
\centering
\includegraphics[width=.6\textwidth]{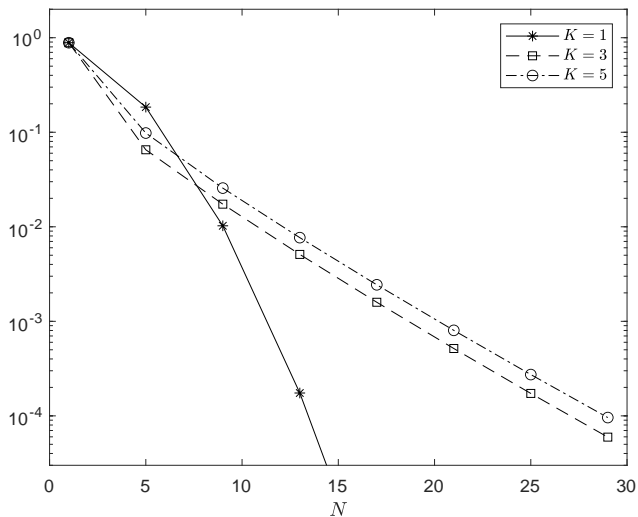}
\caption{$L^2$ error of the approximation to the intensity function \eqref{eq:smooth_test} with the $\beta_{N,K}$ model}
\label{fig:smooth_convergence}
\end{figure}

To verify this conjecture, we consider the approximation of a slightly different function
\begin{equation} \label{eq:smooth_test1}
I(\Omega) = \frac{1}{2} + \sum_{k=1}^6 \exp(-5\|\Omega - \Omega_k\|^2),
\end{equation}
which adds $1/2$ to the intensity function \eqref{eq:smooth_test}, so that the ratio of its maximum value to its minimum value is much smaller. The convergence result is plotted in Figure \ref{fig:smooth_convergence1}. It shows that the three models still have similar performance when $N = 13$, and the $\beta_{N,3}$ and $\beta_{N,5}$ models can reach a much smaller $L^2$ error compared with the previous example. Nevertheless, the $P_N$ model still shows much better results from $N = 17$ due to the large high-order derivatives of the function $x^{1/K}$.
\begin{figure}[!ht]
\centering
\includegraphics[width=.6\textwidth]{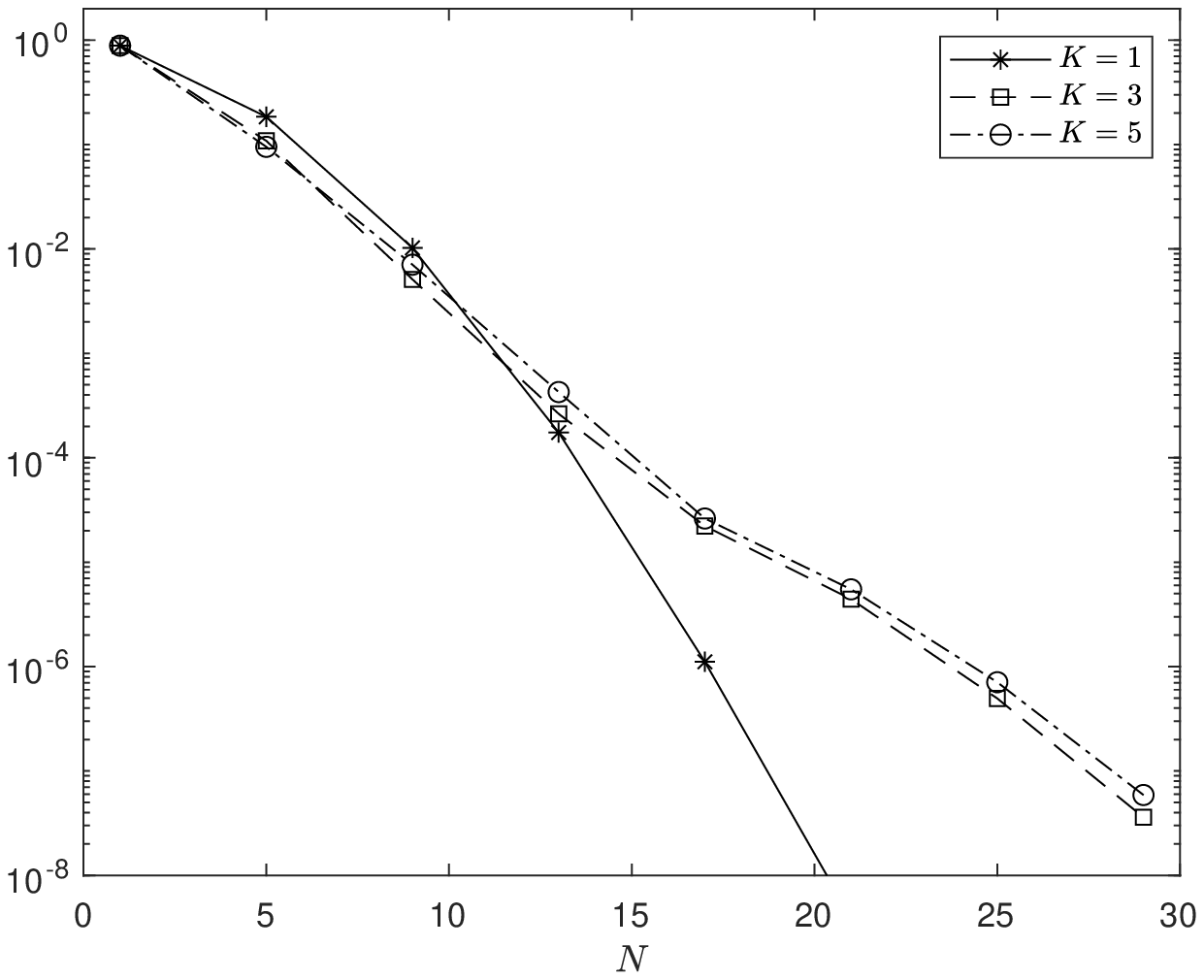}
\caption{$L^2$ error of the approximation to the intensity function \eqref{eq:smooth_test1} with the $\beta_{N,K}$ model}
\label{fig:smooth_convergence1}
\end{figure}

\subsection{Line source problem}
We now test the performance of the $\beta_{N,K}$ model on the benchmark line source problem, which is often used to demonstrate the ray effect in the discrete ordinates method  and test the capability of moment methods \cite{Laiu2016, Camminady2019}. Consider the two-dimensional spatial domain with the initial condition
\begin{equation} \label{eq:init}
I(x,y,\Omega,0) = \frac{1}{8\pi^2 \omega^2} \exp \left( -\frac{x^2 + y^2}{2\omega^2} \right),
\end{equation}
which simulates a point source at the origin. The problem is highly challenging due to the beam-like solutions in all directions. A variety of moment methods have been tested on this test problem in \cite{Garrett2013comparison}. Following \cite{Garrett2013comparison}, we choose $\omega = 0.03$ and set the scattering coefficient to be $\sigma = 1$. The Green's function of this PDE has been obtained in \cite{Ganapol1999homogeneous}, so that the solution of the initial value problem can be found by convoluting the Green's function and the initial data. The exact solution of the density (the integral of $I(x,y,\Omega,t)$ with respect to $\Omega$) for the initial condition \eqref{eq:init} at $t = 1$ is plotted in Figure \ref{fig:LineSource_exact2d}. Due to the radial symmetry of the initial condition, the solution for any $t$ also depends only on the radius $r = \sqrt{x^2 + y^2}$. The density as a function of $r$ is plotted in Figure \ref{fig:LineSource_exact1d}.
\begin{figure}[!ht]
\centering
\subfloat[2D plot]{%
  \label{fig:LineSource_exact2d}
  \includegraphics[width=.4\textwidth]{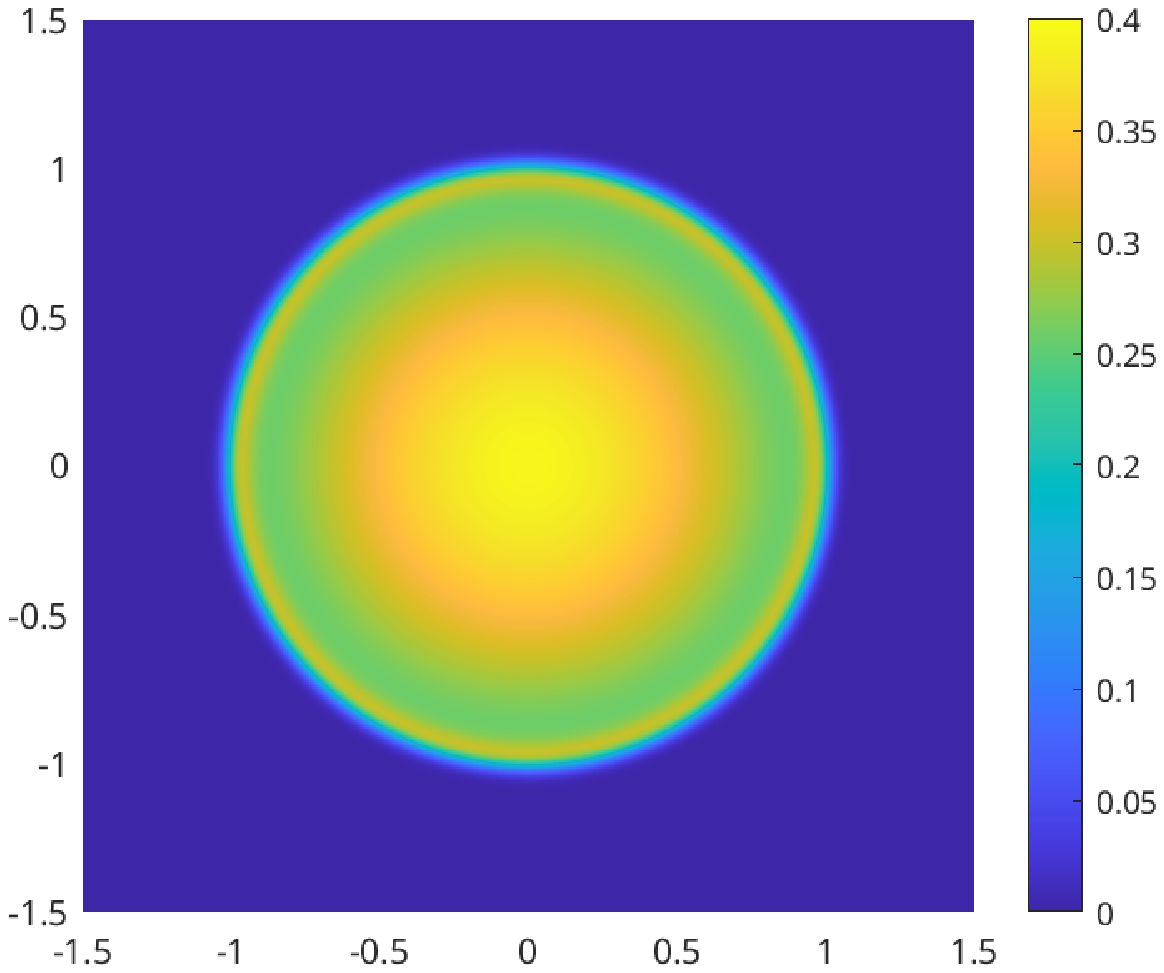}
}
\subfloat[Radial plot]{%
  \label{fig:LineSource_exact1d}
  \includegraphics[width=.4\textwidth]{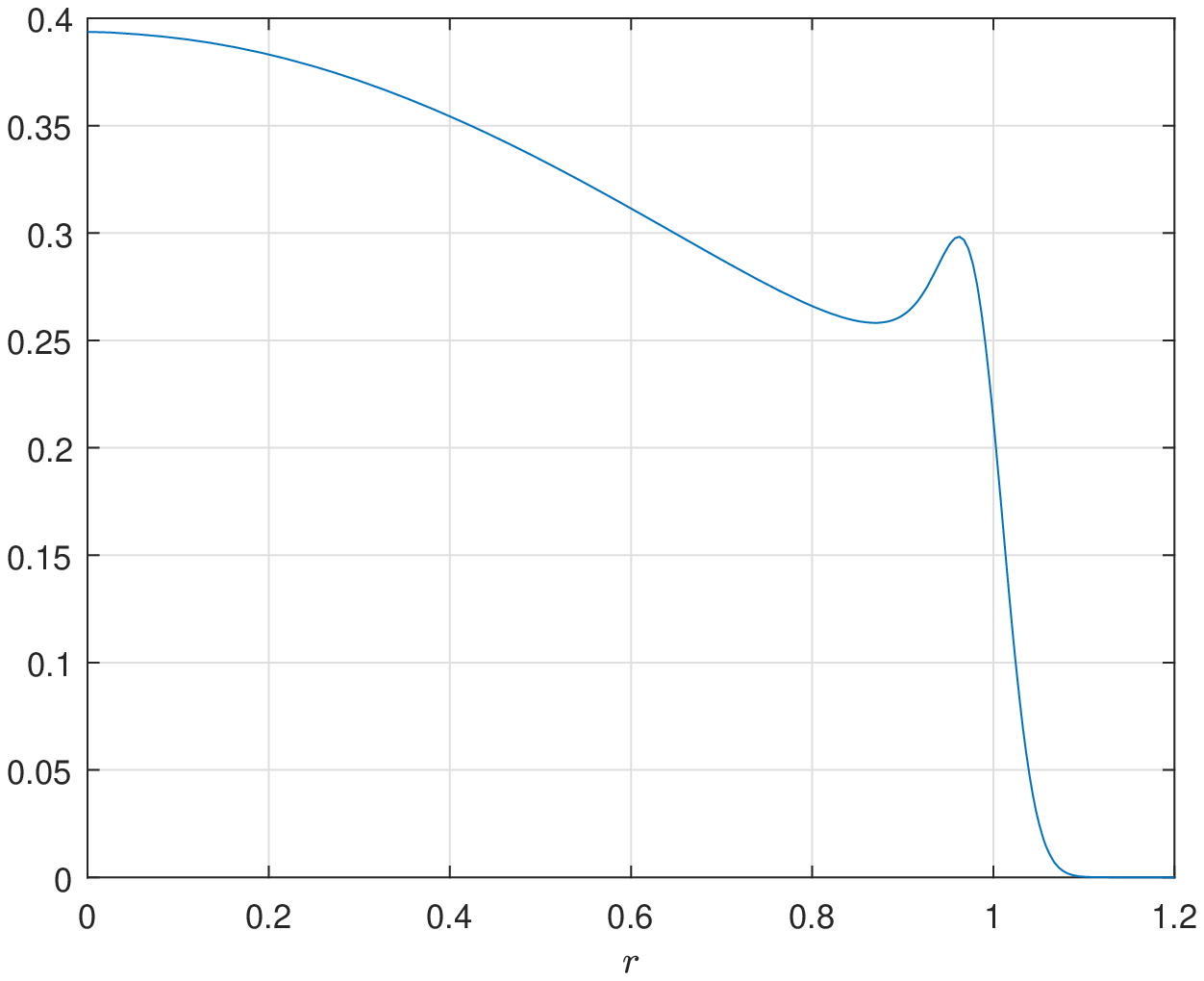}
}
\caption{Exact solution of the density for the line source problem at $t=1$.}
\label{fig:LineSource_exact}
\end{figure}

Since the intensity function decays exponentially, it suffices to set the computational domain to be $[-1.5, 1.5] \times [-1.5, 1.5]$. A uniform grid with $400 \times 400$ grid cells is used to discretize the spatial domain, and the boundary conditions are simulated using the ghost-cell method with all the moments set to be zero in the ghost cells. When solving the moment inversion problem, the Lagrange multiplier solved in the previous time step is used as the initial value of Newton's iteration. More details of the numerical method can be found in Section \ref{sec:num_meth}. We terminate the computation at $t = 1$.

The $P_N$ and $M_N$ results can both be found in the reference \cite{Garrett2013comparison}, where it shows that the $P_N$ method has strong oscillations at $N = 11$. As for the $M_N$ method, although the oscillation still exists, it is much milder. However, due to the inexact numerical integration in the implementation of the $M_N$ method, ray effect can still be observed in the 2D plots. Here we expect that the results of $\beta_{N,K}$ model should have oscillations with amplitude between the $P_N$ model and the $M_N$ model, and the ray effect can be completely eliminated since the polynomials can be integrated exactly on the sphere.

The results of the density for some $\beta_{5,K}$ models at $t = 1$ is plotted in Figure \ref{fig:LineSourceN5}. The radial symmetry is generally well preserved, although some numerical artifacts leading to slight asymmetry can still be observed due to the square grid cells. Note that when $K$ increases, the numerical solution does not converge to the exact solution. Instead, we expect convergence towards the $M_5$ model. The general wave structure for the $M_5$ model has already formed at $K = 7$, which does not change much at $K = 11$. Figure \ref{fig:LineSourceK5} shows some results for the $\beta_{N,5}$ models. This time we expect convergence to the exact solution as $N$ increases. When $N$ increases, the number of oscillations is larger, while their amplitudes get smaller. By comparing our results with the $P_N$ and $M_N$ models shown in \cite{Garrett2013comparison}, we see that the $\beta_{N,5}$ results are closer to $M_N$ than $P_N$, and the strong oscillations in the $P_N$ results have been remarkably suppressed.

\begin{figure}[!ht]
\centering
\subfloat[$N=5$, $K=3$]{%
  \includegraphics[width=.33\textwidth, bb=50 0 551 421, clip]{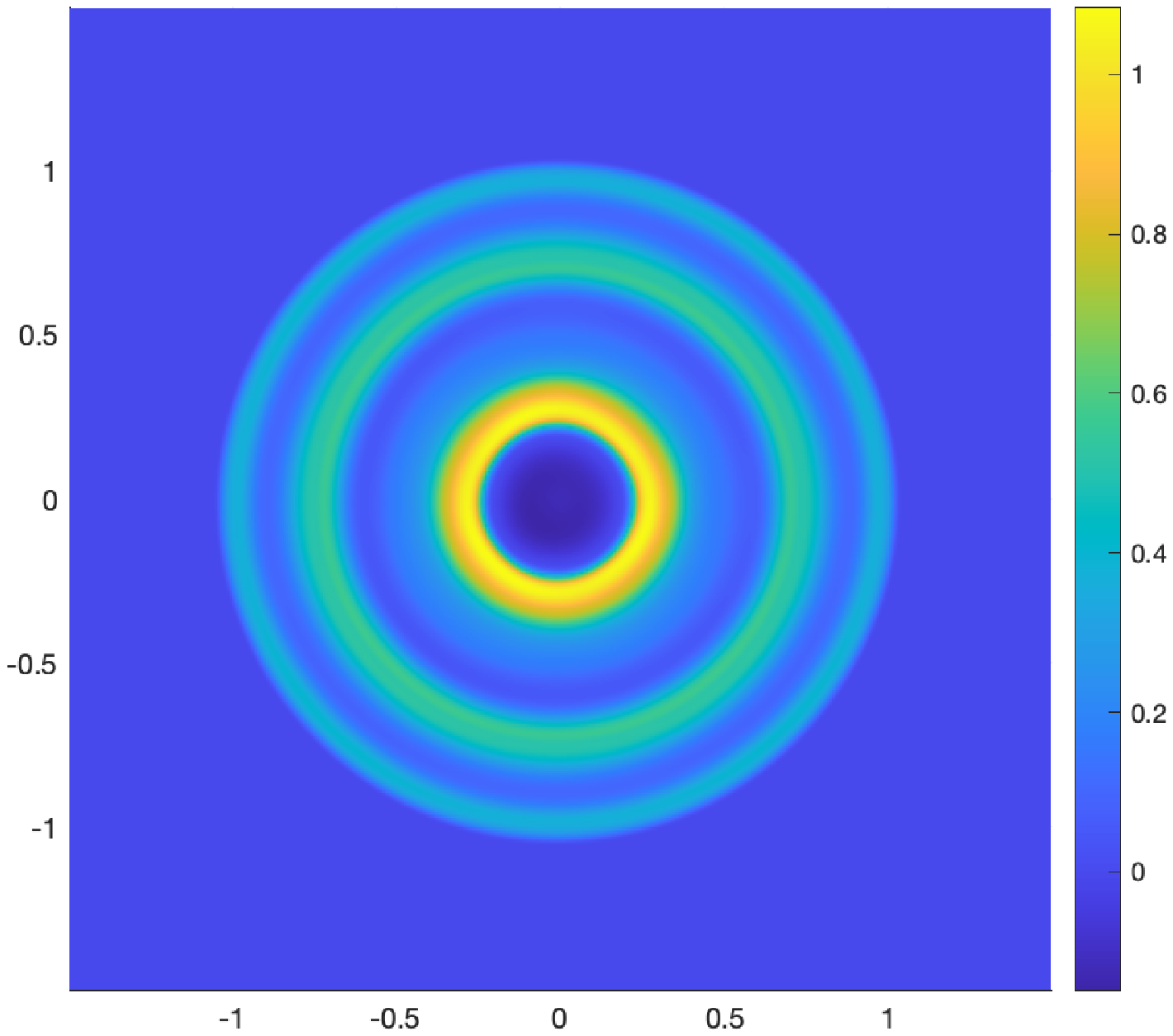}
}
\subfloat[$N=5$, $K=7$]{%
  \includegraphics[width=.33\textwidth, bb=50 0 551 421, clip]{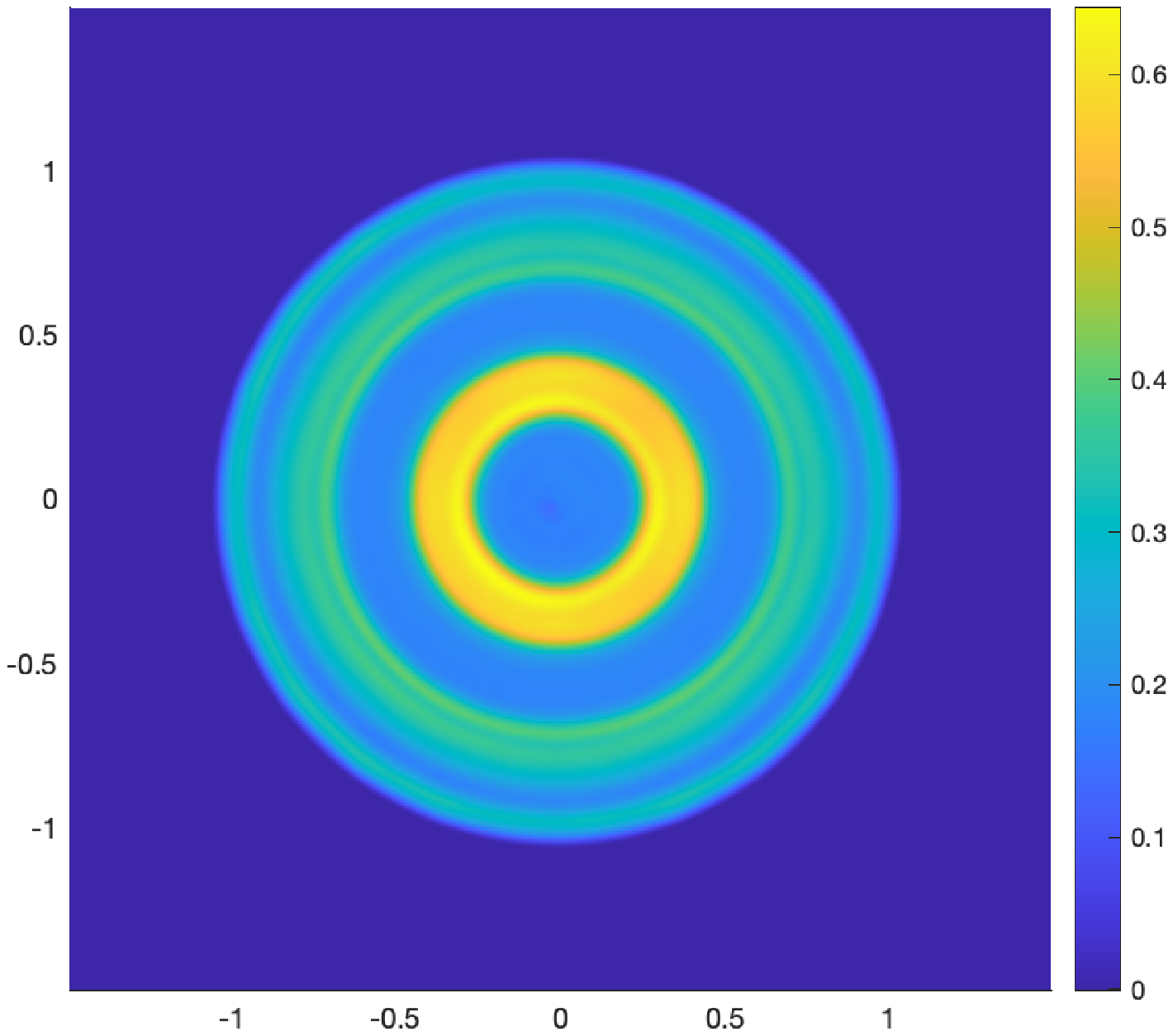}
}
\subfloat[$N=5$, $K=11$]{%
  \includegraphics[width=.33\textwidth, bb=50 0 551 421, clip]{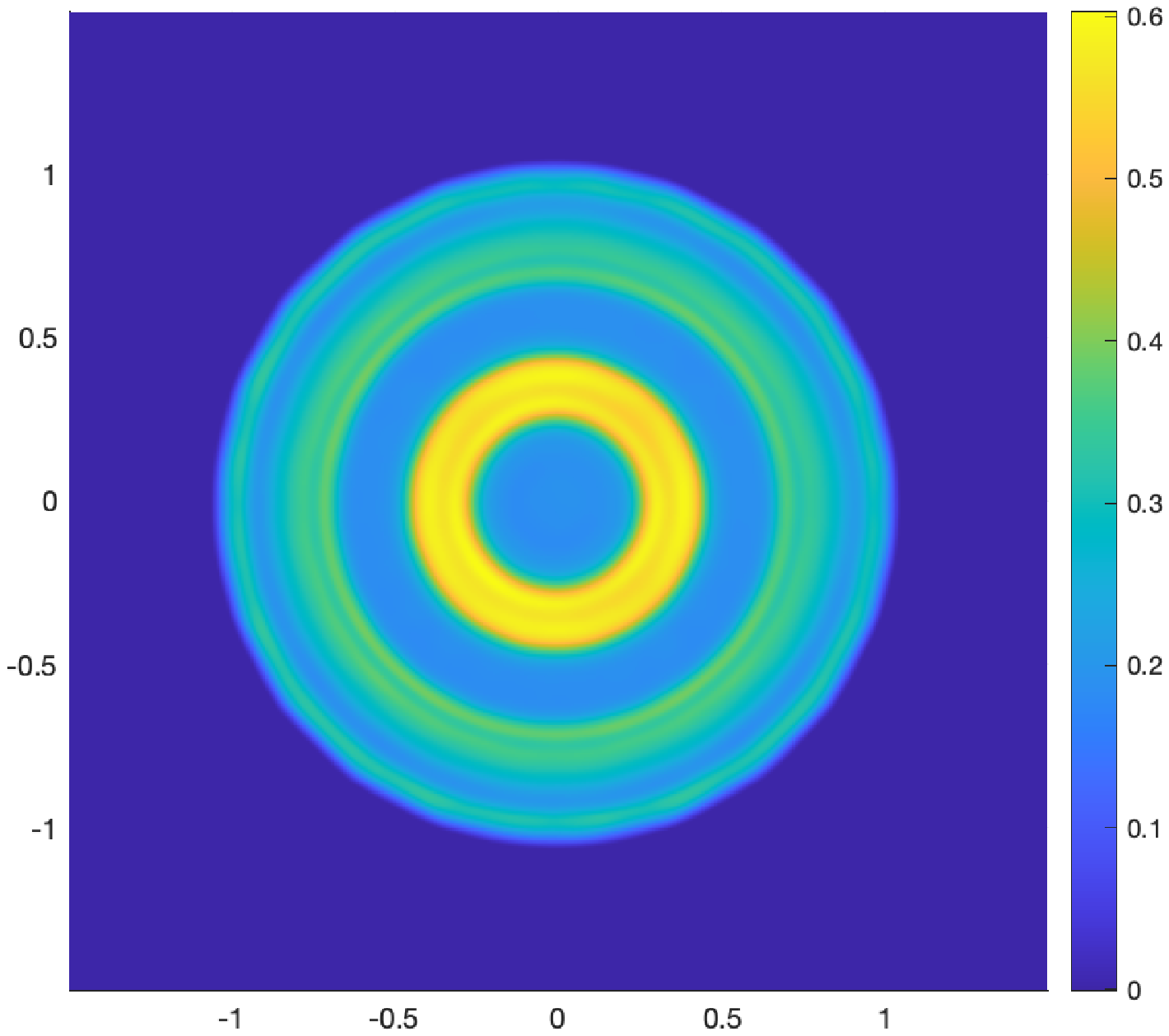}
}
\caption{Solutions of the line source problem for $\beta_{5,K}$ models.}
\label{fig:LineSourceN5}
\end{figure}

\begin{figure}[!ht]
\centering
\subfloat[$N=3$, $K=5$]{%
  \includegraphics[width=.33\textwidth, bb=50 0 551 421, clip]{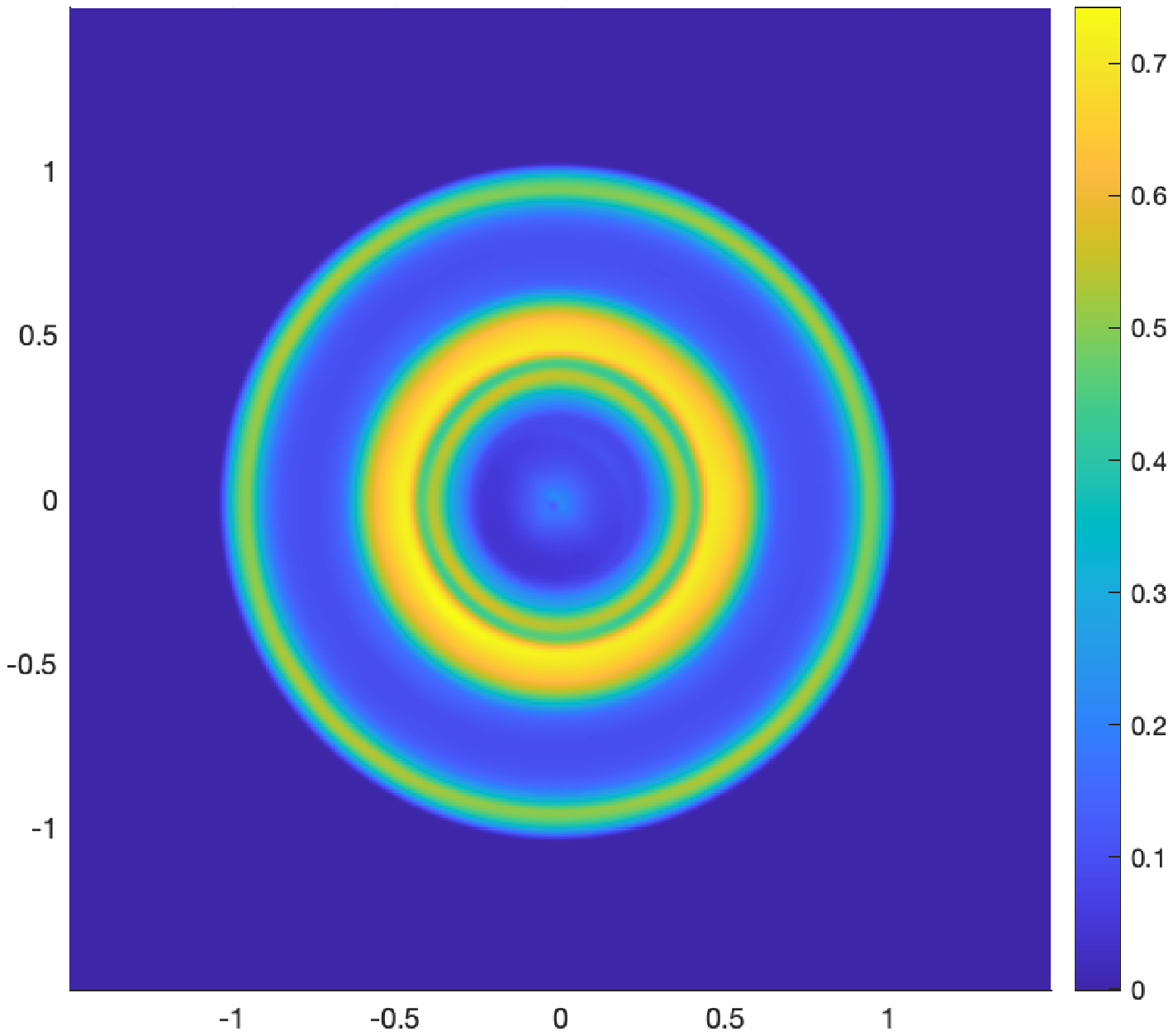}
}
\subfloat[$N=7$, $K=5$]{%
  \includegraphics[width=.33\textwidth, bb=50 0 551 421, clip]{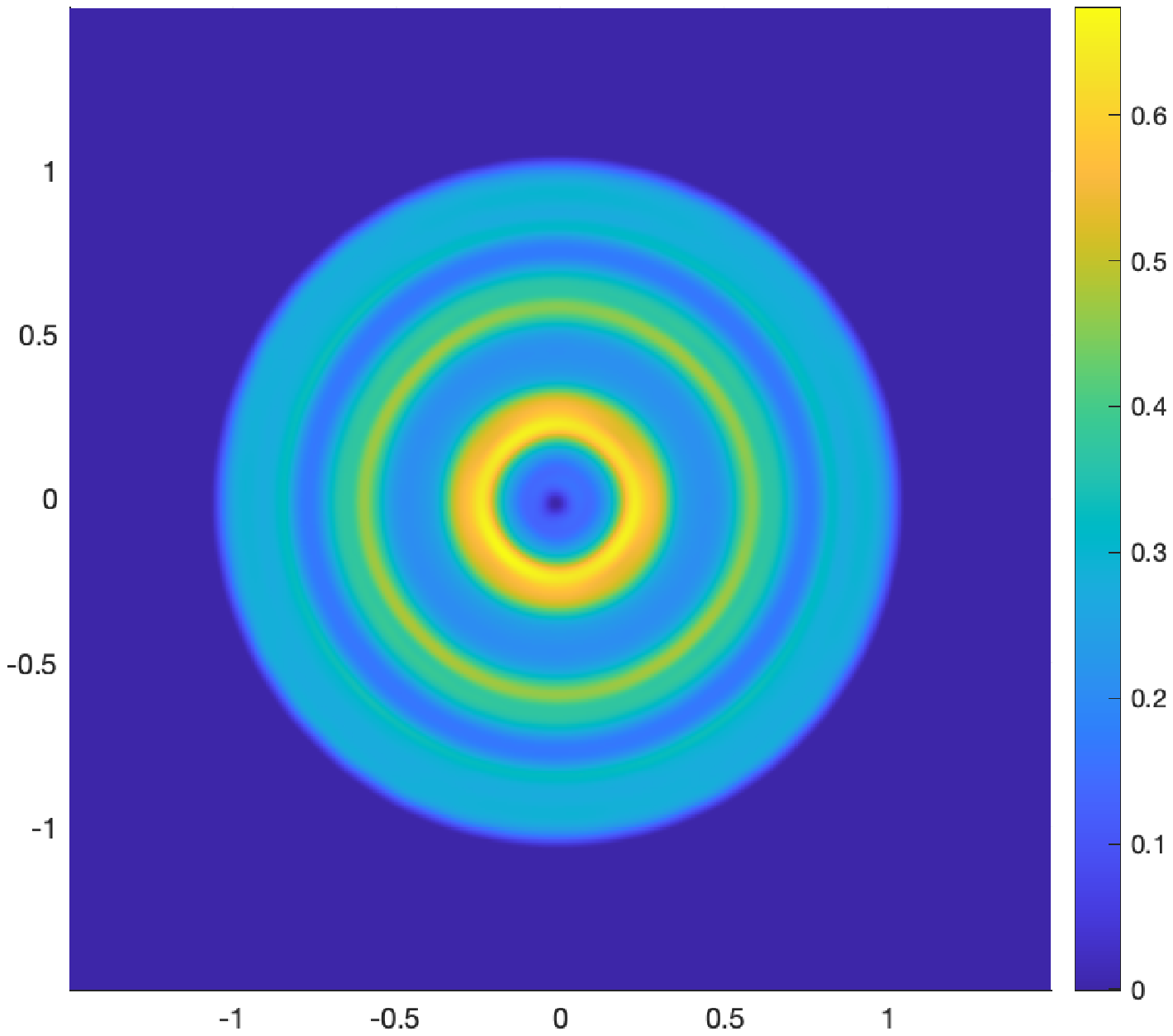}
}
\subfloat[$N=11$, $K=5$]{%
  \includegraphics[width=.33\textwidth, bb=50 0 551 421, clip]{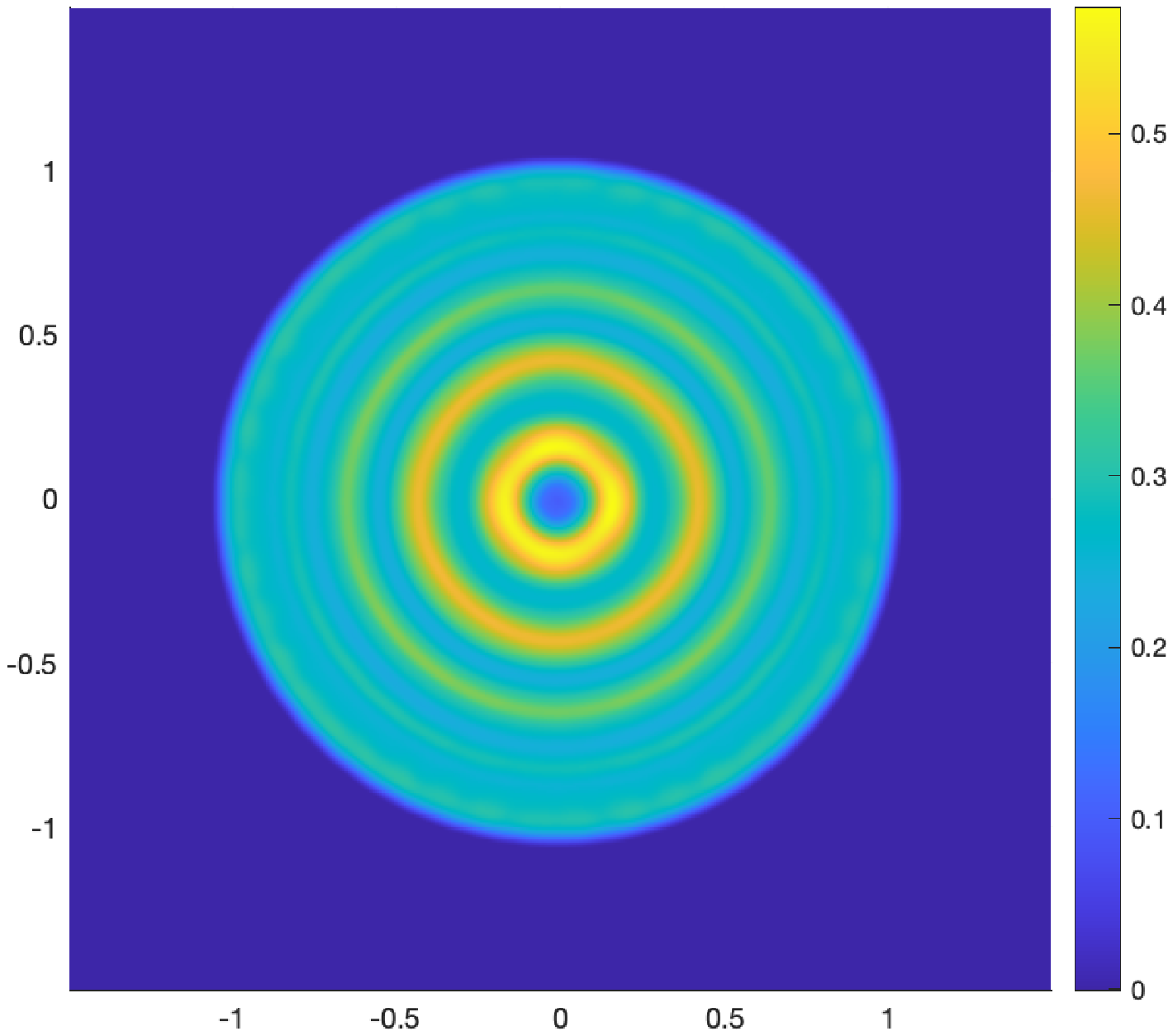}
}
\caption{Solutions of the line source problem for $\beta_{N,5}$ models.}
\label{fig:LineSourceK5}
\end{figure}

To get a clearer comparison between the numerical results and the exact solution, we also plot the numerical solutions and functions of the radial variable. The results for $\beta_{5,K}$ models are given in Figure \ref{fig:LineSourceN5_1d}, from which one can find a significant negative part near the origin for $K = 3$. Starting from $K = 5$, the positivity of the solution is well maintained, and the converging trend (towards the $M_5$ model instead of the exact solution) is obvious as $K$ increases. The $\beta_{N,5}$ solutions are given in Figure \ref{fig:LineSourceK5_1d}. Due to the different behaviors of odd and even $N$'s, we plot the results in two separate figures. One can find that the results for even $N$'s have a peak at the origin, whereas the origin is a valley for odd $N$'s.
\begin{figure}[!ht]
\centering
  \includegraphics[width=.5\textwidth]{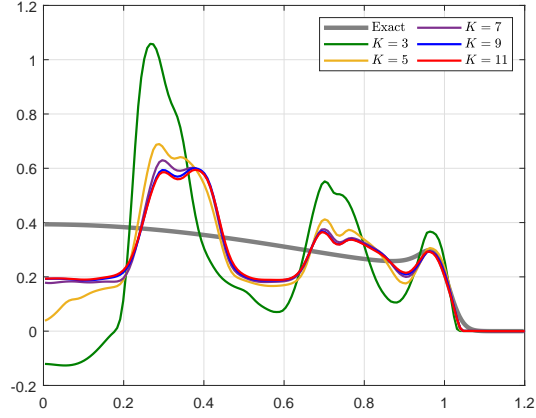}
\caption{Solutions of the line source problem for $\beta_{5,K}$ models}
\label{fig:LineSourceN5_1d}
\end{figure}

\begin{figure}[!ht]
\centering
\subfloat[Odd $N$]{%
  \includegraphics[width=.5\textwidth]{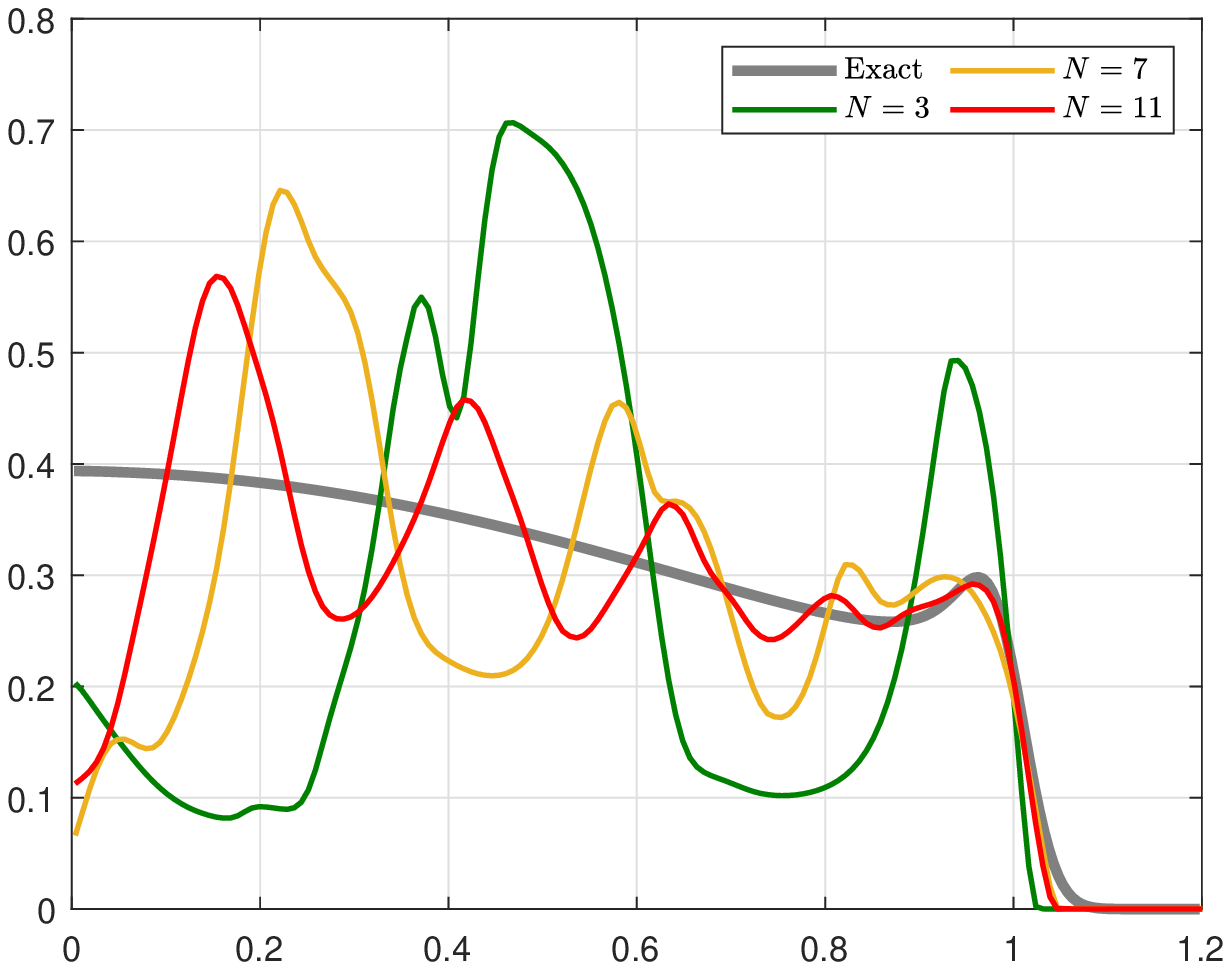}
}%
\subfloat[Even $N$]{%
  \includegraphics[width=.5\textwidth]{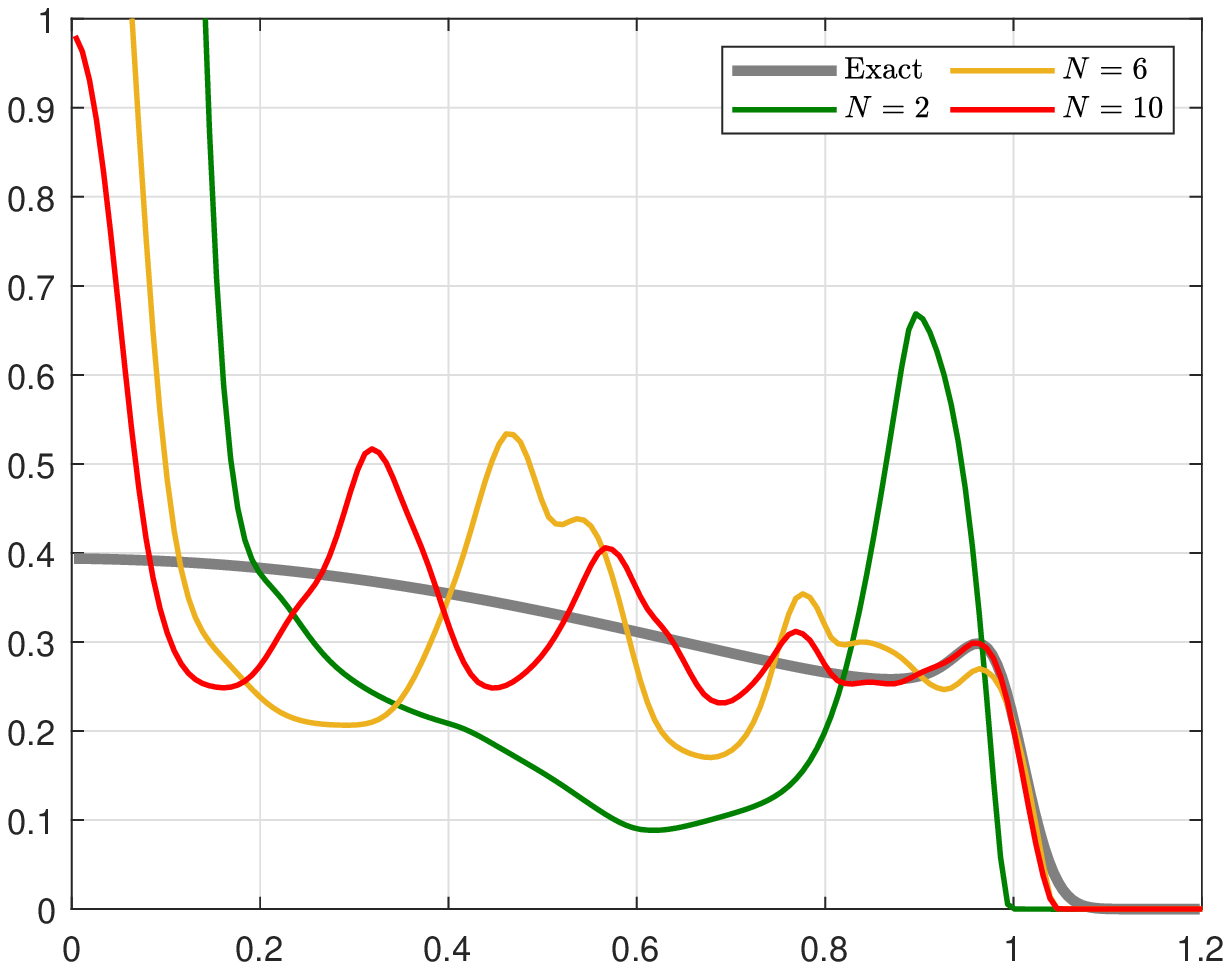}
}
\caption{Solutions of the line source problem for $\beta_{N,5}$ models.}
\label{fig:LineSourceK5_1d}
\end{figure}

By comparison with the results in \cite{Garrett2013comparison}, the $\beta_{N,K}$ models have satisfactory results among the unfiltered models, especially for small values of $K$ like $K = 5$. For this particular problem, further improvements may be made by adding filters, which will be studied in our future works.

\subsection{Two-beam interaction}
Our third numerical example is a test in the domain $[-1/2, 1/2] \times [-1/2, 1/2]$ with the following boundary conditions:
\begin{itemize}
  \item Left boundary condition: for $\Omega = (\Omega_1, \Omega_2, \Omega_3)^{\top}$ with $\Omega_1 > 0$,
  \begin{displaymath}
    I(-1/2, y, \Omega, t) = \left\{ \begin{array}{@{}ll}
      \delta(\Omega - e_x), & \text{if } y \in [-1/8, 1/8], \\
      0, & \text{otherwise},
    \end{array} \right.
  \end{displaymath}
  where $e_x = (1,0,0)^{\top}$.
  \item Bottom boundary condition: for $\Omega = (\Omega_1, \Omega_2, \Omega_3)^{\top}$ with $\Omega_2 > 0$,
  \begin{displaymath}
    I(x, -1/2, \Omega, t) = \left\{ \begin{array}{@{}ll}
      \delta(\Omega - e_y), & \text{if } x \in [-1/8, 1/8], \\
      0, & \text{otherwise},
    \end{array} \right.
  \end{displaymath}
  where $e_y = (0,1,0)^{\top}$.
  \item Right boundary condition: for $\Omega = (\Omega_1, \Omega_2, \Omega_3)^{\top}$ with $\Omega_1 < 0$,
  \begin{displaymath}
    I(1/2,y,\Omega,t) = 0.
  \end{displaymath}
  \item Top boundary condition: for $\Omega = (\Omega_1, \Omega_2, \Omega_3)^{\top}$ with $\Omega_2 < 0$,
  \begin{displaymath}
    I(x,1/2,\Omega,t) = 0.
  \end{displaymath}
\end{itemize}
These boundary conditions indicate beams with width $1/4$ injecting into the domain from the left and the bottom. The initial condition is a vacuum in the domain:
\begin{displaymath}
I(x,y,\Omega,0) = 0, \qquad \forall x \in [-1/2, 1/2] \text{ and } y \in [-1/2, 1/2].
\end{displaymath}
The spatial domain is discretized by a uniform grid with $400 \times 400$ cells. The numerical solver used in this test again follows the scheme described in Section \ref{sec:num_meth}.

We first study the two-beam problem with no scattering ($\sigma = 0$). In this case, the two beams will cross each other without interaction. According to the discussion in Section \ref{sec:Dirac}, the $M_2$ model can simulate this problem exactly since the exact solution includes only one-beam or two-beam intensity functions. In our experiments, we test the $\beta_{3,K}$ models, and present in Figure \ref{fig:two_beam_N3} the results at $t = 0.5$ (both beams reach middle of the domain), $t = 1$ (both beams reach the other side of the domain) and $t = 1.1$ (both beams fully penetrate the domain). At $t = 0.5$ (the first column of Figure \ref{fig:two_beam_N3}), one can already observe significant outspreading of the beams. Increasing the value of $K$ can help make the beams more concentrated, but such effect is not strong enough due to the slow convergence rate as we have seen in Figure \ref{fig:single_Dirac_error_K}. At $t = 1$ (the second column of Figure \ref{fig:two_beam_N3}), in the numerical solutions, there is still an obvious gap between the front of the radiation and the other side of the boundary, which implies that the maximum characteristic speed for $\beta_{N,K}$ models is less than $1$. The gap narrows for larger $K$, but a characteristic speed equal to $1$ can only be achieved in the limiting case, \textit{i.e.}, the $M_N$ model. The last column of Figure \ref{fig:two_beam_N3} also shows the slow improvement of the solution as $K$ increases.

\begin{figure}
  \centering
  \subfloat[$K=3$, $t = 0.5$]{%
    \includegraphics[width=.33\textwidth, bb=35 10 386 316, clip]{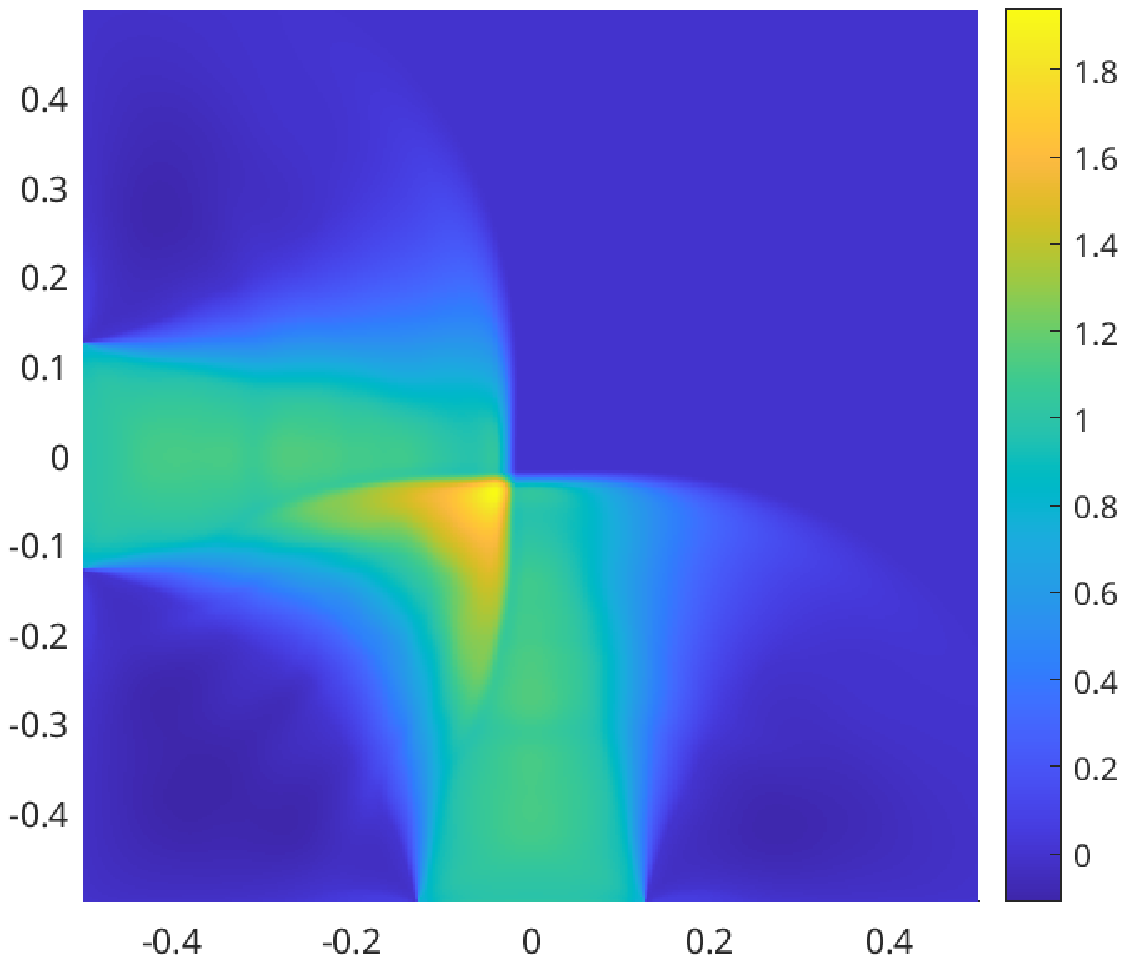}
  }
  \subfloat[$K=3$, $t = 1$]{%
    \includegraphics[width=.33\textwidth, bb=35 10 386 316, clip]{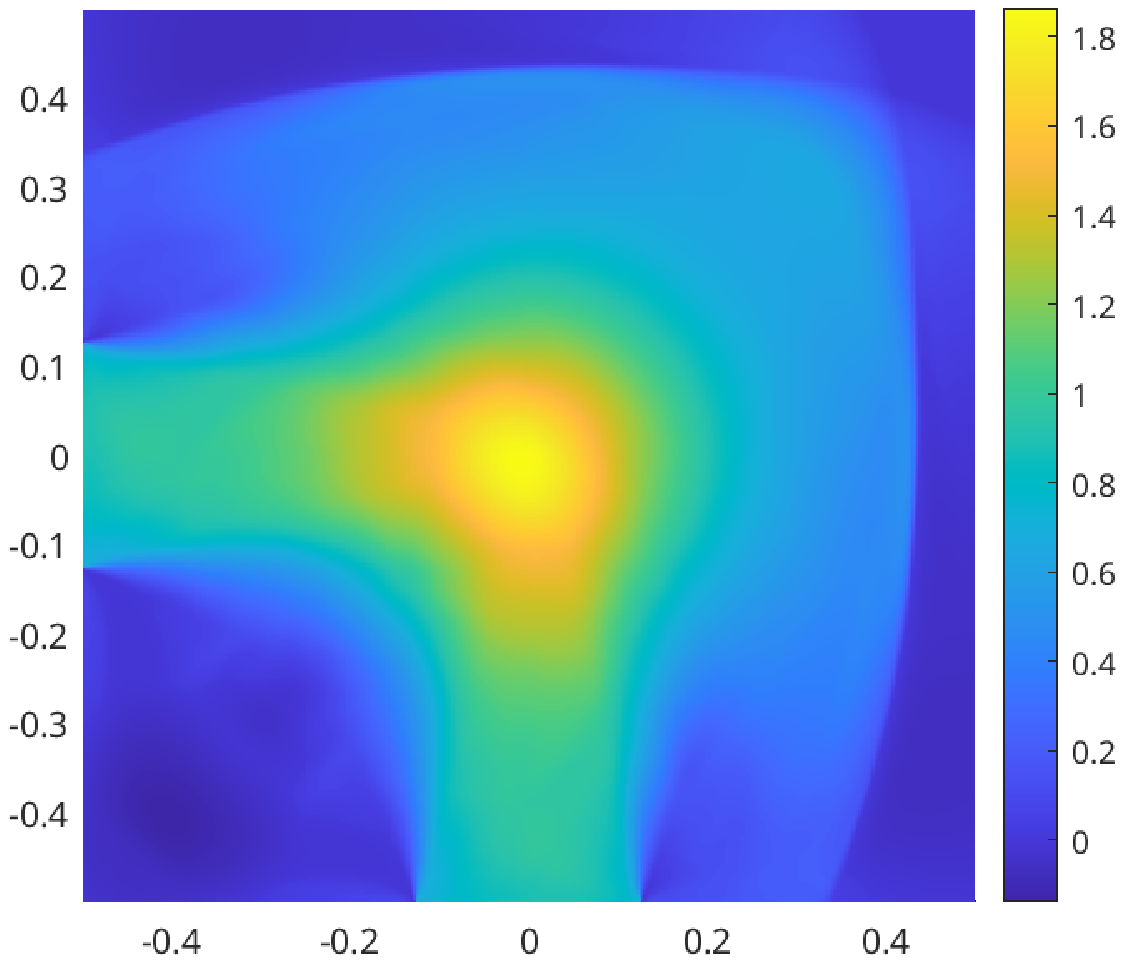}
  }
  \subfloat[$K=3$, $t = 1.1$]{%
    \includegraphics[width=.33\textwidth, bb=35 10 386 316, clip]{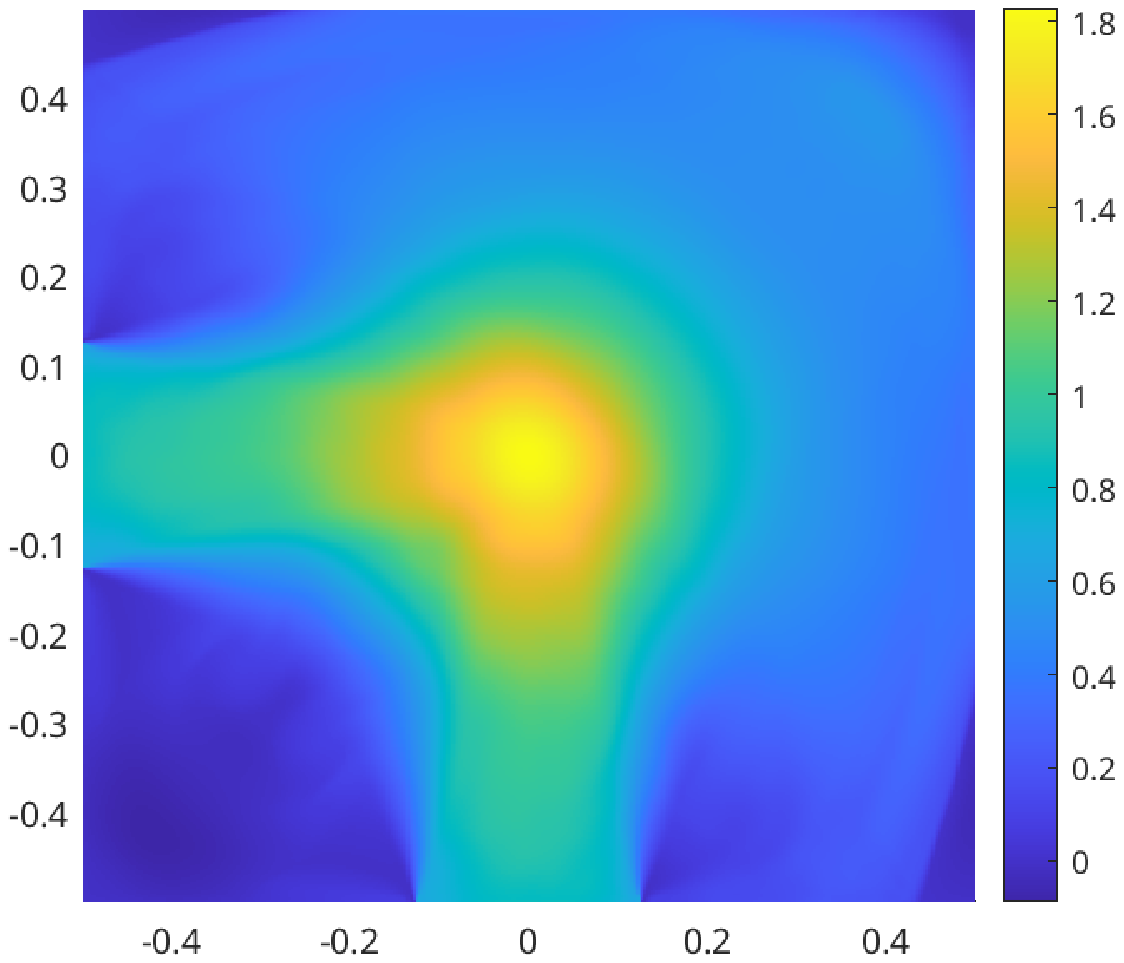}
  } \\
  \subfloat[$K=5$, $t = 0.5$]{%
    \includegraphics[width=.33\textwidth, bb=35 10 386 316, clip]{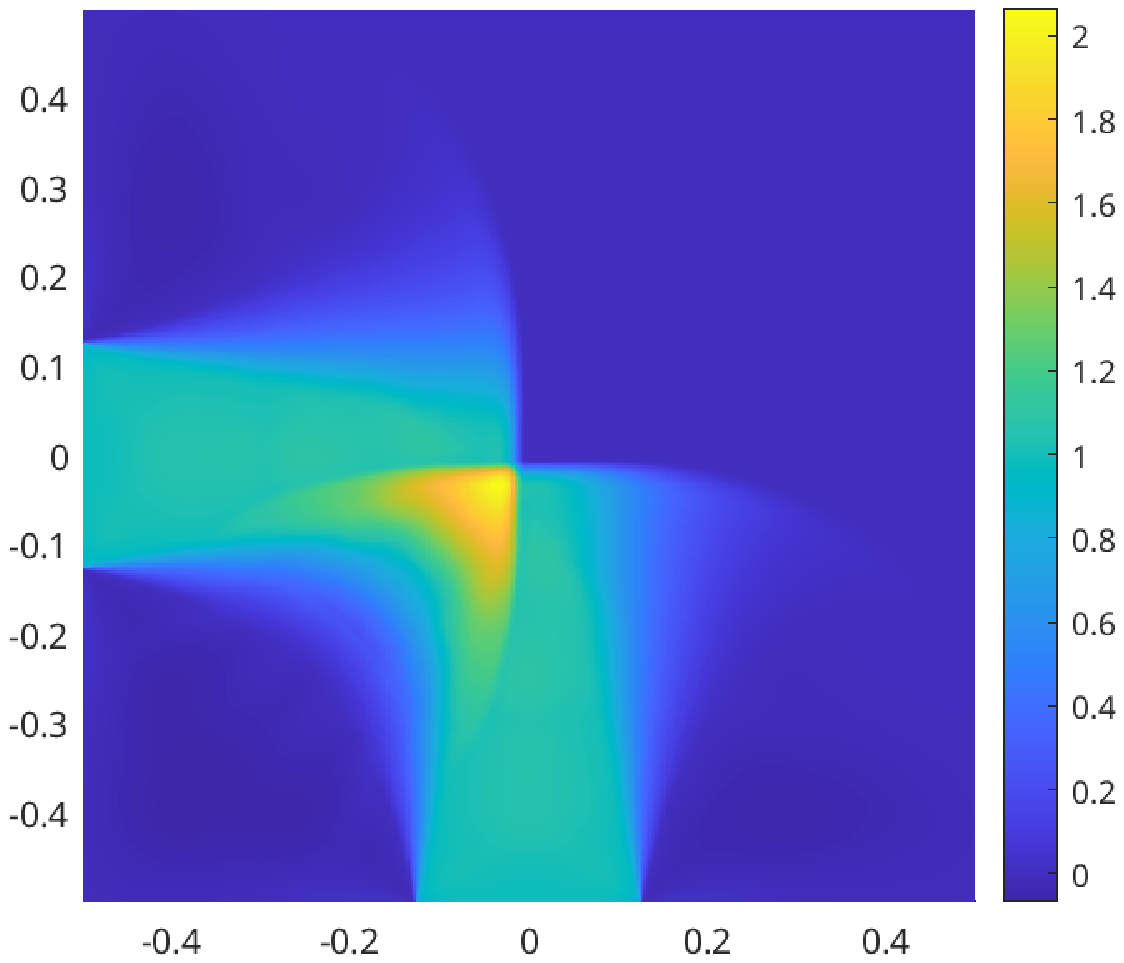}
  }
  \subfloat[$K=5$, $t = 1$]{%
    \includegraphics[width=.33\textwidth, bb=35 10 386 316, clip]{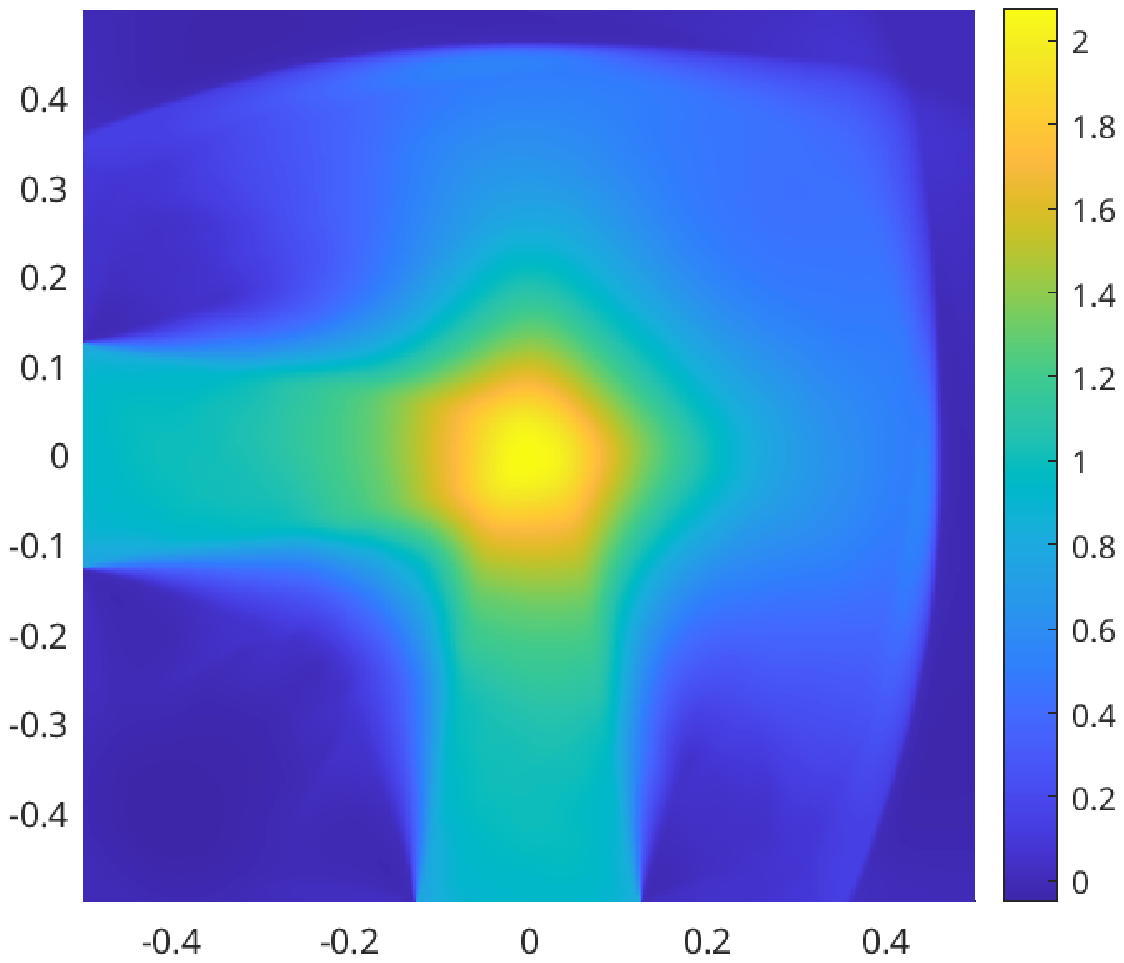}
  }
  \subfloat[$K=5$, $t = 1.1$]{%
    \includegraphics[width=.33\textwidth, bb=35 10 386 316, clip]{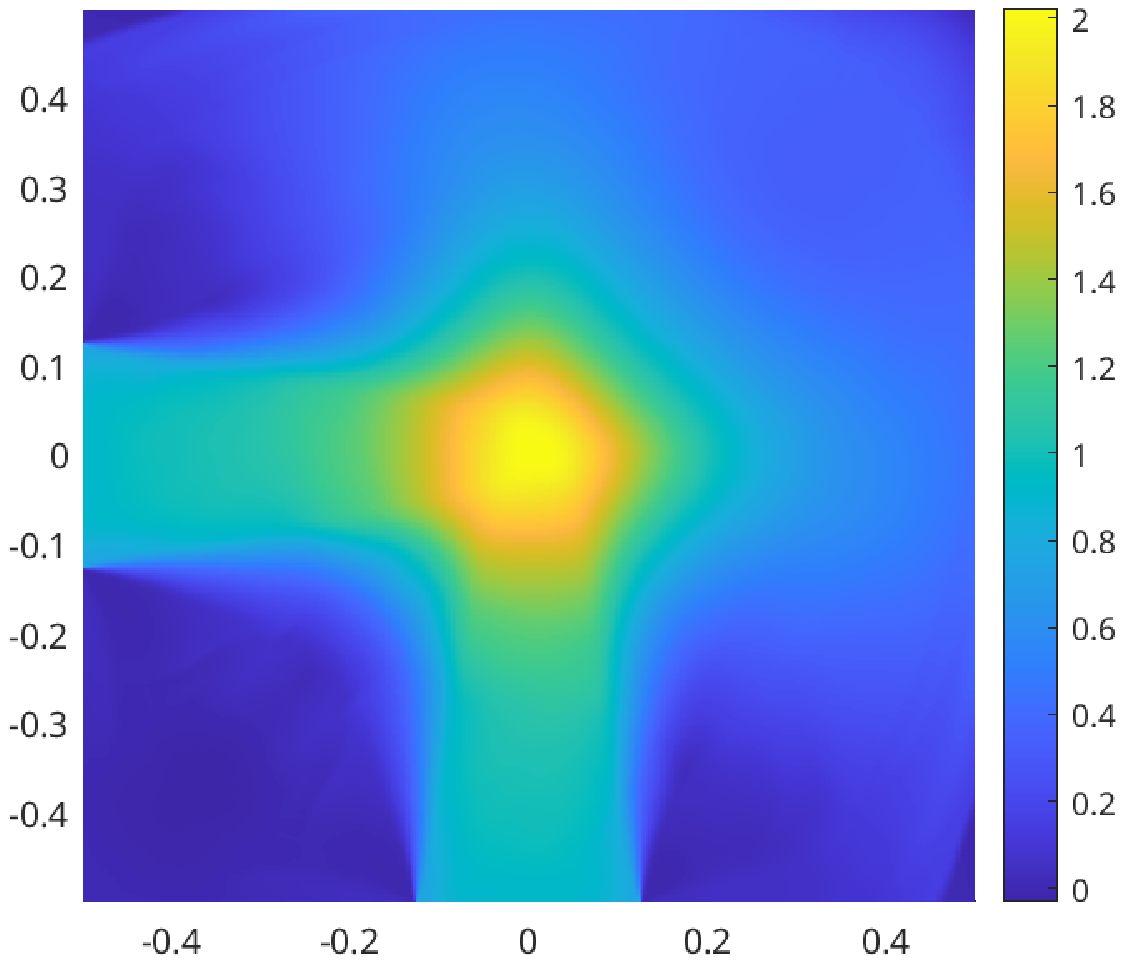}
  } \\
  \subfloat[$K=7$, $t = 0.5$]{%
    \includegraphics[width=.33\textwidth, bb=35 10 386 316, clip]{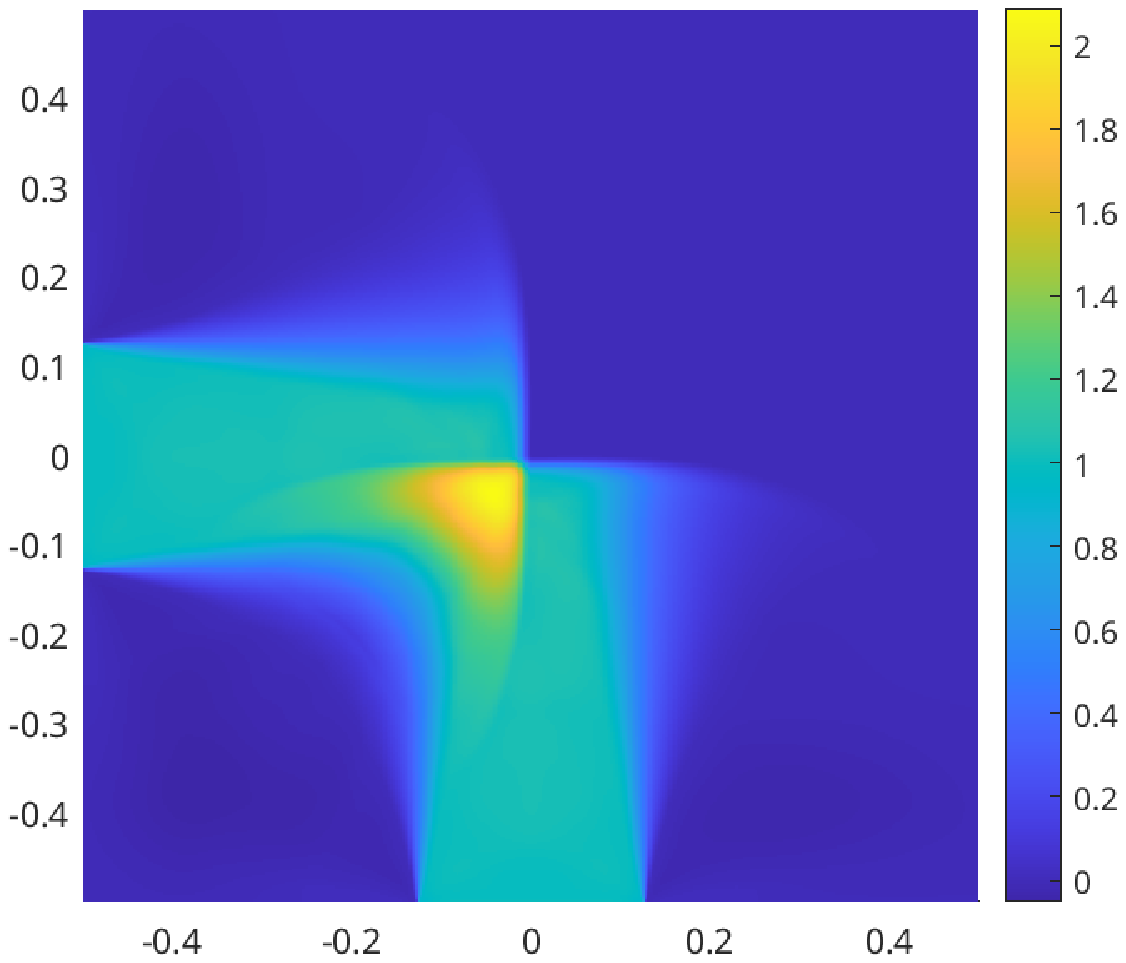}
  }
  \subfloat[$K=7$, $t = 1$]{%
    \includegraphics[width=.33\textwidth, bb=35 10 386 316, clip]{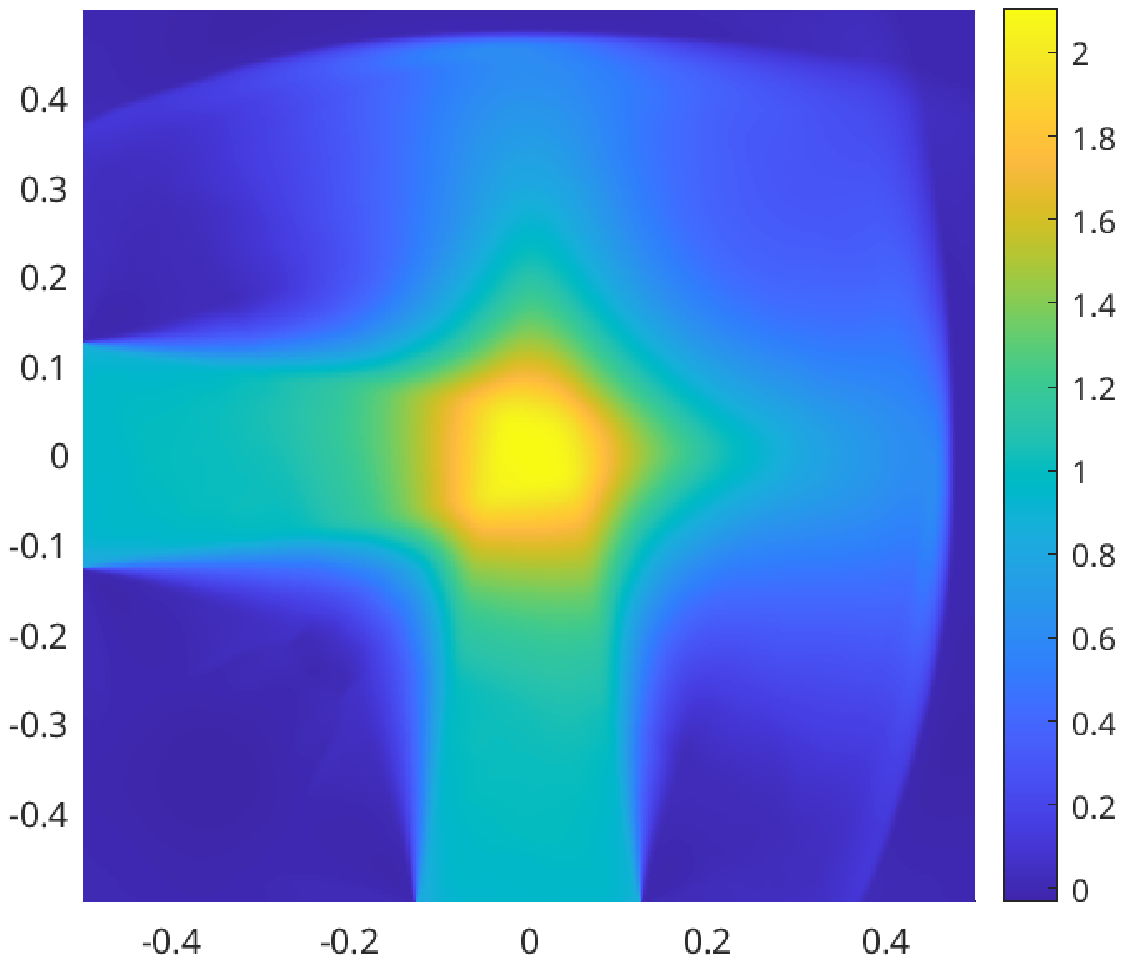}
  }
  \subfloat[$K=7$, $t = 1.1$]{%
    \includegraphics[width=.33\textwidth, bb=35 10 386 316, clip]{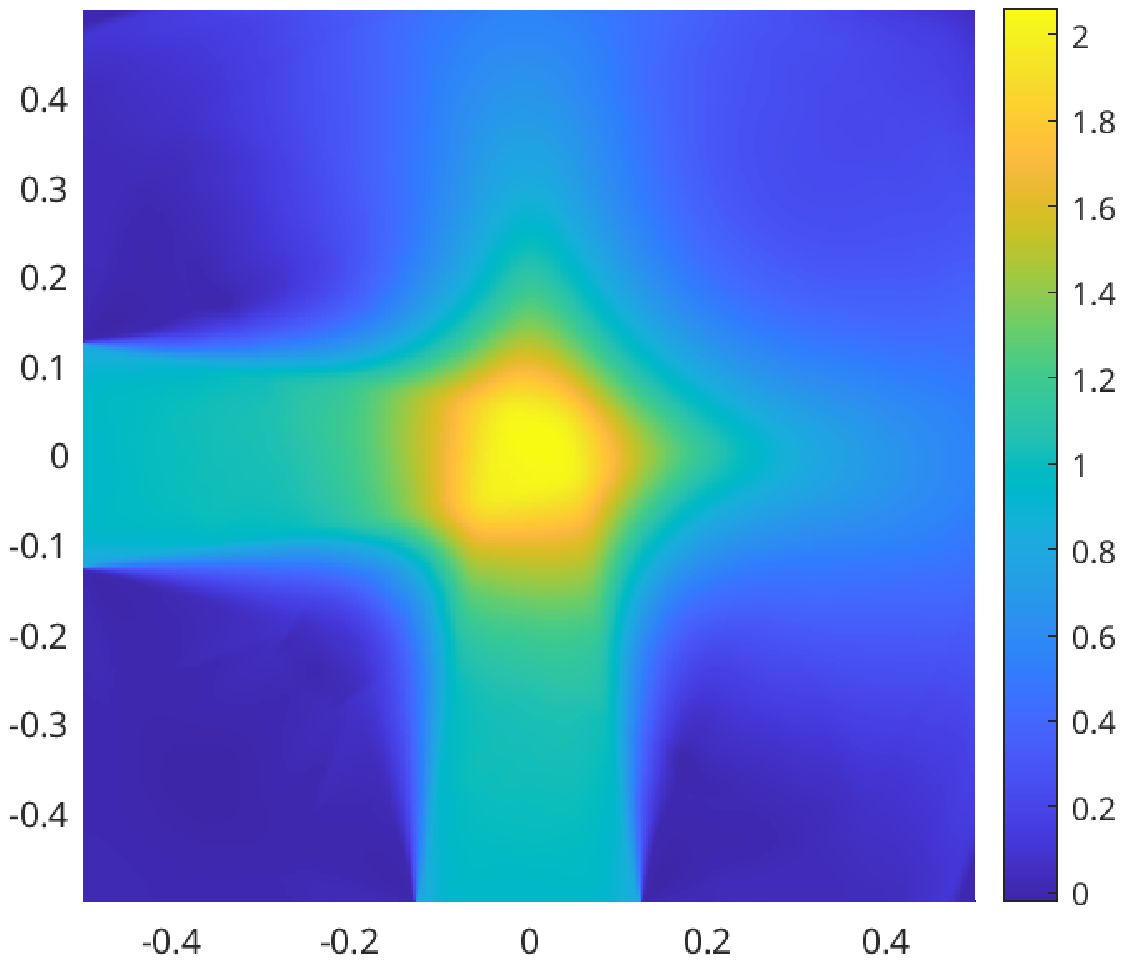}
  } \\
  \subfloat[$K=9$, $t = 0.5$]{%
    \includegraphics[width=.33\textwidth, bb=35 10 386 316, clip]{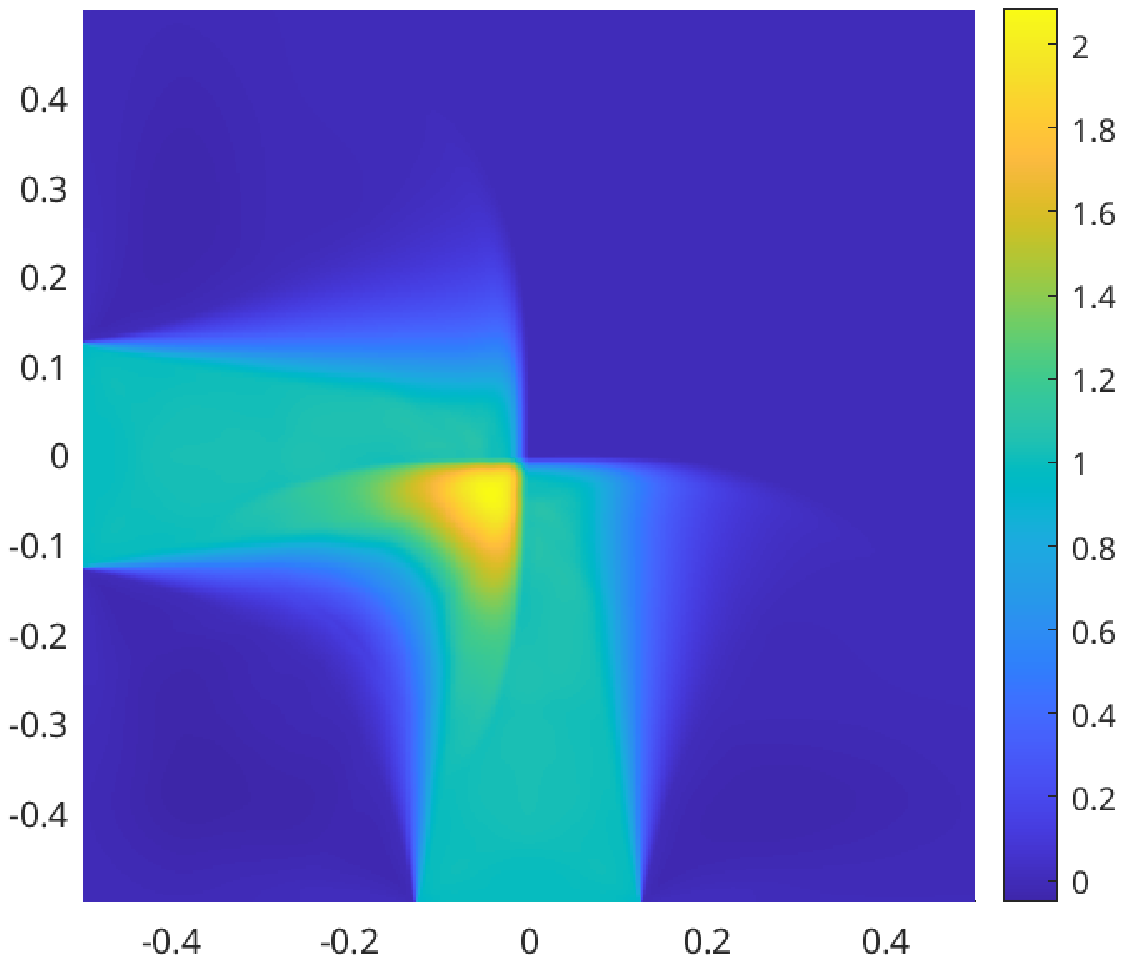}
  }
  \subfloat[$K=9$, $t = 1$]{%
    \includegraphics[width=.33\textwidth, bb=35 10 386 316, clip]{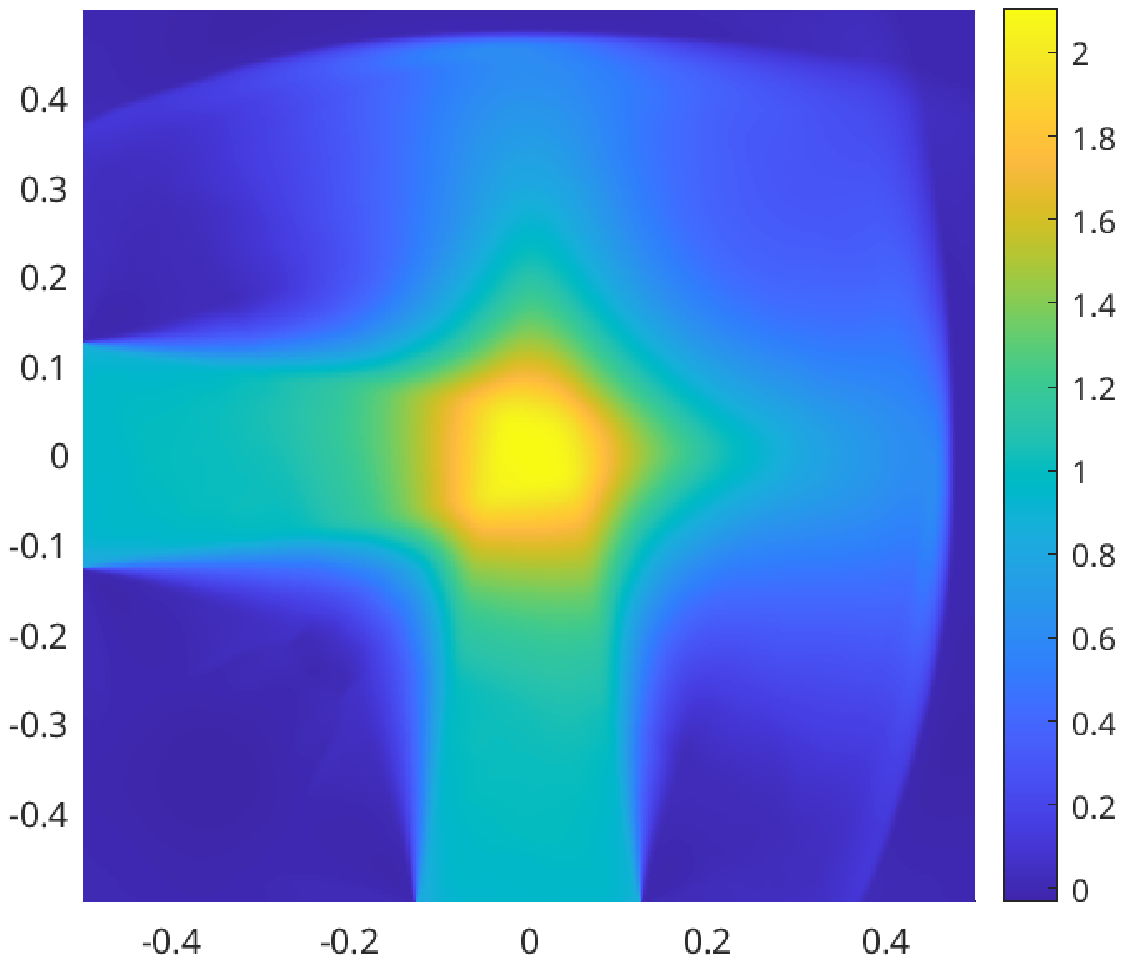}
  }
  \subfloat[$K=9$, $t = 1.1$]{%
    \includegraphics[width=.33\textwidth, bb=35 10 386 316, clip]{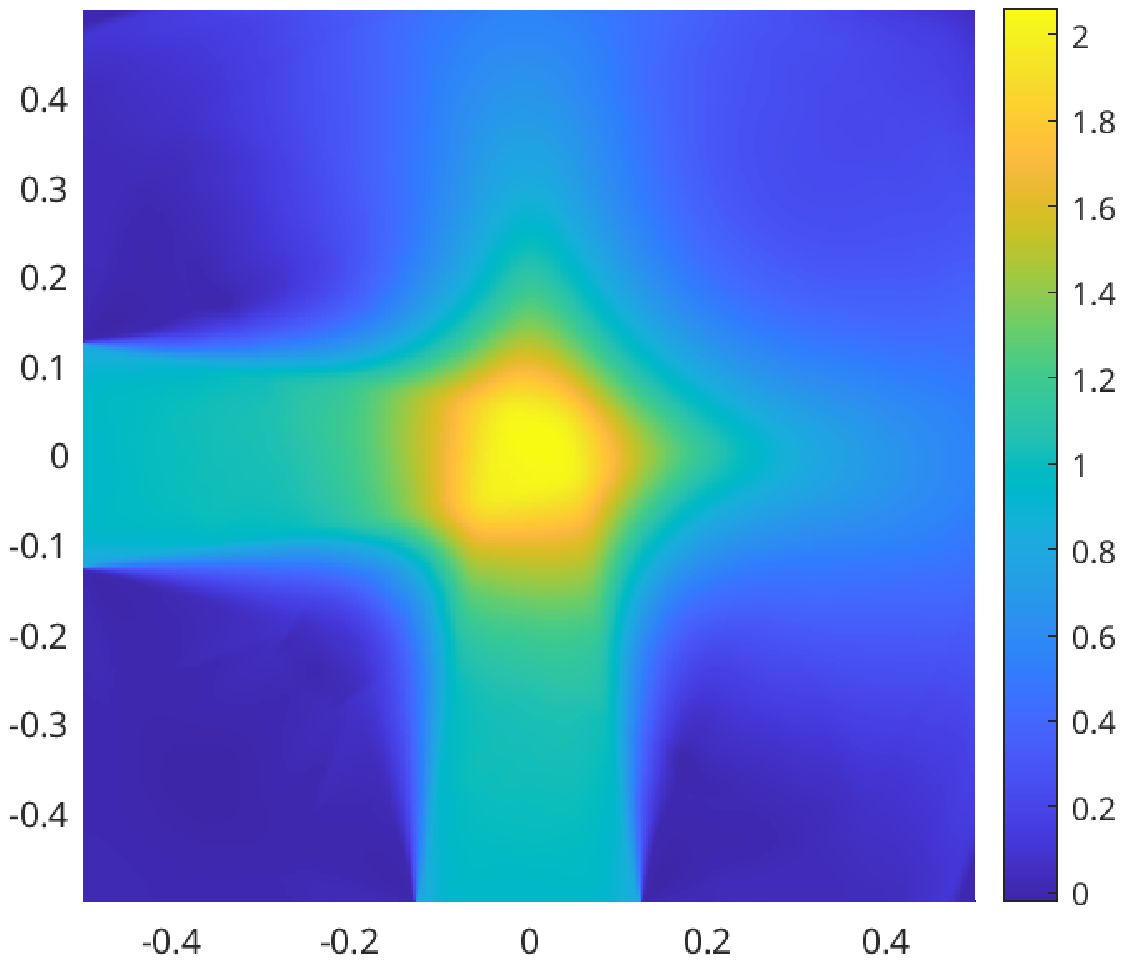}
  }
  \caption{Solution of the two-beam problem for the $\beta_{3,K}$ models}
  \label{fig:two_beam_N3}
\end{figure}

We now fix the value of $K$ and increase $N$. Some results are plotted in Figure \ref{fig:two_beam_K5}. It is clear that increasing $N$ is more effective than increasing $K$, due to its faster convergence towards the Dirac delta function (see Figure \ref{fig:single_Dirac_error_N}). Note that the computational cost also increases more quickly since the number of moments in the moment equations is $(N+1)^2$. At $t = 1$, for the $\beta_{7,5}$ and $\beta_{9,5}$ models, both beams have almost reached the other side of the boundary, and the crossing part is quite similar to a square. In most of our numerical results, the greatest value can be slightly larger than $2$ due to the overlapping of the radiations coming from different boundary points caused by the approximation of beams. But the positivity of the solution is generally well maintained.

\begin{figure}
  \centering
  \subfloat[$N=3$, $t = 0.5$]{%
    \includegraphics[width=.33\textwidth, bb=35 10 386 316, clip]{TwoBeam/k3N5_t0.5.eps}
  }
  \subfloat[$N=3$, $t = 1$]{%
    \includegraphics[width=.33\textwidth, bb=35 10 386 316, clip]{TwoBeam/k3N5_t1.eps}
  }
  \subfloat[$N=3$, $t = 1.1$]{%
    \includegraphics[width=.33\textwidth, bb=35 10 386 316, clip]{TwoBeam/k3N5_t1.1.eps}
  } \\
  \subfloat[$N=5$, $t = 0.5$]{%
    \includegraphics[width=.33\textwidth, bb=35 10 386 316, clip]{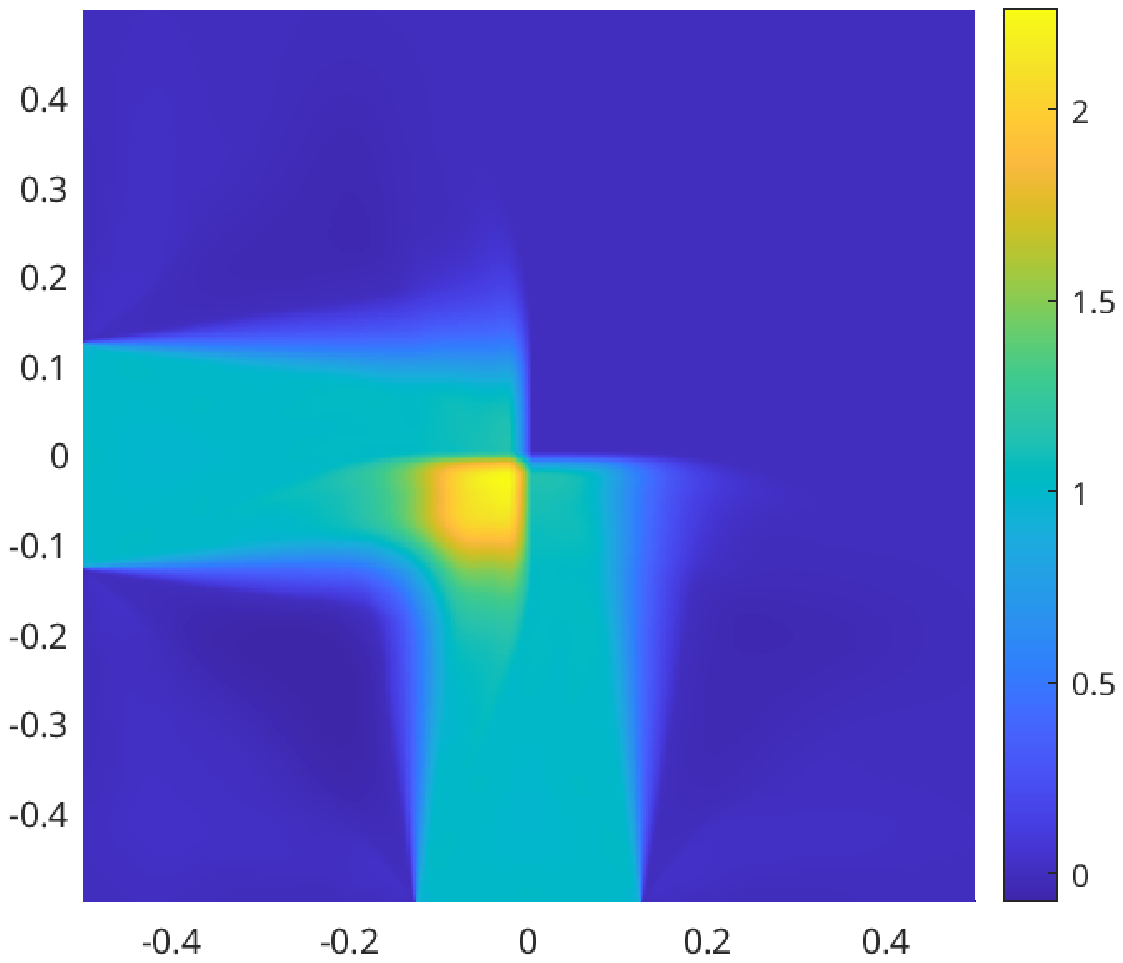}
  }
  \subfloat[$N=5$, $t = 1$]{%
    \includegraphics[width=.33\textwidth, bb=35 10 386 316, clip]{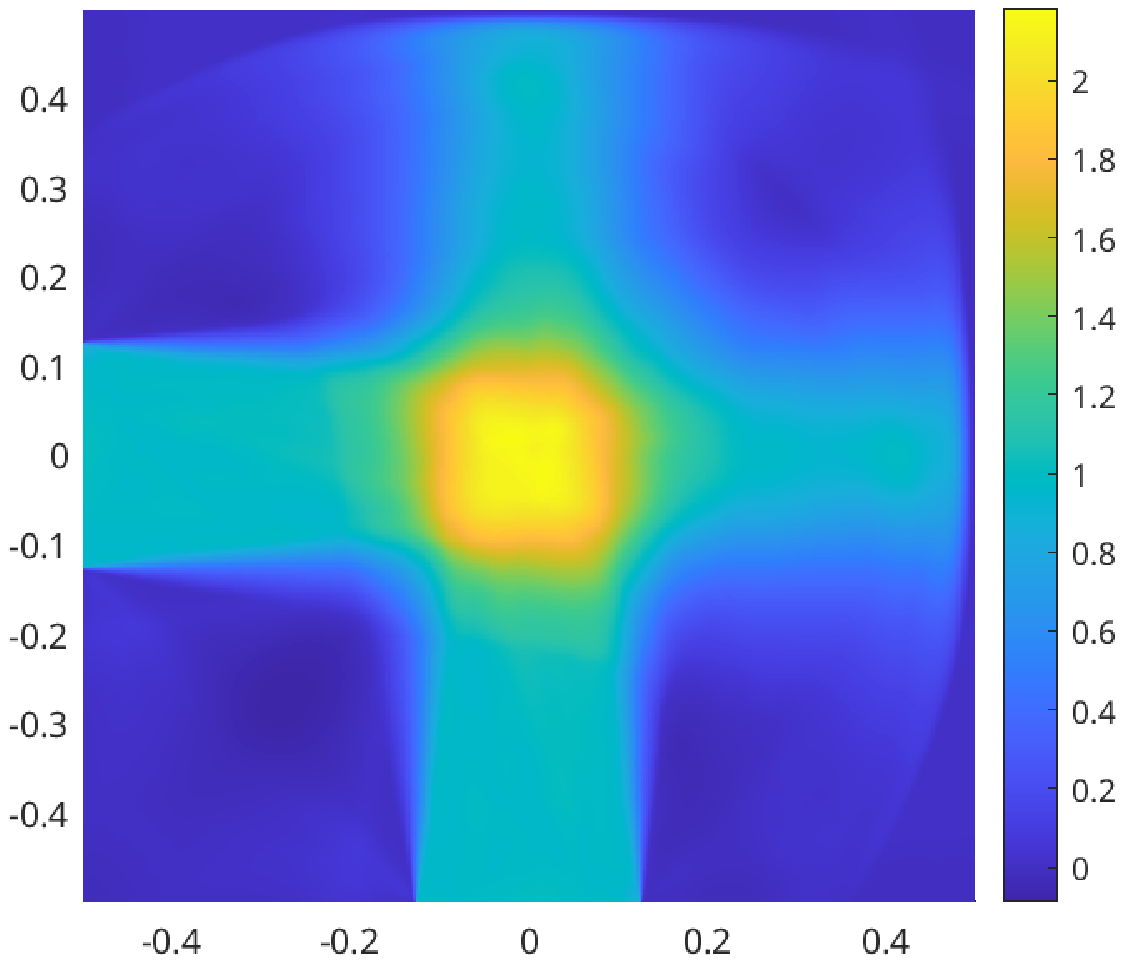}
  }
  \subfloat[$N=5$, $t = 1.1$]{%
    \includegraphics[width=.33\textwidth, bb=35 10 386 316, clip]{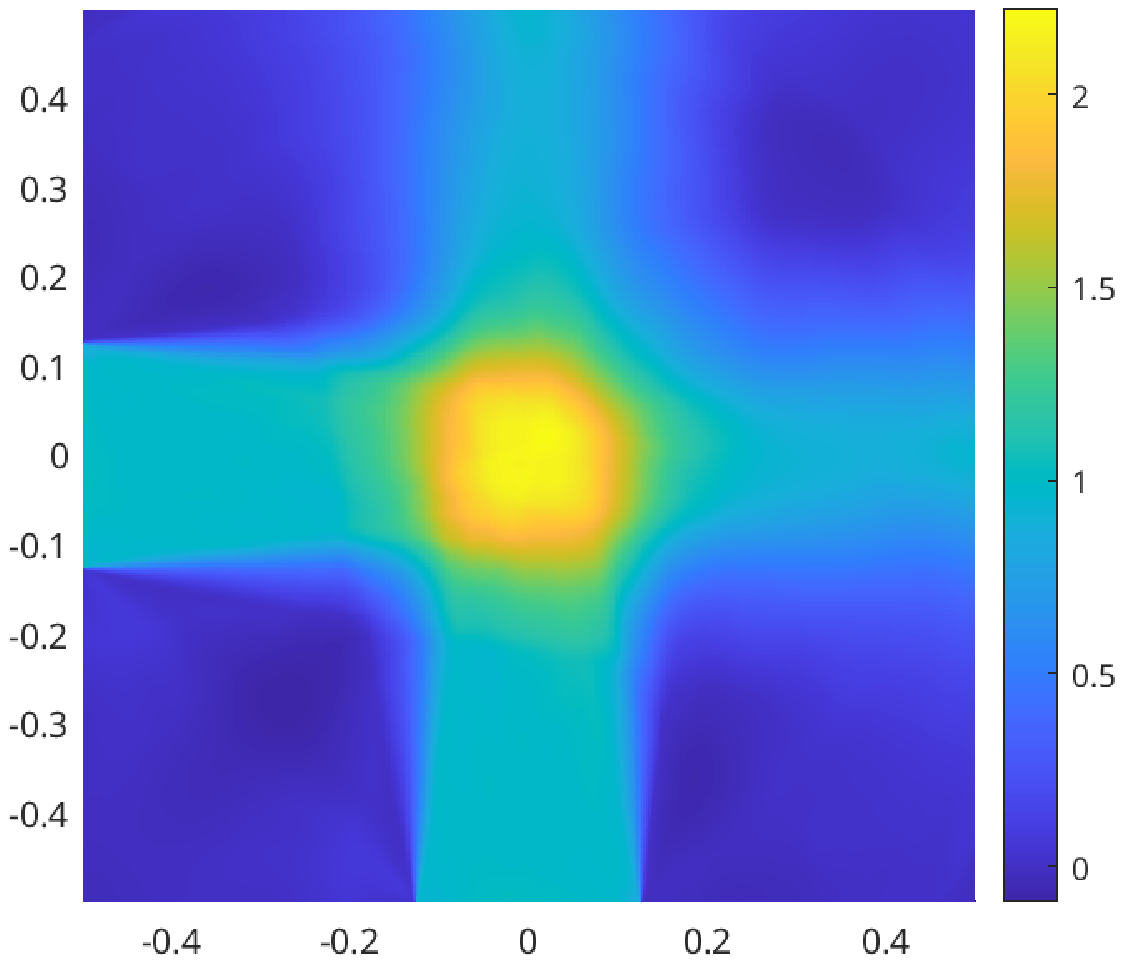}
  } \\
  \subfloat[$N=7$, $t = 0.5$]{%
    \includegraphics[width=.33\textwidth, bb=35 10 386 316, clip]{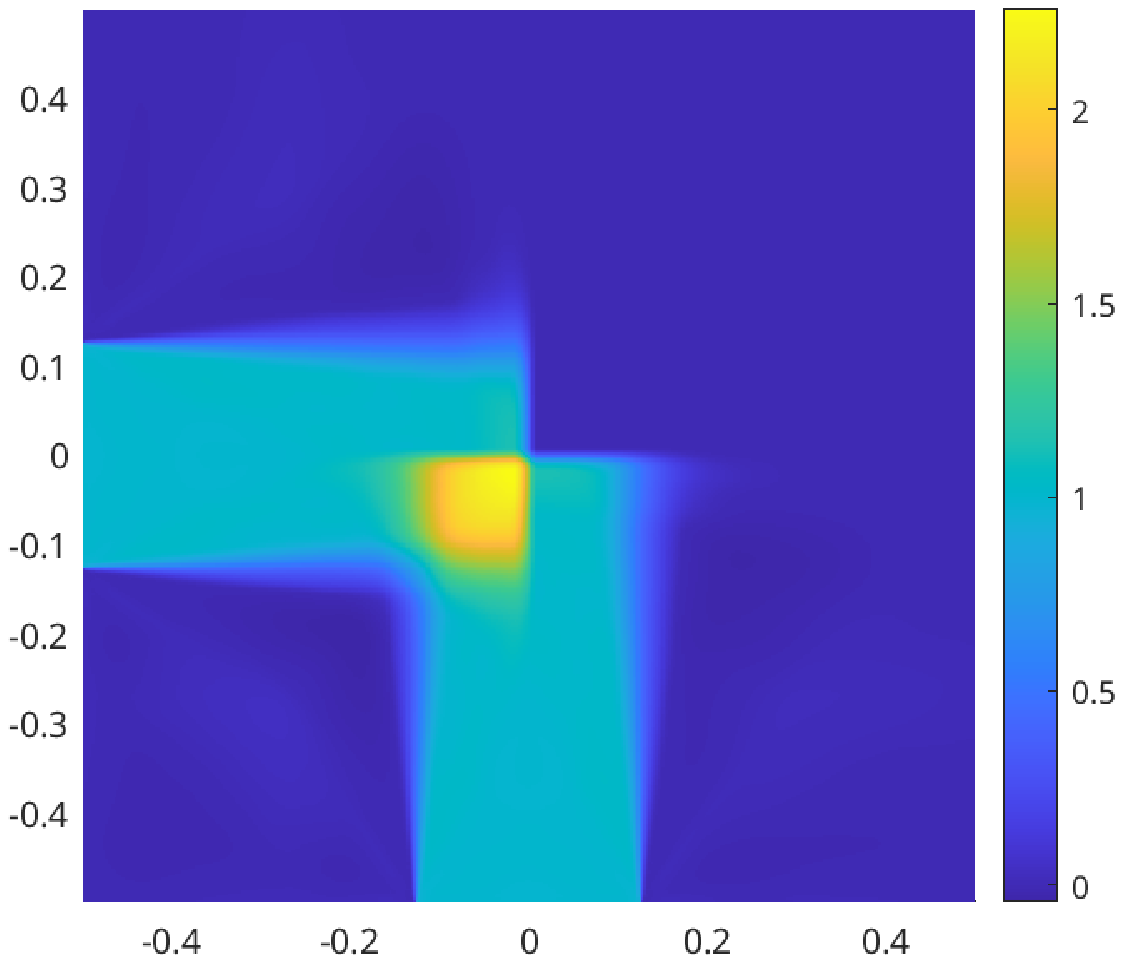}
  }
  \subfloat[$N=7$, $t = 1$]{%
    \includegraphics[width=.33\textwidth, bb=35 10 386 316, clip]{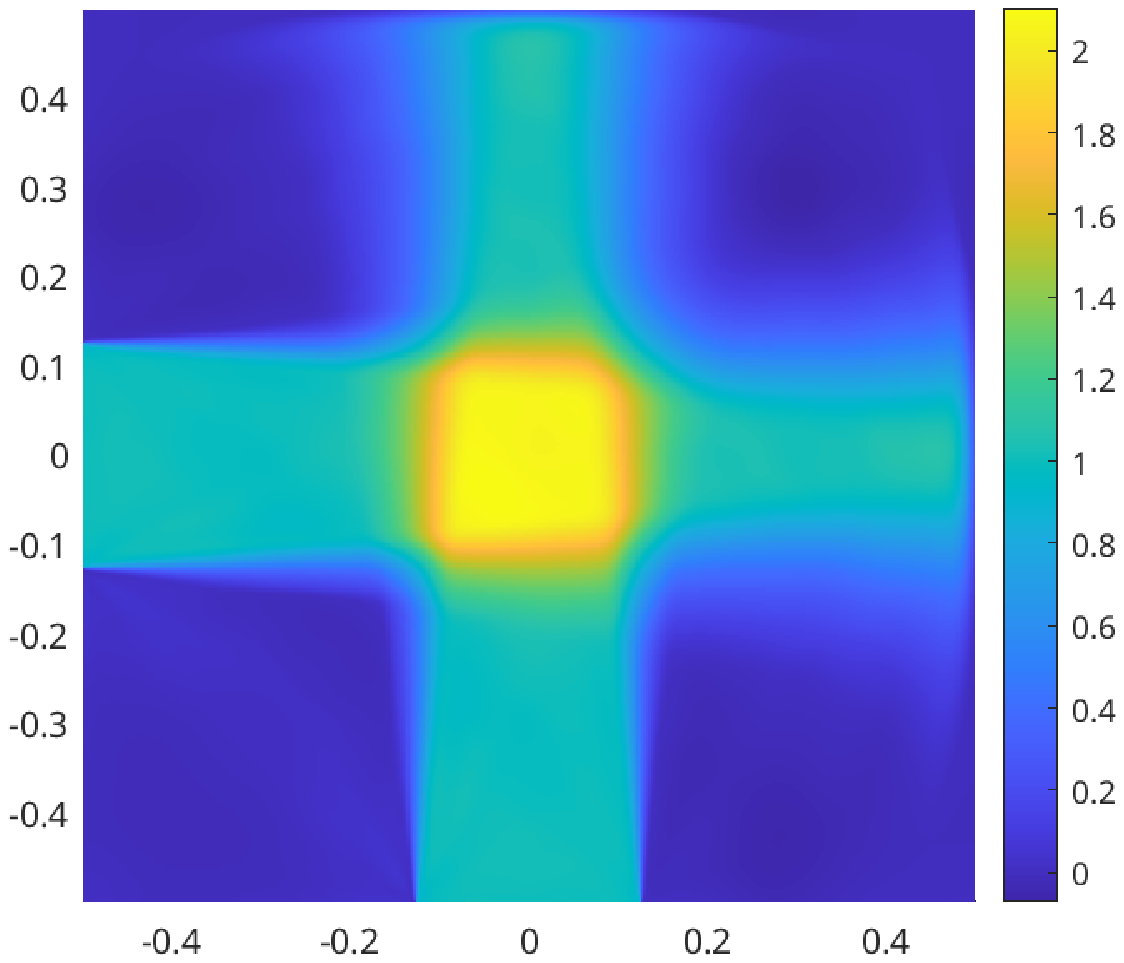}
  }
  \subfloat[$N=7$, $t = 1.1$]{%
    \includegraphics[width=.33\textwidth, bb=35 10 386 316, clip]{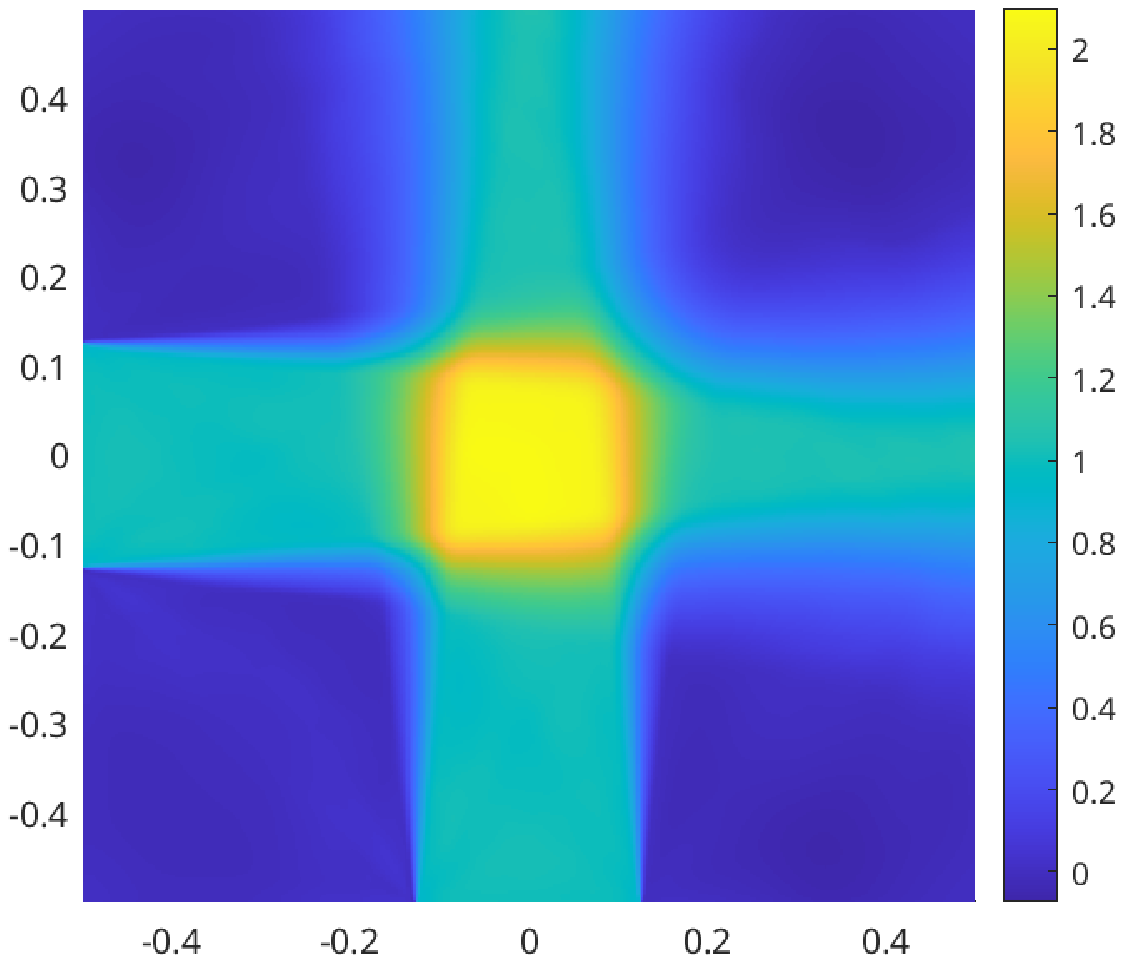}
  } \\
  \subfloat[$N=9$, $t = 0.5$]{%
    \includegraphics[width=.33\textwidth, bb=35 10 386 316, clip]{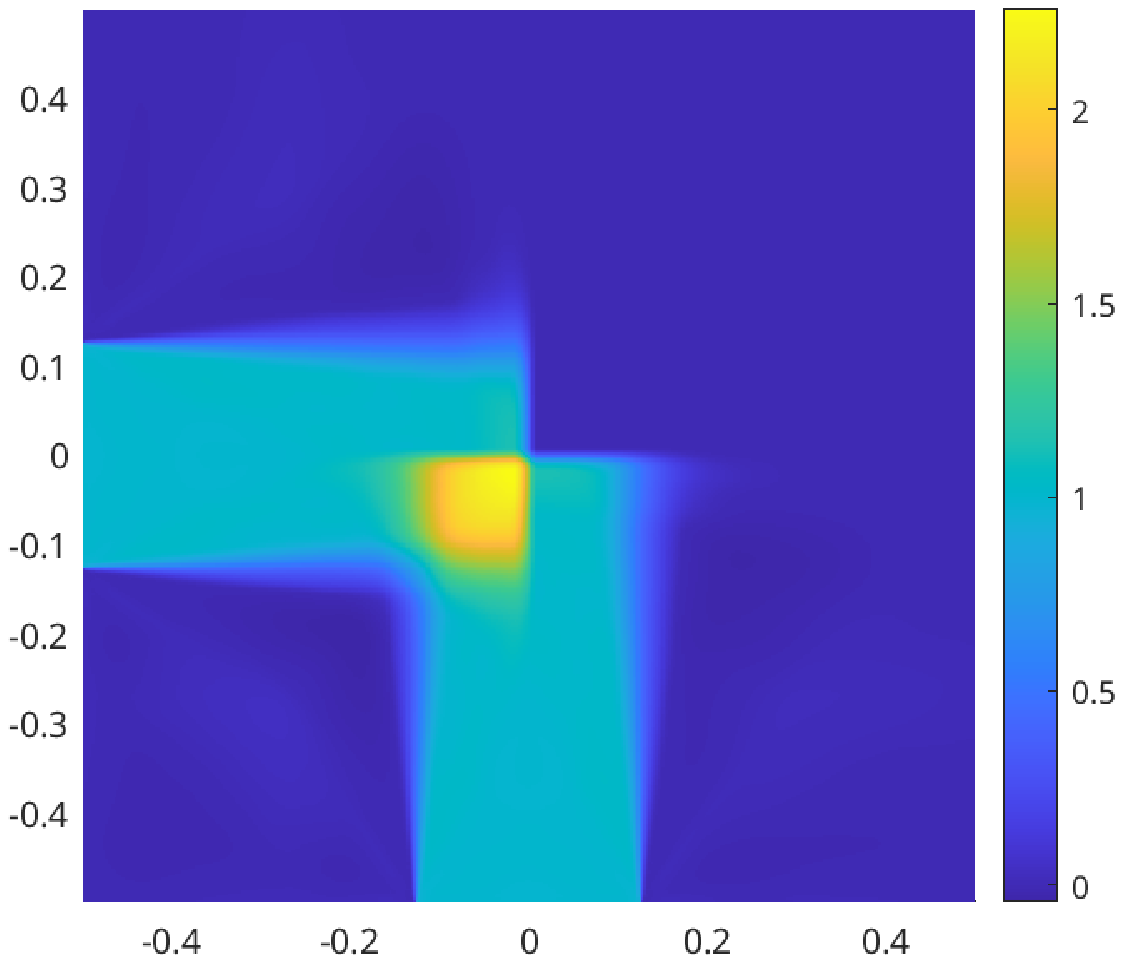}
  }
  \subfloat[$N=9$, $t = 1$]{%
    \includegraphics[width=.33\textwidth, bb=35 10 386 316, clip]{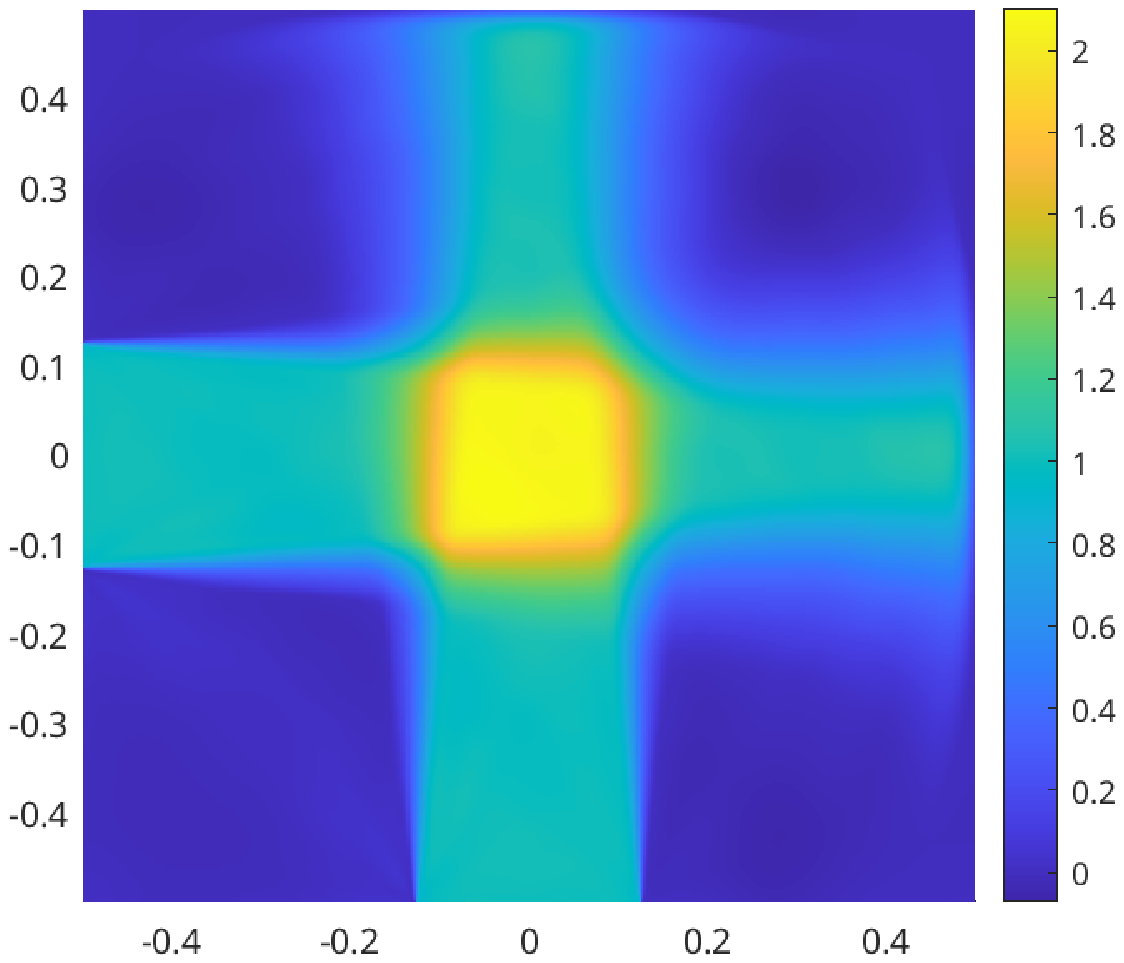}
  }
  \subfloat[$N=9$, $t = 1.1$]{%
    \includegraphics[width=.33\textwidth, bb=35 10 386 316, clip]{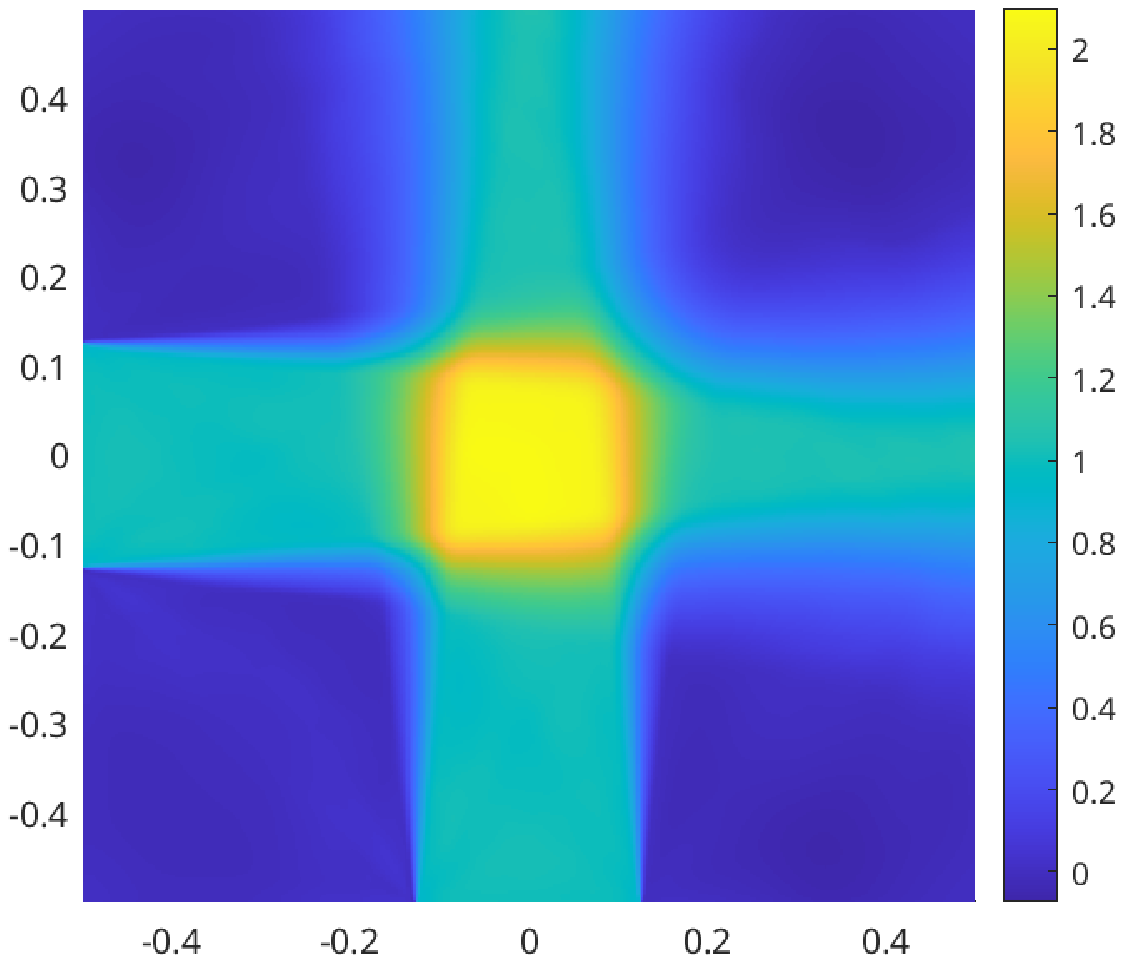}
  }
  \caption{Solution of the two-beam problem for the $\beta_{N,5}$ models}
  \label{fig:two_beam_K5}
\end{figure}

Our last test considers the same problem but with scattering coefficient $\sigma = 5$. Due to the scattering, the $M_N$ model can no longer exactly describe the solution of the radiative transfer equation, and therefore we only study the $\beta_{N,5}$ models. The results at $t = 0.3, 0.6$ and $1.0$ can be found in Figure \ref{fig:two_beam_scattering}. Because of the scattering, there are radiations pointing towards the source, causing the maximum value of the radiation density to exceed one even before the crossing, and the two beams are well mixed when they interact with each other. In this numerical example, all the $\beta_{N,5}$ models give qualitatively correct results, showing the effectiveness of the moment methods. Increasing $N$ still provides sharper solutions, especially for earlier times such as $t = 0.3$.
\begin{figure}
  \centering
  \subfloat[$N=3$, $t = 0.3$]{%
    \includegraphics[width=.33\textwidth, bb=35 10 386 316, clip]{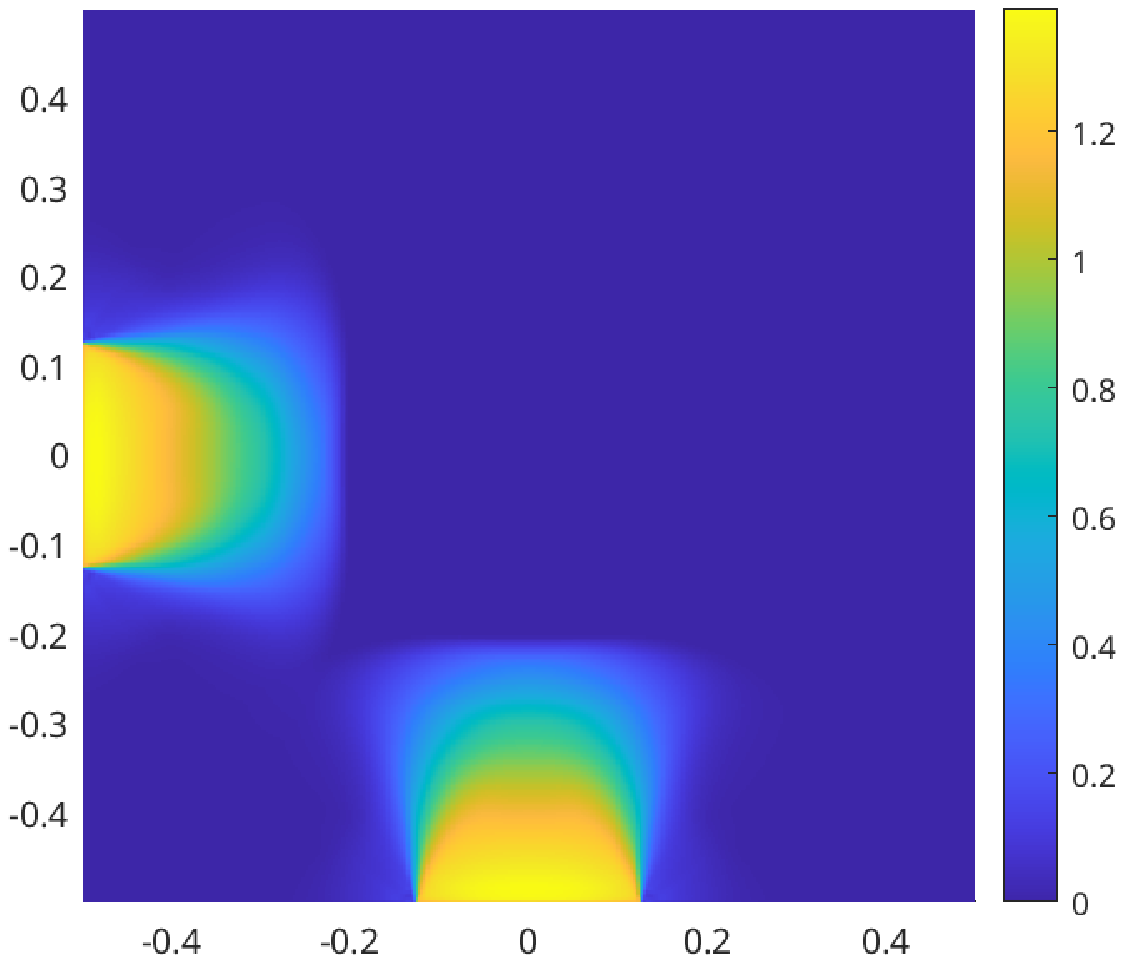}
  }
  \subfloat[$N=3$, $t = 0.6$]{%
    \includegraphics[width=.33\textwidth, bb=35 10 386 316, clip]{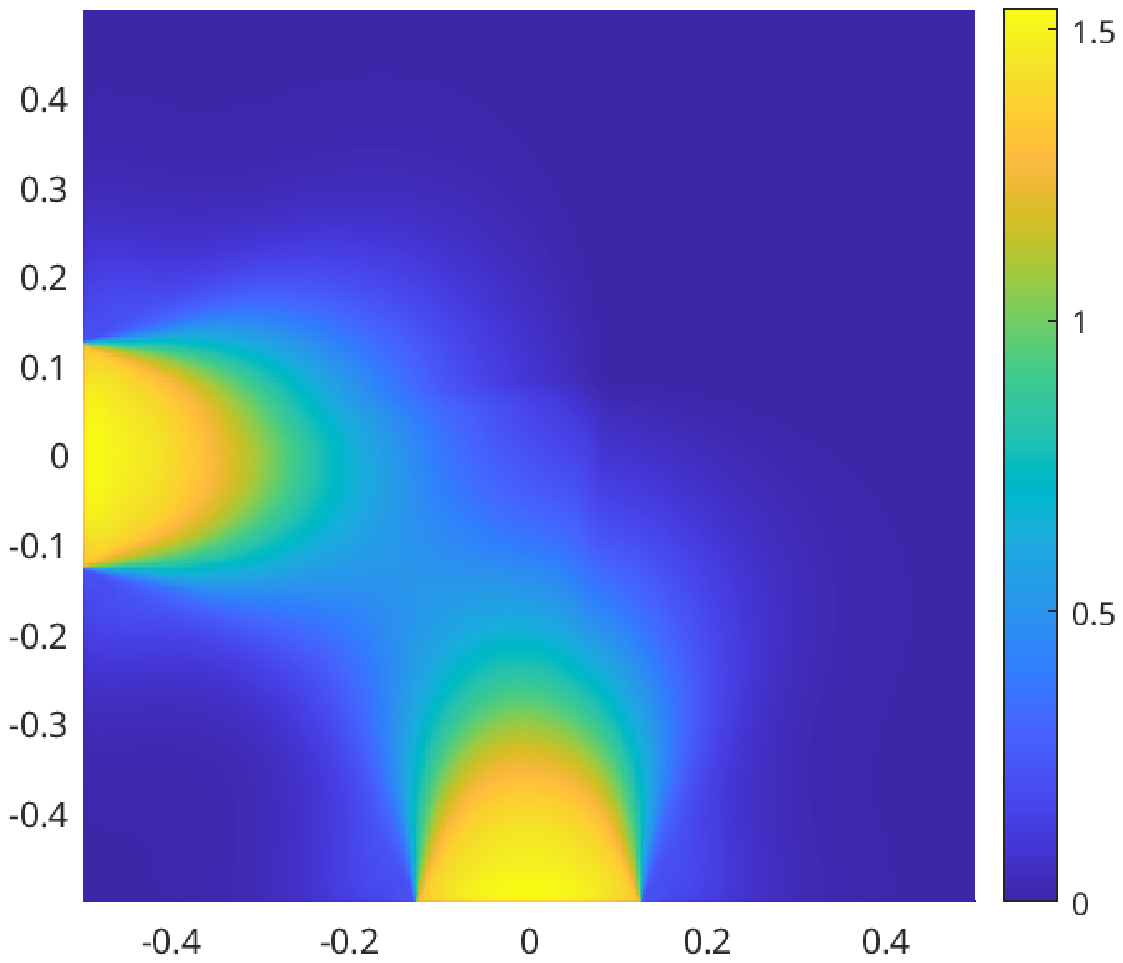}
  }
  \subfloat[$N=3$, $t = 1$]{%
    \includegraphics[width=.33\textwidth, bb=35 10 386 316, clip]{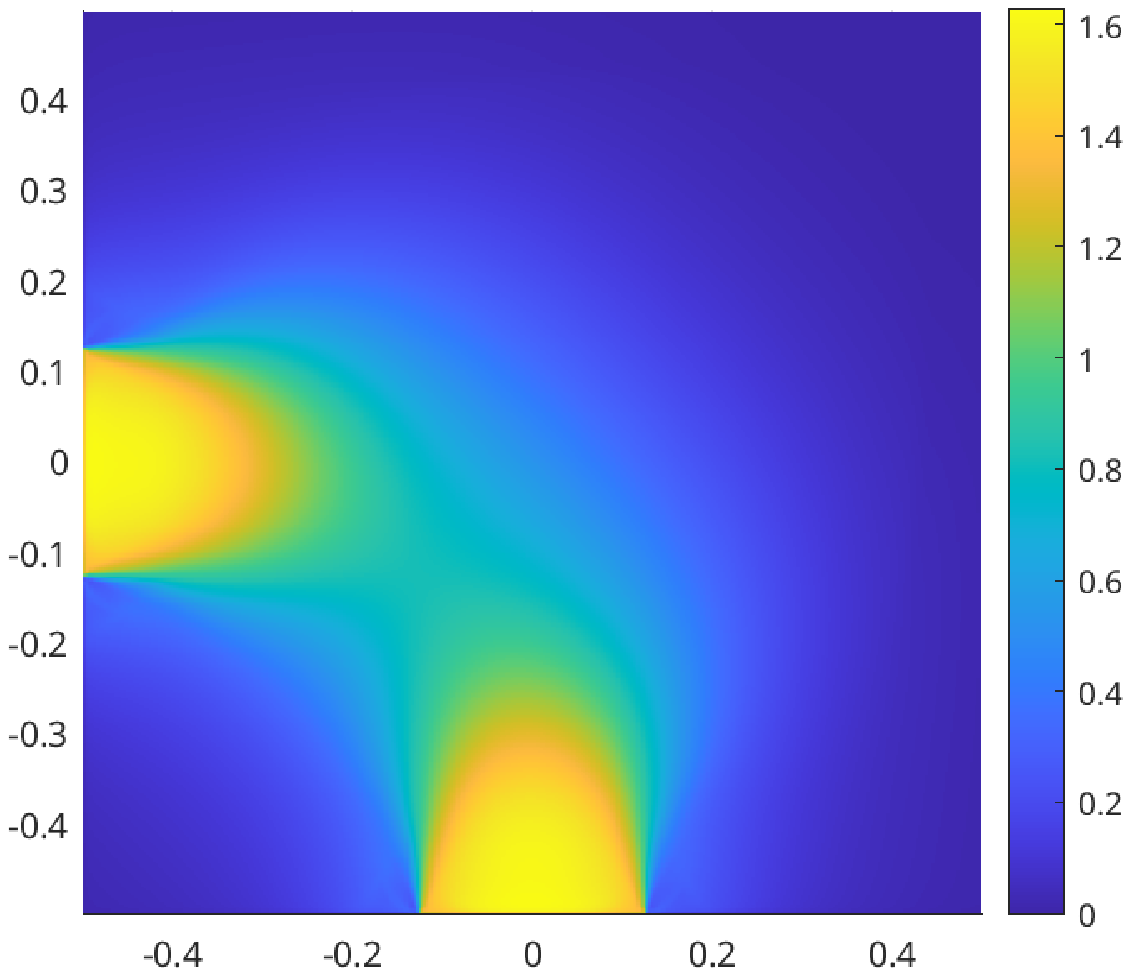}
  } \\
  \subfloat[$N=5$, $t = 0.3$]{%
    \includegraphics[width=.33\textwidth, bb=35 10 386 316, clip]{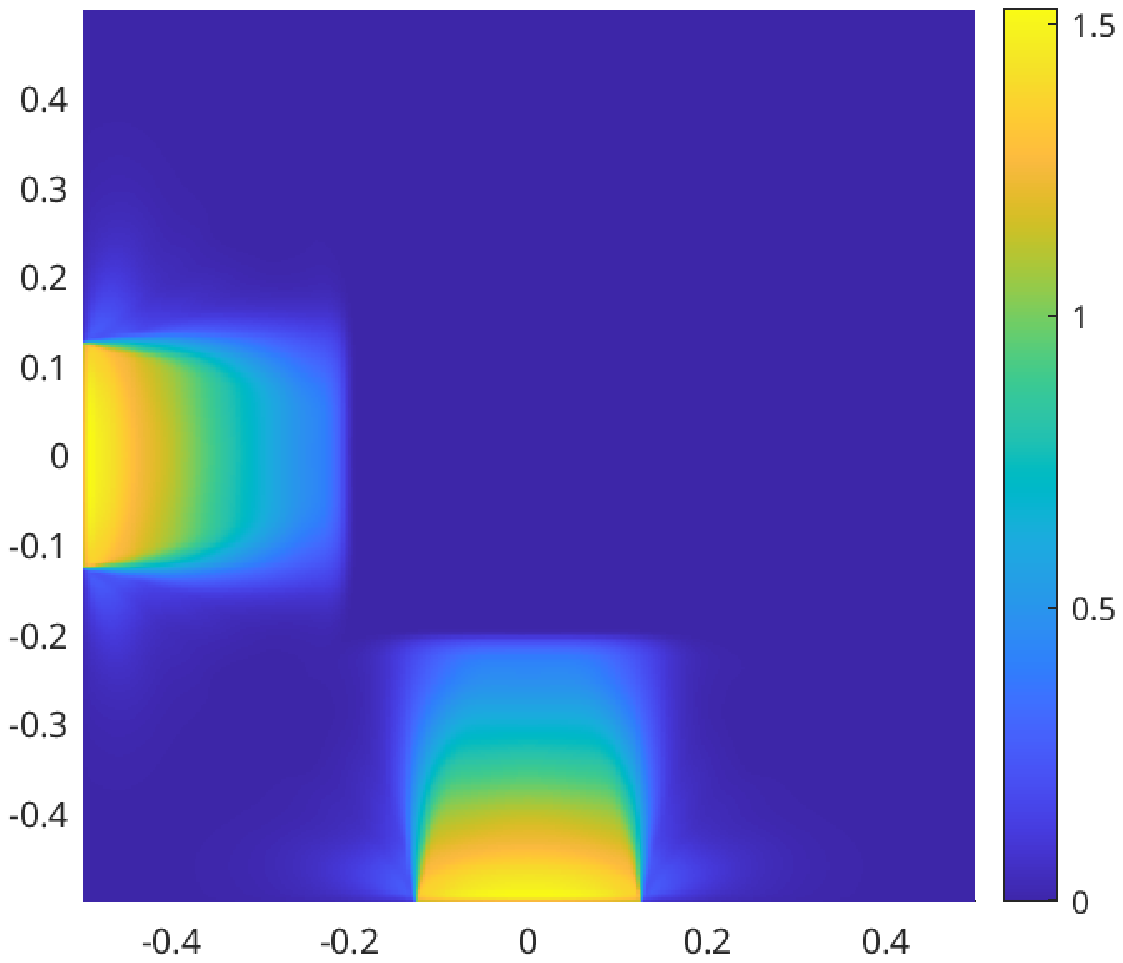}
  }
  \subfloat[$N=5$, $t = 0.6$]{%
    \includegraphics[width=.33\textwidth, bb=35 10 386 316, clip]{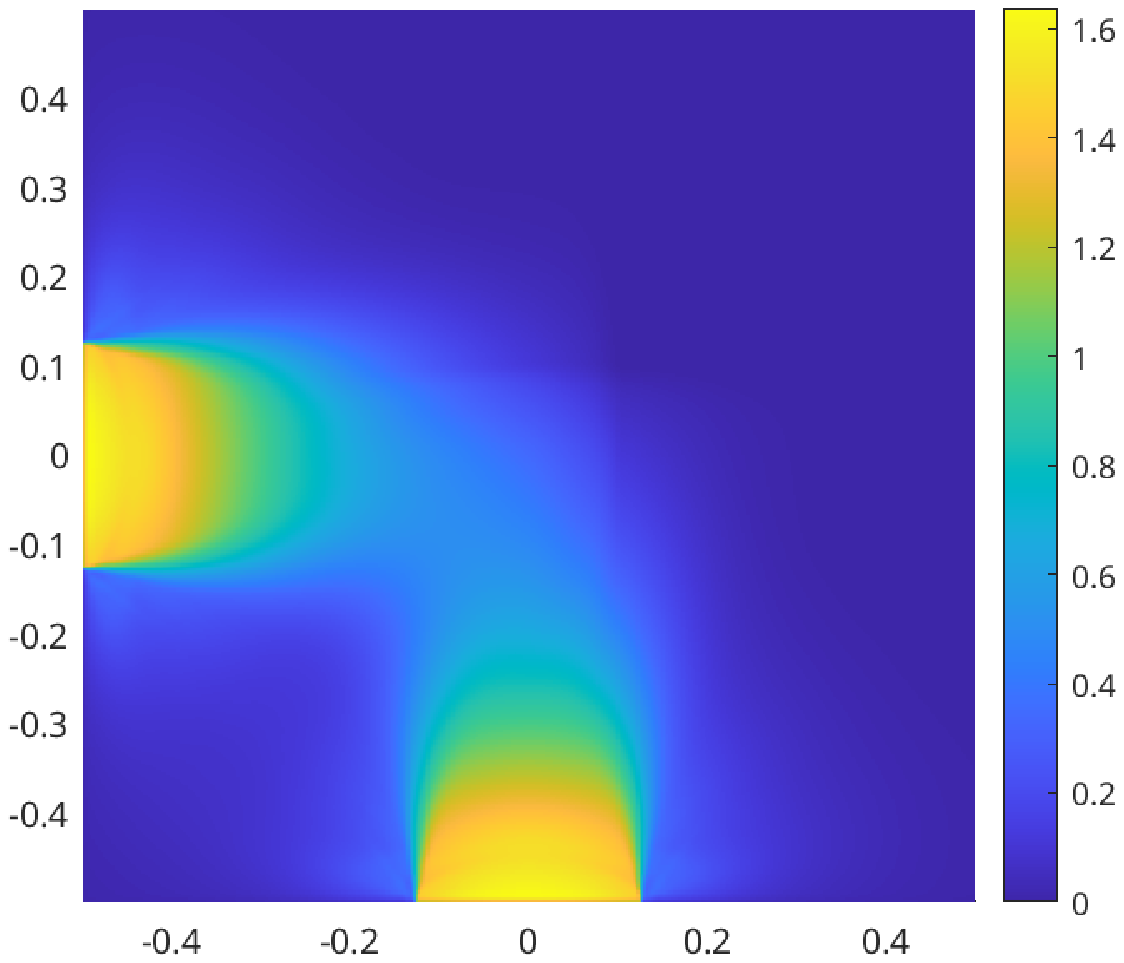}
  }
  \subfloat[$N=5$, $t = 1$]{%
    \includegraphics[width=.33\textwidth, bb=35 10 386 316, clip]{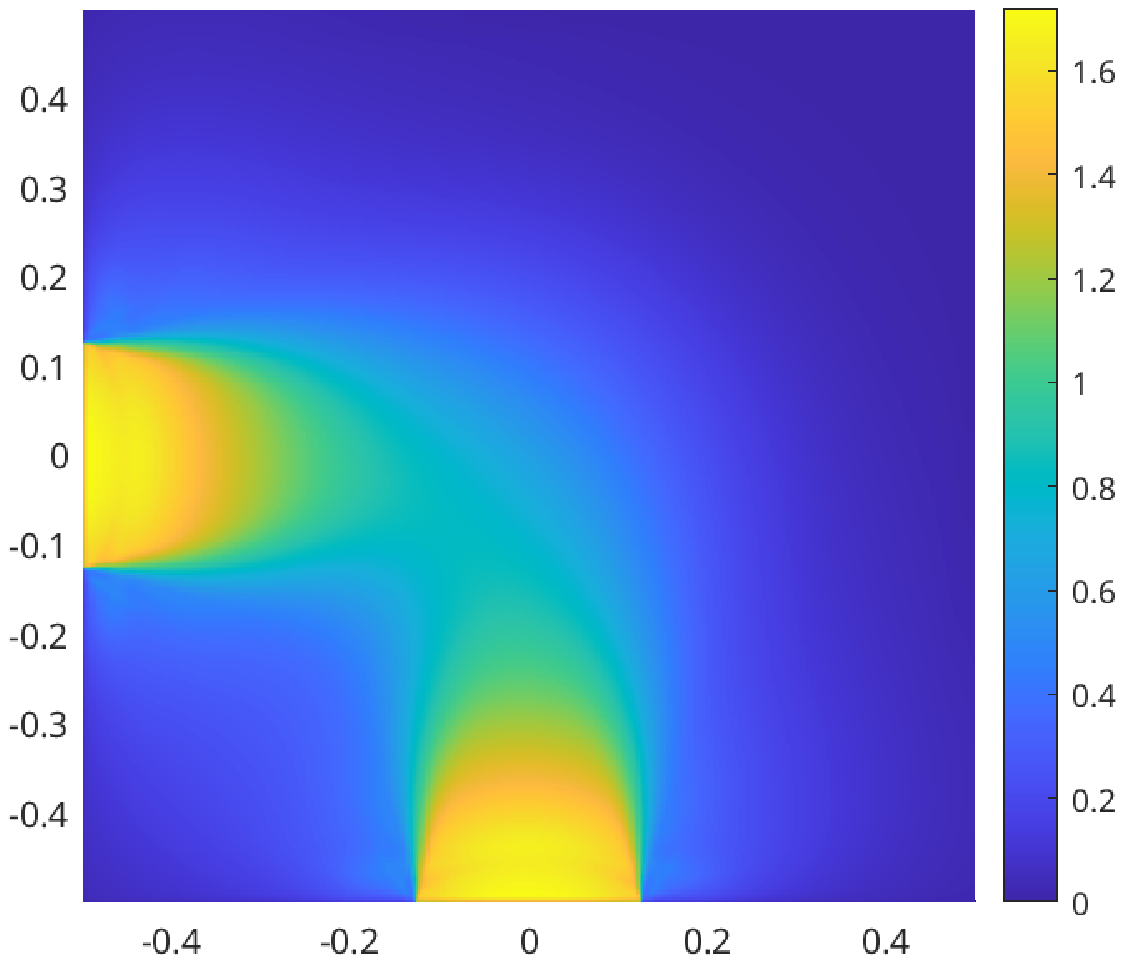}
  } \\
  \subfloat[$N=7$, $t = 0.3$]{%
    \includegraphics[width=.33\textwidth, bb=35 10 386 316, clip]{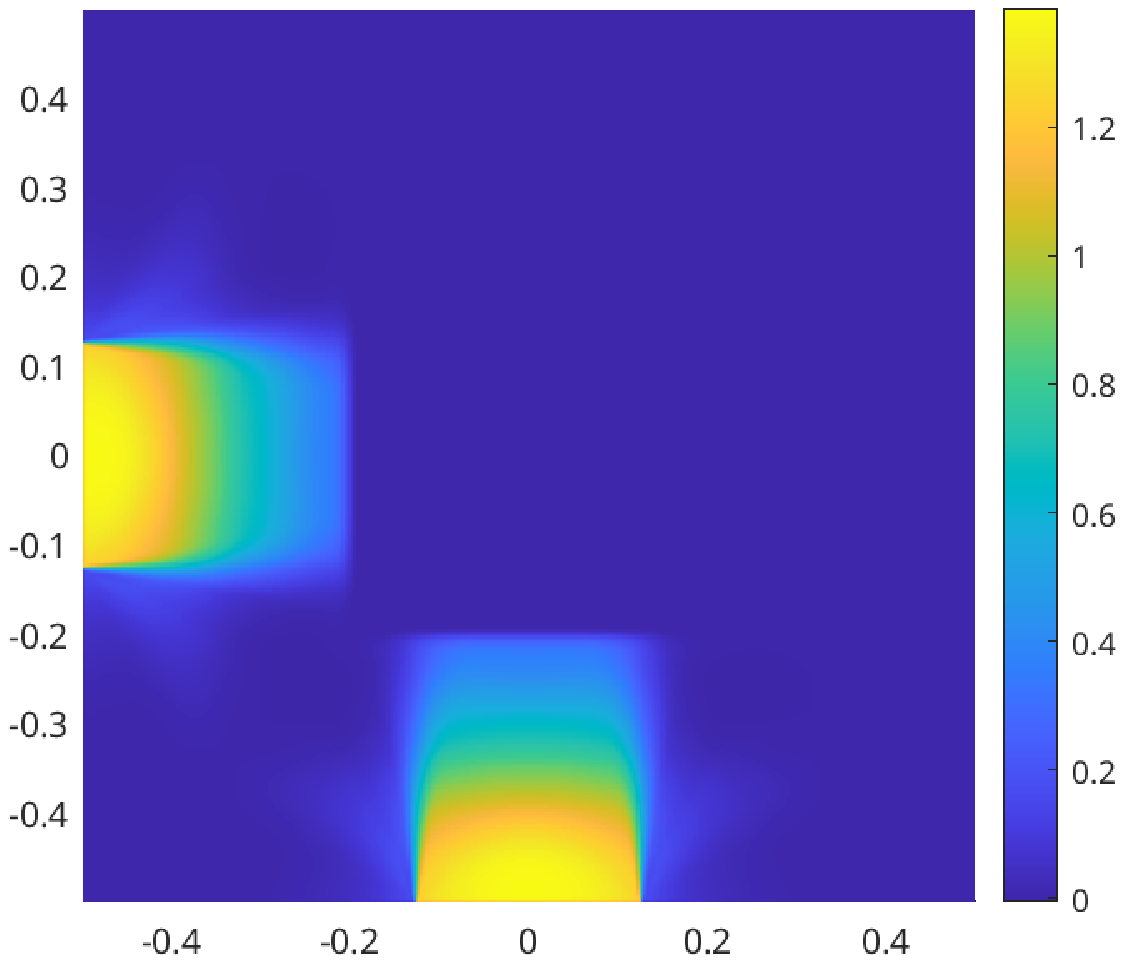}
  }
  \subfloat[$N=7$, $t = 0.6$]{%
    \includegraphics[width=.33\textwidth, bb=35 10 386 316, clip]{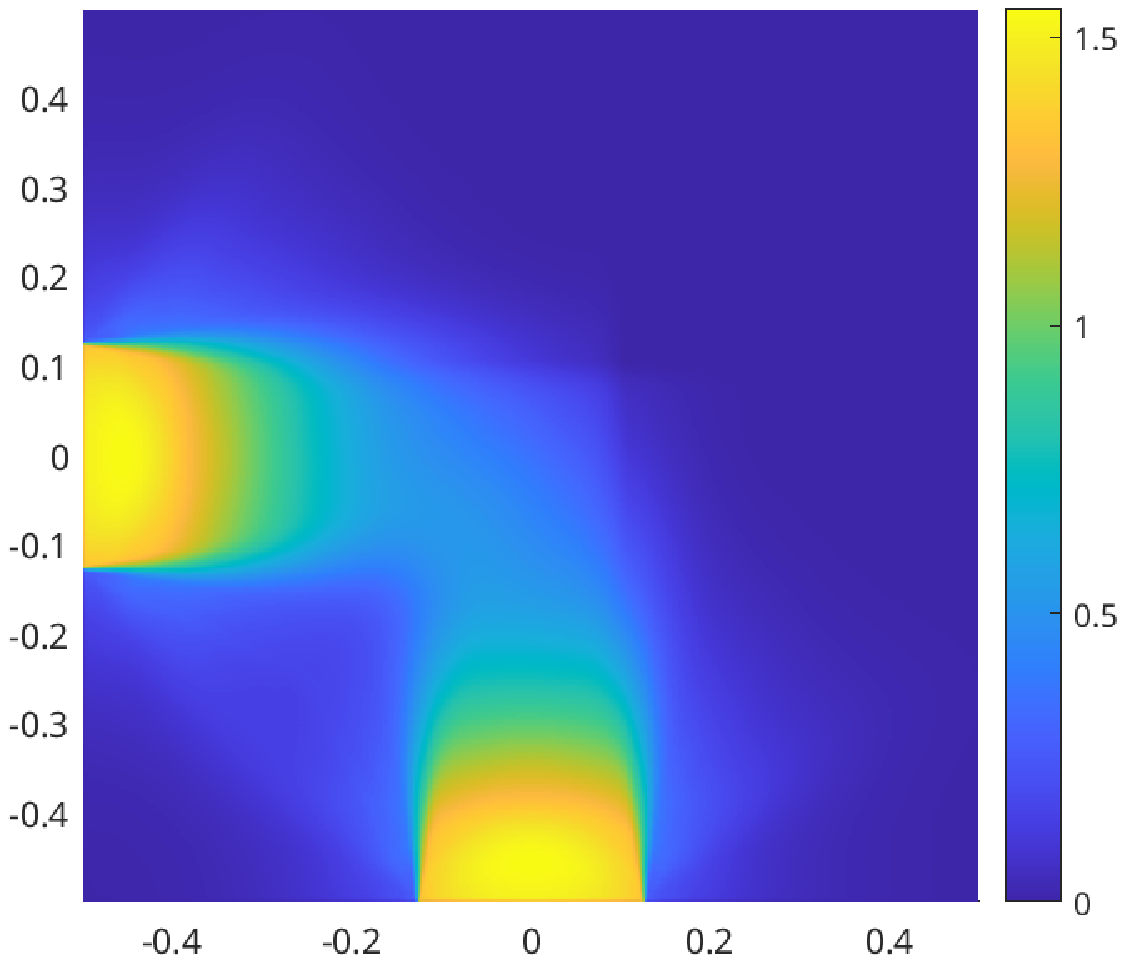}
  }
  \subfloat[$N=7$, $t = 1$]{%
    \includegraphics[width=.33\textwidth, bb=35 10 386 316, clip]{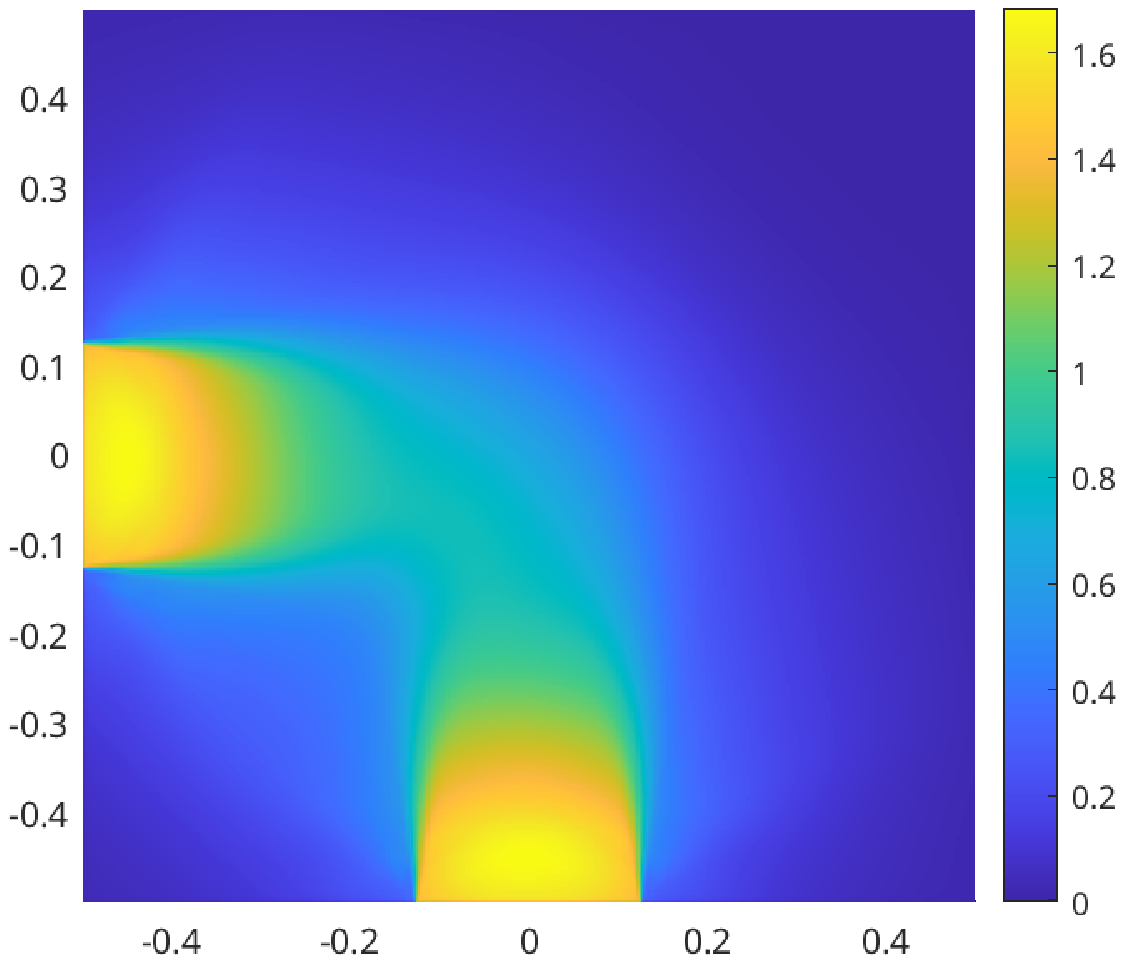}
  } \\
  \subfloat[$N=9$, $t = 0.3$]{%
    \includegraphics[width=.33\textwidth, bb=35 10 386 316, clip]{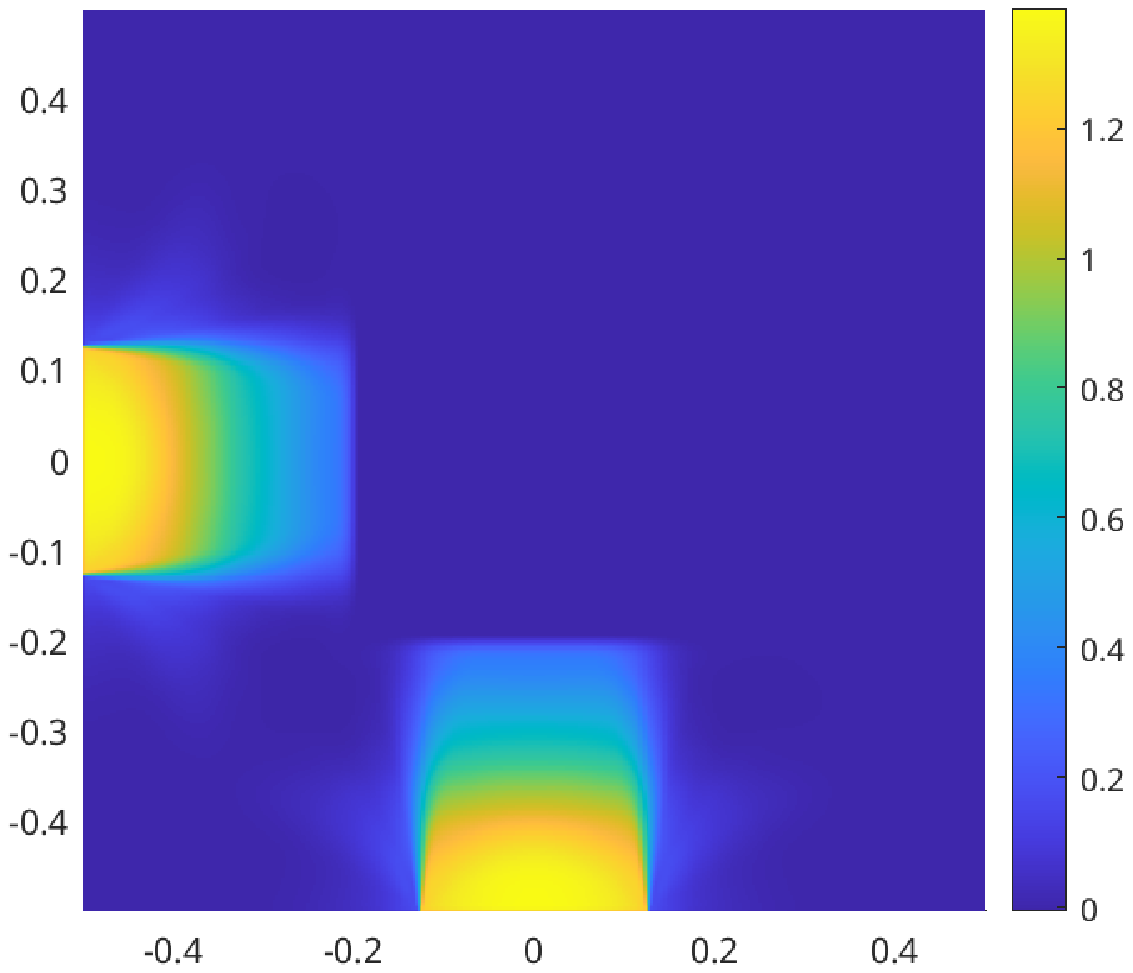}
  }
  \subfloat[$N=9$, $t = 0.6$]{%
    \includegraphics[width=.33\textwidth, bb=35 10 386 316, clip]{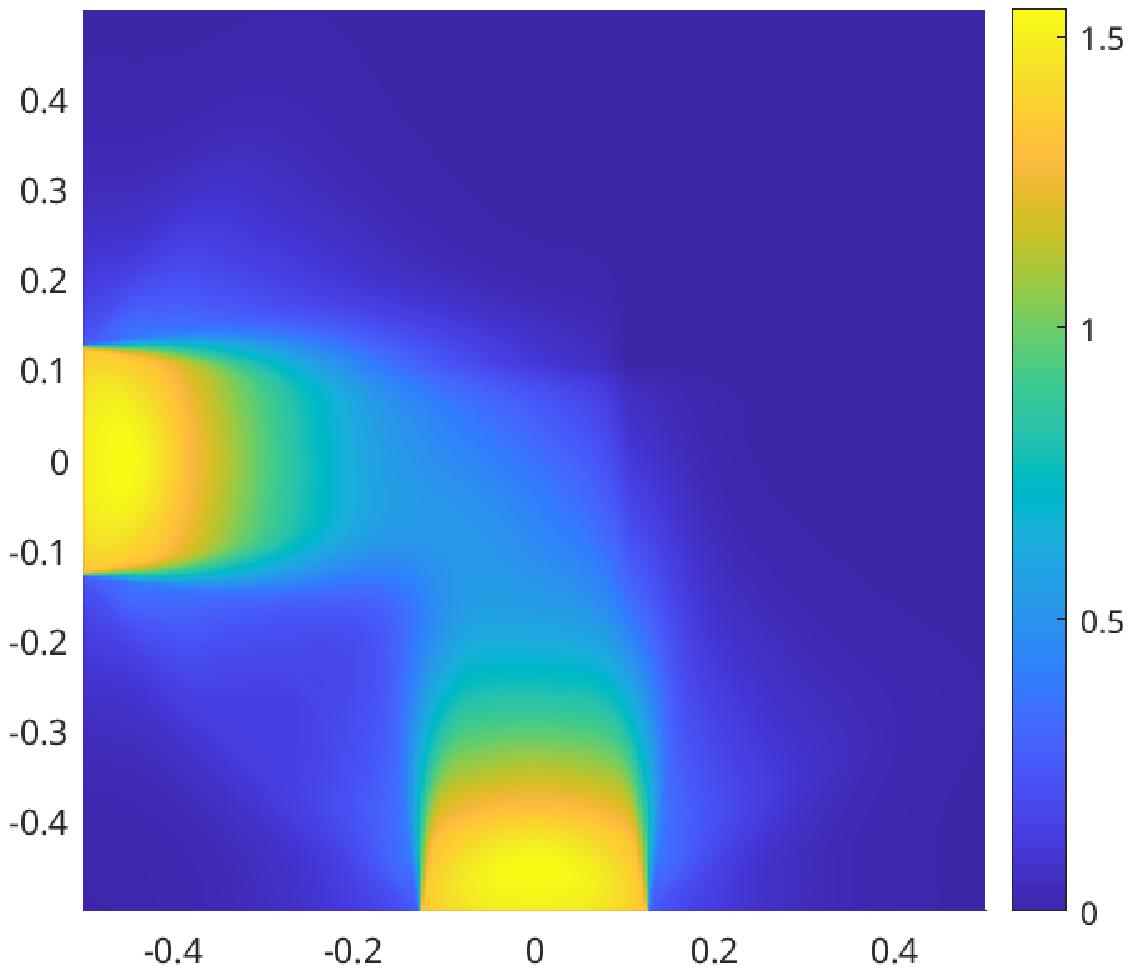}
  }
  \subfloat[$N=9$, $t = 1$]{%
    \includegraphics[width=.33\textwidth, bb=35 10 386 316, clip]{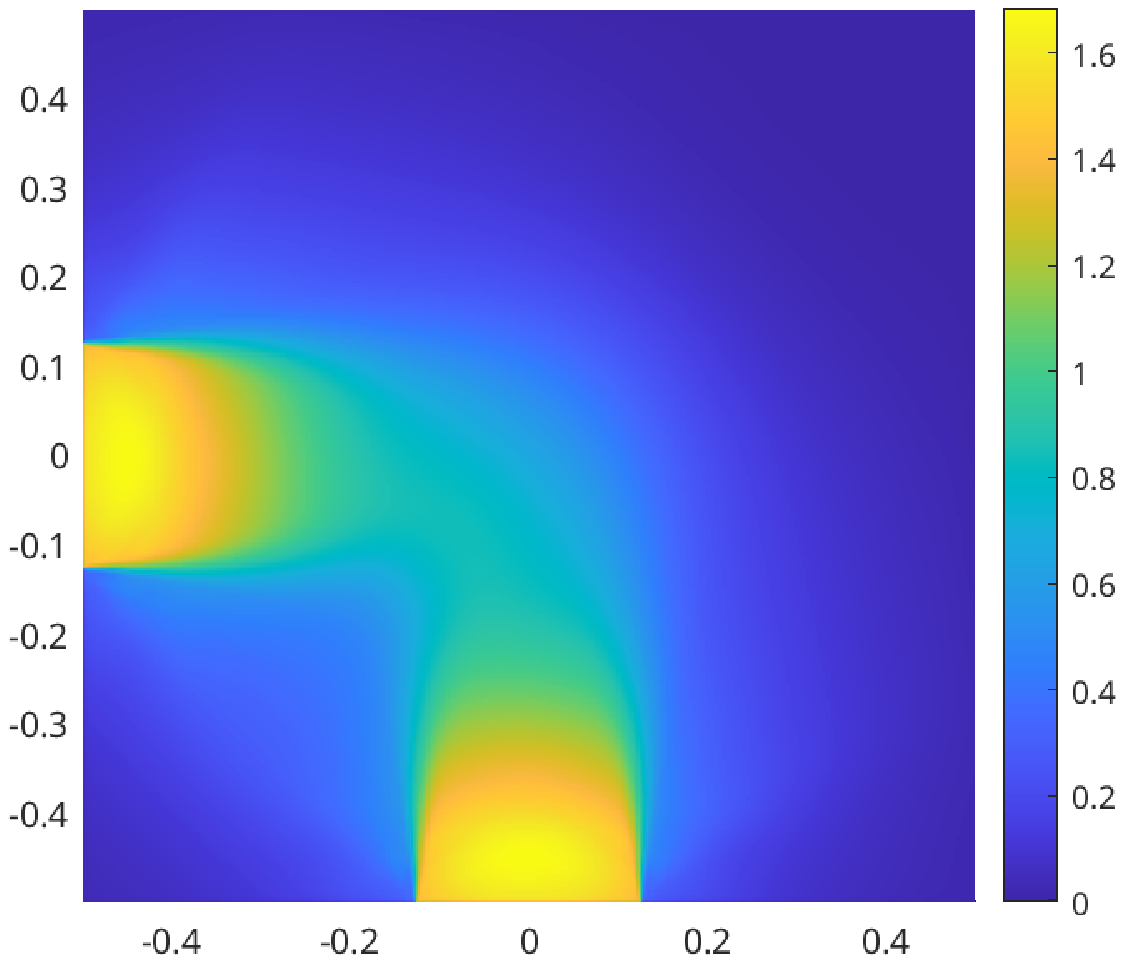}
  }
  \caption{Solution of the two-beam problem with scattering for the $\beta_{N,5}$ models}
  \label{fig:two_beam_scattering}
\end{figure}

\section{Conclusion and future works}
We have derived entropic moment equations for the radiative transfer equation using $\varphi$-divergences to define the entropy. The new moment equations, named as $\beta_{N,K}$ models, can be considered as interpolations between the $P_N$ models ($K = 1$) and the $M_N$ models ($K = \infty$). This new class of models preserves nearly all the fundamental properties of the radiative transfer equation, including conservation laws, rotational invariance and entropy dissipation. Compared with $P_N$ models, the $\beta_{N,K}$ models with $K > 1$ can better capture singular intensity functions such as beams, and the corresponding moment inversion problems of the $\beta_{N,K}$ models are easier to solve compared with the $M_N$ models. Our work has added a variety of possibilities to the family of moment equations for the radiative transfer equation, allowing considerably more options in applications.

Further improvements of the $\beta_{N,K}$ models are to be studied in our future works. In the current paper, all integrals in the moment inversion problem are computed exactly, requiring a large number of quadrature points when $N$ and $K$ are large. Better integration rules may be applied to reduce the computational cost. Additionally, to better simulate the line source problem, one can consider adding filters to the $\beta_{N,K}$ models to get smoother results.

\section*{Acknowledgements}
Zhenning Cai was supported by the Academic Research Fund of the Ministry of Education of Singapore under grant No. A-0004592-00-00. 
  
\bibliographystyle{amsplain}
\bibliography{PhiRTE}

\providecommand{\bysame}{\leavevmode\hbox to3em{\hrulefill}\thinspace}
\providecommand{\MR}{\relax\ifhmode\unskip\space\fi MR }
\providecommand{\MRhref}[2]{%
  \href{http://www.ams.org/mathscinet-getitem?mr=#1}{#2}
}
\providecommand{\href}[2]{#2}
\begin{thebibliography}{10}

\bibitem{Abdelmalik_thesis}
M.~Abdelmalik, \emph{Adaptive algorithms for optimal multiscale model
  hierarchies of the {Boltzmann} equation: {Galerkin} methods for kinetic
  theory}, Ph.D. thesis, TU Eindhoven, 2017.

\bibitem{abdelmalik}
M.~Abdelmalik and H.~van Brummelen, \emph{Moment closure approximations of the
  boltzmann equation based on $\varphi$-divergences}, J. Stat. Phys. (2016),
  no.~164, 77--104.

\bibitem{Hauck3}
G.~W. Alldredge, C.~D. Hauck, D.~P. O'Leary, and A.~L. Tits, \emph{Adaptive
  change of basis in entropy-based moment closures for linear kinetic
  equations}, J. Comp. Phys. \textbf{74} (2014), no.~4, 489--508.

\bibitem{Hauck2}
G.~W. Alldredge, C.~D. Hauck, and A.~L. Tits, \emph{High-order entropy-based
  closures for linear transport in slab geometry {II}: {A} comutational study
  of the optimization problem}, SIAM J. Sci. Comput. \textbf{34} (2012), no.~4,
  361--391.

\bibitem{Barcelo2021Fourier}
J.~A. Barcel{\'o}, M.~Folch-Gabayet, T.~Luque, S.~Pérez-Esteva, and M.~C.
  Vilela, \emph{The {F}ourier extension operator of distributions in {S}obolev
  spaces of the sphere and the {H}elmholtz equation}, Proceedings of the Royal
  Society of Edinburgh: Section A Mathematics \textbf{151} (2021), no.~6,
  1768–1789.

\bibitem{astro2}
H.~Bloch, P.~Tremblin, M.~González, T.~Padioleau, and E.~Audit, \emph{A
  high-performance and portable asymptotic preserving radiation hydrodynamics
  code with the {$M_1$} model}, Astronomy \& astrophysics \textbf{646} (2021),
  1--17.

\bibitem{Brouwer1912}
L.~E.~J. Brouwer, \emph{Beweis der invarianz des {$n$}-dimensionalen gebiets},
  Mathematische Annalen \textbf{71} (1911), 305--313.

\bibitem{Camminady2019}
T.~Camminady, M.~Frank, K.~K{\"u}pper, and J.~Kusch, \emph{Ray effect
  mitigation for the discrete ordinates method through quadrature rotation}, J.
  Comput. Phys. \textbf{382} (2019), 105--123.

\bibitem{spectral_meth2}
C.~Canuto, M.~Y. Hussaini, A.~Quarteroni, and T.~A. Zang, \emph{Spectral
  methods: Fundamentals in single domains}, Springer-Verlag, 2006.

\bibitem{csiszar}
I.~Csisz{\'a}r, \emph{A class of measures of informativity of observation
  channels}, Periodica Mathematica Hungarica \textbf{2} (1972), 191--213.

\bibitem{Dautray-Lions}
R.~Dautray and J.-L. Lions, \emph{Mathematical analysis and numerical methods
  for science and technology: Volume 6, evolution problems {II}}, Springer,
  2000.

\bibitem{dreyer1987}
W.~Dreyer, \emph{Maximisation of the entropy in non-equilibrium}, J. Phys. A:
  Math. Gen. \textbf{20} (1987), 6505--6517.

\bibitem{dubroca_feugeas}
B.~Dubroca and J.-L. Feugeas, \emph{Entropic moment closure hierarchy for the
  radiative transfer equation}, C. R. Acad. Sci. Paris Ser. I \textbf{329}
  (1999), 915--920.

\bibitem{duclous_thesis}
R.~Duclous, \emph{Modelling and numerical simulation of the multi-scale kinetic
  electron transport}, Ph.D. thesis, Universit\'e de Bordeaux, 2009.

\bibitem{Ganapol1999homogeneous}
Barray~D Ganapol, Randall~S Baker, Jon~A Dahl, and Raymond~E Alcouffe,
  \emph{Homogeneous infinite media time-dependent analytical benchmarks}, Tech.
  report, Los Alamos National Laboratory, 2001.

\bibitem{Garrett2013comparison}
C.~Krœistopher Garrett and Cory~D. Hauck, \emph{A comparison of moment
  closures for linear kinetic transport equations: The line source benchmark},
  Transport Theory and Statistical Physics \textbf{42} (2013), no.~6--7,
  203--235.

\bibitem{M1_relativist}
T.~Hanawa and E.~Audit, \emph{Reformulation of the {$M_1$} model of radiative
  transfer}, J. Quant. Spectros. Radiat. Transfer (2014), no.~145, 9--16.

\bibitem{Hauck}
C.~D. Hauck, \emph{High-order entropy-based closures for linear transport in
  slab geometry}, Commun. Math. Sci \textbf{9} (2011), no.~1, 187--205.

\bibitem{hauck_lev_tits}
C.~D. Hauck, C.~D. Levermore, and A.~L. Tits, \emph{Convex duality and
  entropy-based moment closures: {C}haracterizing degenerate densities}, SIAM
  J. Control Optim. \textbf{47} (2008), no.~4, 1977--2015.

\bibitem{spectral_meth1}
J.~S. Hesthaven, S.~Gottlieb, and D.~Gottlieb, \emph{Spectral methods for
  time-dependent problems}, Cambridge, 2009.

\bibitem{junk}
M.~Junk, \emph{Maximum entropy for reduced moment problems}, Math. Mod. Meth.
  in Appl. Sci. \textbf{10} (1998), no.~1001--1028, 2000.

\bibitem{Kawashima}
S.~Kawashima and W.-A. Yong, \emph{Dissipative structure and entropy for
  hyperbolic systems of balance laws}, Arch. Rational Mech. Anal. \textbf{174}
  (2004), 345--364.

\bibitem{Kershaw}
D.~Kershaw, \emph{Flux limiting nature's own way}, Tech. report, Lawrence
  Livermore Laboratory, 1976.

\bibitem{kuepper_thesis}
K.~K{\"u}pper, \emph{Models, numerical methods, and uncertainty quantification
  for radiation therapy}, Ph.D. thesis, RWTH Aachen University, 2016.

\bibitem{Laiu2016}
M.~Paul Laiu, C.~D. Hauck, R.~G. McClarren, D.~P. O'Leary, and A.~L. Tits,
  \emph{Positive filtered {$P_N$} moment closures for linear kinetic
  equations}, SIAM J. Numer. Anal. \textbf{54} (2016), no.~6, 3214--3238.

\bibitem{Lasserre_book}
J.-B. Lasserre, \emph{Moment, positive polynomials, and their applications},
  Imperial college press, 2009.

\bibitem{Lebedev1976quadratures}
V.I. Lebedev, \emph{Quadratures on a sphere}, USSR Comput. Math. Math. Phys.
  \textbf{16} (1976), no.~2, 10--24.

\bibitem{Lebedev1999quadrature}
V.I. Lebedev and D.N. Laikov, \emph{A quadrature formula for the sphere of the
  131st algebraic order of accuracy}, Doklady Mathematics \textbf{59} (1999),
  no.~3, 477--481.

\bibitem{levermore_eddington}
C.~D. Leveremore, \emph{Relating {Eddington} factors to flux limiters}, J.
  Quant. Spectros. Radiat. Transfer \textbf{31} (1984), 149--160.

\bibitem{levermore}
C.~D. Levermore, \emph{Moment closure hierarchies for kinetic theories}, J.
  Stat. Phys. \textbf{83} (1996), no.~5--6, 1021--1065.

\bibitem{Li_B2}
R.~Li and W.~Li, \emph{{3D $B_2$} model for radiative transfer equation}, Int.
  J. Numer. Anal. Modeling \textbf{17} (2020), no.~1, 118--150.

\bibitem{astro1}
E.~Audit M.~González and P.~Huynh, \emph{Heracles: a three-dimensional
  radiation hydrodynamics code}, Astronomy \& astrophysics \textbf{64} (2007),
  429--435.

\bibitem{mihalas_book}
D.~Mihalas and B.~R.~W. Mihalas, \emph{Foundations of radiation hydrodynamics},
  Oxford, 1983.

\bibitem{Minerbo}
G.~N. Minerbo, \emph{Maximum entropy {Eddington} factors}, J. Quant. Spectros.
  Radiat. Transfer \textbf{20} (1978), 541--545.

\bibitem{Monreal_thesis}
P.~Monreal, \emph{Moment realizability and {Kershaw} closures in radiative
  transfer}, Ph.D. thesis, RWTH Aachen University, 2012.

\bibitem{page_thesis}
J.~Page, \emph{Développement et validation de l'application de la force de
  lorentz dans le modèle aux moments entropiques m1. Étude de l'effet du
  champ magnétique sur le dépôt de dose en radiothérapie externe}, Ph.D.
  thesis, Universit\'e de Bordeaux, 2018.

\bibitem{pichard_thesis}
T.~Pichard, \emph{Mathematical modeling of dose deposition in photontherapy and
  protontherapy}, Ph.D. thesis, Universit\'e de Bordeaux, 2016.

\bibitem{PiN}
T.~Pichard, \emph{A moment closure based on a projection on the boundary of the
  realizability domain: 1d case}, Kin. rel. models (2020), no.~13, 1243--1280.

\bibitem{pichard_M2}
T.~Pichard, G.~W. Alldredge, S.~Brull, B.~Dubroca, and M.~Frank, \emph{An
  approximation of the {$M_2$} closure: application to radiotherapy dose
  simulation}, J. Sci. Comput. (2017), no.~71, 71--108.

\bibitem{pomraning_book}
G.~C. Pomraning, \emph{Equations of radiation hydrodynamics}, Pergamon, 1973.

\bibitem{groth_M2}
J.~A.~R. Sarr and C.~P.~T. Groth, \emph{A second-order maximum-entropy inspired
  interpolative closure for radiative heat transfer in gray participating
  media}, J. Quant. Spectros. Radiat. Transfer (2020), 107238.

\bibitem{Schmuedgen_book}
K.~Schmuedgen, \emph{The moment problem}, Springer, 2017.

\bibitem{Schneider_KN}
F.~Schneider, \emph{Kershaw closures for linear transport equations in slab
  geometry i: model derivation}, J. Comput. Phys. (2016), no.~322, 905--919.

\bibitem{schneider}
J.~Schneider, \emph{Entropic approximation in kinetic theory}, ESAIM: M2AN
  \textbf{38} (2004), no.~3, 541--561.

\end{thebibliography}

\end{document}